\newcommand{\takeout}[1]{\empty}
\title{Algebraic Language Theory with Effects}
\newcommand{\appendixproof}[2][Proof of]{%
  \subsection*{#1\ \autoref{#2}}%
  \addcontentsline{toc}{subsection}{#1\ \autoref{#2}}%
}
\newcommand{\detailsfor}[1]{%
  \appendixproof[Details for]{#1}%
}
\author{Fabian Lenke}{Friedrich-Alexander-Universität Erlangen-Nürnberg, Germany\and\url{https://www8.cs.fau.de/people/henning-urbat/}}{henning.urbat@fau.de}{https://orcid.org/0000-0001-5890-9485}{}
\author{Stefan Milius}{Friedrich-Alexander-Universität Erlangen-Nürnberg, Germany\and\url{https://www.stefan-milius.eu}}{henning.urbat@fau.de}{https://orcid.org/0000-0002-2021-1644}{}
\author{Henning Urbat}{Friedrich-Alexander-Universität Erlangen-Nürnberg, Germany\and\url{https://www8.cs.fau.de/people/henning-urbat/}}{henning.urbat@fau.de}{https://orcid.org/0000-0002-3265-7168}{}
\author{Thorsten Wißmann}{Friedrich-Alexander-Universität Erlangen-Nürnberg, Germany\and\url{https://thorsten-wissmann.de}}{thorsten.wissmann@fau.de}{https://orcid.org/0000-0001-8993-6486}{}
\authorrunning{F.\ Lenke, S.\ Milius, H.\ Urbat and T.\ Wißmann}
\keywords{Automaton, Monoid, Monad, Effect, Algebraic language theory}
\def\resettheorembrackets{
\def\theorembracketopen{(}
\def\theorembracketclose{)}
}
\def\@spopargbegintheorem#1#2#3#4#5{\trivlist
      \item[\hskip\labelsep{#4#1\ #2}]{#4{\theorembracketopen}#3{\theorembracketclose}\@thmcounterend\ }#5}
\newcommand{\resetCurThmBraces}{%
  \gdef\curThmBraceOpen{(}%
  \gdef\curThmBraceClose{)}}
\newcommand{\removeThmBraces}{%
  \gdef\curThmBraceOpen{}%
  \gdef\curThmBraceClose{}}
\newenvironment{notheorembrackets}{\removeThmBraces}{\resetCurThmBraces}
\patchcmd{\thmhead}{(#3)}{\curThmBraceOpen #3\curThmBraceClose }{}{}
\theoremstyle{plain}
\newtheorem{lem}[theorem]{Lemma}
\newtheorem{game}[theorem]{Game}
\newtheorem*{openproblem}{Open Problem}
\theoremstyle{definition}
\newtheorem{defn}[theorem]{Definition} 
\newtheorem{expl}[theorem]{Example} 
\newtheorem{rem}[theorem]{Remark} 
\newtheorem{notn}[theorem]{Notation} 
\newtheorem{assumption}[theorem]{Assumption}
\crefname{expl}{Example}{Examples}
\crefname{defn}{Definition}{Definitions}
\newcommand{\finite}{fg-carried\xspace}
\renewcommand{\S}{\mathcal{S}}
\newcommand{\C}{\ensuremath{\mathcal{C}}}
\newcommand{\parfun}{\rightharpoonup}
\newcommand{\T}{\ensuremath{\mathbb{T}}\xspace}
\newcommand{\Id}{\ensuremath{\mathsf{Id}}}
\newcommand{\nil}{\ensuremath{\varepsilon}}
\newcommand{\seq}{\ensuremath{\mathbin{\text{\upshape;}}}}
\DeclareMathOperator{\Syn}{Syn}
\newcommand{\id}{\ensuremath{\mathsf{id}}}
\newcommand{\lst}{\mathsf{ls}}
\newcommand{\rst}{\mathsf{rs}}
\newcommand{\M}{\ensuremath{\mathcal{M}}\xspace}
\newcommand{\D}{\ensuremath{\mathcal{D}}}
\newcommand{\Kl}{\ensuremath{\mathcal{K}\!\ell}}
\newcommand{\Alg}{\ensuremath{\mathsf{Alg}}}
\newcommand{\N}{\ensuremath{\mathds{N}}}
\newcommand{\R}{\ensuremath{\mathds{R}}}
\newcommand{\Set}{\ensuremath{\mathsf{Set}}\xspace}
\newcommand{\epito}{\ensuremath{\twoheadrightarrow}}
\newcommand{\xto}{\xrightarrow}
\newcommand{\mult}{\mathbin{\boldsymbol{\cdot}}}
\def\colim{\qopname\relax m{colim}}
\newcommand{\Pfp}{\ensuremath{\mathcal{P}_{\mathrm{f}}^{+}}}
\newcommand{\Cp}{\ensuremath{\mathcal{C}}}
\newcommand{\xiCp}{\ensuremath{h^\#}}
\newcommand{\fp}{\mathsf{fp}}
\newcommand{\Cfp}{\C_\fp}
\newcommand{\kseq}{\ensuremath{\mathrel{\fatsemi}}}
\newcommand{\A}{\ensuremath{\mathcal{A}}\xspace}
\newcommand{\set}[2][]{%
  \ifthenelse{\equal{#2}{}}{%
    \ensuremath{{#1\emptyset}}%
  }{%
    \ensuremath{{#1\{#2#1\}}}%
  }%
}
\newcommand{\rn}[2]{
    \tikz[remember picture,baseline=(#1.base)]\node [inner sep=0] (#1) {$#2$};%
}
\newsavebox{\kleisliarrow}
\savebox{\kleisliarrow}{%
\begin{tikzpicture}[
      baseline=(arrow.base),
      inner sep=0mm,
      outer sep=0mm,
      ]
      \node[draw=none,
      anchor=base,
      overlay,
      inner sep=0,
      outer sep=0,
      minimum height=1em,
      ] (arrow) {$\phantom{\longrightarrow}$};

      \coordinate (circle pos) at
        ($ (arrow.south) !.68! (arrow.north)$);
      \begin{scope}[even odd rule,overlay]
        \clip  (circle pos) circle (0.17em)
           (arrow.north west) rectangle (arrow.south east);
      \node[draw=none,
      anchor=base,
      inner sep=0,
      outer sep=0,
      ] (arrow) {$\longrightarrow$};
    \end{scope}
    \draw[fill=none,overlay]
      ($ (arrow.south) !.68! (arrow.north)$)
      circle (0.15em);
    \draw[use as bounding box,draw=none] (arrow.north west) rectangle (arrow.south east);
  \end{tikzpicture}}
\newcommand{\kleislito}{\ensuremath{\mathbin{\usebox{\kleisliarrow}}}}
\newsavebox{\kleislidot}
\savebox{\kleislidot}{%
\begin{tikzpicture}[baseline=0pt,outer sep=0pt]
    \draw[fill=white,solid] (0,0) circle (1.5pt);
  \end{tikzpicture}}
\tikzstyle{kleisli}=[
\tikzstyle{shiftarr}=[
\tikzstyle{dfa}=[
\tikzset{
  loop at/.style={
    loop,
    out=#1-30,
    in=#1+30,
    looseness=6,
    every node/.append style={
      anchor=#1-180,
    },
  },
}
\newcommand{\cat}[1]{\ensuremath{\mathcal{#1}}\xspace}
\newcommand{\Conv}{\ensuremath{\mathsf{Conv}}\xspace}
\newcommand{\conv}{\Conv}
\newcommand{\Mon}{\mathsf{Mon}}
\def\colim{\qopname\relax m{colim}}
\DeclareMathOperator{\syn}{Syn}
\DeclareMathOperator{\ma}{Min}
\DeclareMathOperator{\tr}{Tr}
\DeclareMathOperator{\ar}{ar}
\newcommand{\V}{\mathcal{V}}
\newcommand{\incl}{\hookrightarrow}
\newcommand{\epi}{\twoheadrightarrow}
\DeclareMathOperator{\ev}{\mathrm{ev}}
\newcommand{\tensor}{\mathbin{\otimes}}
\DeclareMathOperator{\supp}{supp}
\DeclareMathOperator{\rng}{rng}
  \newcommand{\drawrule}[4]{ 
    \begin{tikzpicture}[overlay,remember picture]
        \draw [-, dotted] ({#1}) -- ({#2});
        \draw [-, dotted] ({#3}) -- ({#4});
    \end{tikzpicture}
  }
\numberwithin{equation}{section}
\begin{document}

\maketitle
\begin{abstract}
  Regular languages -- the languages accepted by deterministic finite automata -- are known to be precisely the languages recognized by finite monoids. This characterization is the origin of algebraic language theory. In this paper, we generalize the correspondence between automata and monoids to automata with generic computational effects given by a monad, providing the foundations of an \emph{effectful} algebraic language theory. We show that, under suitable conditions on the monad, a language is computable by an effectful automaton precisely when it is recognizable by~(1) an effectful monoid morphism into an effect-free finite monoid,
  and~(2) a monoid morphism into a monad-monoid bialgebra whose carrier is a finitely generated algebra for the monad, the former mode of recognition being conceptually completely new. Our prime application is a novel algebraic approach to languages computed by probabilistic finite automata. Additionally, we derive new algebraic characterizations for nondeterministic probabilistic finite automata and for weighted finite automata over unrestricted semirings, generalizing previous results on weighted algebraic recognition over commutative rings.
\end{abstract}

\section{Introduction}

The algebraic approach to finite automata rests on the observation that regular languages (the languages accepted by finite automata) coincide with the languages recognized by \emph{finite monoids}. This provides the basis for investigating complex properties of regular languages with semigroup-theoretic methods. A prime example is the celebrated result by McNaughton, Papert, and Schützenberger~\cite{sch65,mp71} that a regular language is definable in first-order logic over finite words iff its syntactic monoid (the minimal monoid recognizing the language) is aperiodic. Since the latter property is easy to verify, this implies decidability of first-order definability. Similar characterizations and decidability results are known for numerous subclasses of regular languages~\cite{pin86,dgk08,pin15}. The correspondence between automata and monoids has been generalized beyond regular languages, for example to $\omega$-regular languages~\cite{wilke91,pp04}, languages of words over linear orders~\cite{Bedon2005}, data languages~\cite{bs20}, tree languages~\cite{almeida90,SalehiSteinby07,bw08,blumensath20}, cost functions~\cite{colcombet09}, and weighted languages over commutative rings~\cite{reu80}.

In this paper, we show that the equivalence between automata and monoids can be established at the more general level of automata with generic computational effects given by a monad $\T$~\cite{sbbr13,gms14}. This class of automata forms a common generalization of a wide range of automata models such as  probabilistic automata~\cite{Rabin63}, nondeterministic probabilistic automata~\cite{hhos18}, weighted automata~\cite{dkv09}, and even pushdown automata and Turing machines~\cite{gms14}. We introduce two modes of algebraic recognition for $\T$-effectful languages, both of which are natural effectful generalizations of the classical recognition by finite monoids: recognition by~(1) \emph{$\T$-effectful monoid morphisms} into finite effect-free monoids, 
and~(2) ordinary monoid morphisms into finitely generated $\T$-algebras equipped with an additional monoid structure (plus optional compatibility conditions).
We subsequently identify suitable conditions on the monad $\T$ ensuring that~(1) and~(2) capture precisely the languages computed by finite $\T$-automata; this yields an \emph{effectful automata/monoid correspondence} (\Cref{thm:em-recognition,thm:kleisli-recognition}). Our results fundamentally exploit the double role played by monads in computation, namely as abstractions of both computational effects~\cite{moggi91} and algebraic theories~\cite{manes76}.

The investigation of algebraic recognition at the present level of generality has several benefits. First and foremost, the abstract perspective provided by generic effects naturally motivates and isolates conceptual ideas that would be easily missed for concrete instantiations of the monad $\T$. Notably, recognition mode (1) above is conceptually completely new, and mode (2) generalizes earlier work on algebraic recognition over commutative varieties~\cite{amu18} to arbitrary (non-commutative) effects. In this way, our theory leads to novel algebraic characterizations of several important automata models.

Our prime application is a characterization of languages computed by probabilistic finite automata (PFAs)~\cite{Rabin63}. PFAs extend classical deterministic finite automata
by probabilistic effects, and thus serve as a natural model of state-based computations that involve uncertainty or randomization. These arise in many application domains \cite{paz1971,DBLP:books/daglib/0020348}, as witnessed for instance by the wide range and success of probabilistic model checkers~\cite{KNP11,DBLP:journals/sttt/HenselJKQV22}. On the theoretical side, PFAs share some remarkable similarities with finite automata; in particular, machine-independent characterizations of PFA-computable languages in terms of probabilistic regular expressions~\cite{bgmz12,rs24} and probabilistic monadic second-order logic~\cite{weidner-12} are known. However, the fundamental algebraic perspective on finite automata has thus far withstood a probabilistic generalization. We fill this gap by establishing two different modes of \emph{probabilistic algebraic recognition} (\Cref{thm:pfas-vs-finite-monoids,thm:pfas-vs-fgc-convex-algebras}) for PFA-computable languages which instantiate the two above-mentioned modes, namely~(1) recognition by finite monoids via \emph{probabilistic monoid morphisms}; (2) recognition by \emph{convex monoids} carried by a finitely generated convex set.
The first mode emphasizes the effectful nature of probabilistic languages, while the second one relies on traditional universal algebra. These characterizations may serve as a starting point for the algebraic investigation of PFA-computable languages.

Further notable instances of the general theory include algebraic characterizations of nondeterministic probabilistic automata and weighted automata. For the latter, a weighted automata/monoid correspondence was only known for weights from a \emph{commutative ring}~\cite{reu80}; our version applies to general semirings and thus captures additional  types of weighted automata such as min-plus and max-plus automata~\cite[Ch.~5]{pin21_handbook} and transducers~\cite[Ch.~3]{pin21_handbook}.

\subparagraph{Related Work.}  The use of category theory to unify results for different
automata models has a long tradition\cite{g72,am74,at89}. \T-automata were first studied by Goncharov et al.~\cite{gms14} as an instance of effectful \emph{coalgebras}~\cite{rutten00,sbbr13}; they also fit into the framework of \emph{functor automata} by Colcombet and Petri\c{s}an~\cite{cp19}.
On the algebra side, Boja{\'n}czyk~\cite{Bojan15} used monads to unify notions of algebraic recognition. This abstract perspective
on algebraic language theory has led to a series of works, including a uniform theory of Eilenberg-type
correspondences~\cite{uacm17} and an abstract account of logical definability of languages~\cite{blumensath21}. However, while each of the above works studies either automata-based or algebraic recognition individually, formal connections between both approaches are far less explored at categorical generality. The only
results in this direction appear in the work of Ad\'amek et al.~\cite{amu18} on the automata/monoid correspondence in commutative varieties. Our approach takes the step to non-commutative monads and on the way develops the entirely new concept of effectful language recognition.

\section{Algebraic Recognition of Regular Languages}
\label{sec:regular}
To set the scene for our algebraic approach to effectful languages, we recall the classical
correspondence between finite automata and finite monoids as recognizers for regular languages~\cite{pin15,rs59}. Let us settle some notation used in the sequel:

\begin{notn} Fix a finite alphabet $\Sigma$, and let $\Sigma^*$ denote the set of finite words over $\Sigma$, with empty word $\varepsilon \in \Sigma^{*}$. We put $1 = \set{\ast}$ and $2 = \set{\bot,\top}$. A map $x \colon 1\to X$ is identified with the element $x(\ast) \in X$, which by abuse of notation is also denoted by \(x \in X\). Maps $p\colon X\to 2$ (\emph{predicates}) are identified  with subsets of $X$; in particular, languages are presented as predicates $L\colon \Sigma^*\to 2$. We denote the composite of two maps $f\colon X\to Y$, $g\colon Y\to Z$ by $f\seq g\colon X\to Z$ (note the order!), and the identity map on $X$ by $\id_X\colon X\to X$.  We use $\mapsto$ to define anonymous functions.
  Lastly, $X\to Y$ denotes the set of all maps from $X$ to $Y$.
\end{notn}
\subparagraph{Finite Automata.} A \emph{deterministic finite automaton} (\emph{DFA}) $\A = (Q,i,\delta,o)$ consists of a finite set $Q$ of \emph{states} and maps representing an \emph{initial state}, \emph{transitions}, and \emph{final states}:
\begin{equation*}
  i\colon
  \begin{tikzcd}[cramped]
    1
    \arrow{r}
    & Q,
  \end{tikzcd}
  \qquad
  \delta\colon
  \begin{tikzcd}[cramped]
    Q\times \Sigma
    \arrow{r}
    & Q,
  \end{tikzcd}
  \qquad
  o\colon
  \begin{tikzcd}[cramped]
    Q
    \arrow{r}
    & 2,
  \end{tikzcd}
\end{equation*}

Let $\bar\delta\colon \Sigma\to (Q\to Q)$, $\bar\delta(a) = \delta(-, a)$, denote the curried  form of $\delta$, and define the \emph{iterated transition map} $\bar\delta^* \colon \Sigma^{*} \rightarrow (Q \rightarrow Q)$ and the \emph{language} $L \colon \Sigma^{*} \rightarrow 2$ \emph{computed by \A} by
\begin{equation}
  \bar \delta^{*}(\varepsilon) = \id_{Q}, \quad \bar \delta^{*}(wa) = \bar \delta^{*}(w) \seq \bar \delta^{*}(a) \quad \text{ and } \quad L(w) = i \seq \bar \delta^{*}(w) \seq o \quad \text{ for } w \in \Sigma^{*}.\label{eq:bar-delta}
\end{equation}
%
%
\subparagraph{Monoids.} A \emph{monoid} $(M,\mult ,e)$ is a set $M$ equipped with an
associative multiplication $ M\times M\xto{\mult } M$ and a neutral
element $e\in M$; that is, the equations
$(x\mult y) \mult z = x\mult(y\mult z)$ and $e\mult x=x=x\mult e$
hold for all $x,y,z\in M$.
A map \(h \colon N \rightarrow M\) between monoids \((N, \mult, n)\) and \((M, \mult, e)\) is a \emph{monoid morphism} if $h(n)=e$ and $h(x\mult y)=h(x)\mult h(y)$ for all $x,y\in N$.
\begin{expl}\label{ex:monoids}
\begin{enumerate}[(1)]
\item For every set $X$, the set $X\to X$ of endomaps forms a monoid with multiplication given by composition $\seq$ and neutral element $\id_X\colon X\to X$.
  \item The set $\Sigma^*$ of words, with concatenation as multiplication  and neutral element $\nil$, is the \emph{free monoid}
  on $\Sigma$: for every monoid $(M,\mult,e)$ and every map $h_0\colon \Sigma\to M$, there
  exists a unique monoid morphism $h\colon \Sigma^*\to M$ such that $h_0(a)=h(a)$ for all
  $a\in \Sigma$. The morphism $h$, the \emph{free extension} of~$h_0$, is given by
  $h(a_1\cdots a_n)=h_0(a_1)\mult\cdots\mult h_0(a_n)$ for $a_1,\ldots,a_n\in \Sigma$.  For
  instance, the map $\bar \delta^*\colon \Sigma^*\to (Q\to Q)$ defined in~\eqref{eq:bar-delta}
  is the free extension of $\bar \delta\colon \Sigma\to (Q\to Q)$.
\end{enumerate}
\end{expl}
Monoids enable an algebraic notion of language recognition. A monoid $M$ \emph{recognizes} the language $L\colon \Sigma^*\to 2$ if there exists a monoid morphism $h\colon \Sigma^*\to M$ and a predicate $p\colon M\to 2$ such that $L = h \seq p$.
Monoid recognition captures precisely the regular languages:
\begin{notheorembrackets}
\begin{theorem}[{\cite[Thm.\ 1]{rs59}}]
  For every language $L\colon \Sigma^*\to 2$, there exists a DFA computing~$L$ iff there exists a finite monoid recognizing $L$.
\end{theorem}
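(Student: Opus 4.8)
The plan is to prove both directions of the equivalence by exhibiting explicit constructions, each time using the free-monoid property of $\Sigma^*$ recalled in \Cref{ex:monoids} as the hinge.

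\textbf{From a DFA to a finite monoid.} Suppose $\A = (Q,i,\delta,o)$ is a DFA computing $L$. The natural candidate recognizer is the \emph{transition monoid}: the submonoid $M \subseteq (Q\to Q)$ generated by the maps $\bar\delta(a) = \delta(-,a)$ for $a\in\Sigma$, inside the monoid of endomaps on $Q$ from \Cref{ex:monoids}(1). Since $Q$ is finite, $Q\to Q$ is finite, hence so is $M$. The map $\bar\delta^*\colon \Sigma^*\to (Q\to Q)$ of~\eqref{eq:bar-delta} is precisely the free extension of $\bar\delta$, so it is a monoid morphism and its image lands in $M$; I take $h\colon \Sigma^*\to M$ to be this corestriction. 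To recover $L$ as a predicate on $M$, I would define $p\colon M\to 2$ by $p(f) = (i\seq f\seq o)$, i.e.\ applying the endomap $f$ to the initial state and reading off finality. Then $h\seq p$ sends $w$ to $i\seq\bar\delta^*(w)\seq o = L(w)$ by~\eqref{eq:bar-delta}, so $M$ recognizes $L$.

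\textbf{From a finite monoid to a DFA.} Conversely, suppose a finite monoid $(M,\mult,e)$ recognizes $L$ via a monoid morphism $h\colon \Sigma^*\to M$ and a predicate $p\colon M\to 2$. I build a DFA $\A = (Q,i,\delta,o)$ whose state set is $Q = M$ itself, tracking the running product. The initial state is $i = e$, the transitions are right multiplication $\delta(m,a) = m\mult h(a)$, and the final-state predicate is $o = p$. An induction on $w\in\Sigma^*$ then shows $\bar\delta^*(w) = (m\mapsto m\mult h(w))$: the base case $\bar\delta^*(\varepsilon)=\id_M$ uses $h(\varepsilon)=e$ and the neutral-element law, and the step uses associativity of $\mult$ together with the morphism property $h(wa)=h(w)\mult h(a)$. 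Consequently $L(w) = i\seq\bar\delta^*(w)\seq o = p(e\mult h(w)) = p(h(w))$, which is exactly the language recognized by $M$. Finiteness of $Q=M$ makes $\A$ a genuine DFA.

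The two constructions are routine inductions once the bookkeeping is set up, so there is no single hard obstacle; the only point requiring care is keeping the diagrammatic composition order $\seq$ straight, since $\bar\delta^*$ composes transitions left-to-right while the monoid product accumulates on the right in the converse direction. I expect the verification $\bar\delta^*(w) = (m\mapsto m\mult h(w))$ in the second direction to be the step most prone to an off-by-one or orientation slip, but it follows directly from the defining recursion in~\eqref{eq:bar-delta} and the monoid laws.
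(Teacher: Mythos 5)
Your proof is correct and follows essentially the same route as the paper: the full monoid of endomaps $Q\to Q$ with $p(f)=i\seq f\seq o$ in one direction, and the DFA $(M,e,\delta,p)$ with $\delta(m,a)=m\mult h(a)$ in the other. The only (immaterial) difference is that you corestrict $\bar\delta^*$ to the transition submonoid generated by the $\bar\delta(a)$, whereas the paper simply uses all of $Q\to Q$, which is already finite.
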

\end{notheorembrackets}
\begin{proof}[Proof sketch]
If a DFA $\A = (Q,i,\delta,o)$ computes $L$, the monoid $Q\to Q$ recognizes~$L$ via the morphism $\bar\delta^*\colon \Sigma^*\to (Q\to Q)$ and predicate $p\colon (Q\to Q)\to 2$ defined by $p(f) = i \seq f \seq o\in 2$.

Conversely, given a finite monoid $(M,\mult,e)$ that recognizes $L$ via a monoid morphism $h\colon \Sigma^*\to M$ and a predicate $p\colon M\to 2$, we can turn $M$ into a DFA $\A = (M, e, \delta, p)$ computing~$L$ with transitions defined by
\(
    \delta(m,a) = m\mult h(a).
\)
\end{proof}

\section{Algebraic Recognition of Probabilistic Languages}
\label{sec:fpa}

Before we present the correspondence of automata and monoids on the categorical level of
general effectful automata in \cref{sec:effectful}, we whet the reader's appetite by illustrating the important special case of probabilistic languages. These are languages computed
by \emph{probabilistic finite automata}~\cite{Rabin63}, whose computational effects are
finite probability distributions.  All results in this section are instances of 
those in \cref{sec:effectful}; however, for the convenience of the reader we provide sketches
of the concrete arguments our general proofs instantiate to.


\subsection{Probability Distributions and Probabilistic Channels}\label{sec:dist-monad}
A \emph{finite probability distribution} on a set $X$ is a map $d\colon X\to [0,1]$ whose \emph{support} $\supp(d):=\{x\in X \mid d(x)\neq 0\}$ is finite and which satisfies $\sum_{x\in X}d(x) = 1$. A distribution can be represented as a finite formal sum $\sum_{i\in I} r_i x_i$ where $x_i\in X$, $r_i\in [0,1], \sum_{i \in I} r_{i} = 1$ and $d(x)=\sum_{i\in I:\, x_i=x} r_i$ for $x\in X$. The set of all distributions on $X$ is denoted by \(\D X\).

A map of the form \(f \colon X \to \D Y\), denoted by \(f \colon X \kleislito Y\), is a \emph{probabilistic channel} or \emph{Markov kernel} from \(X\) to \(Y\). Intuitively,  \(f\) is a map from \(X\) to \(Y\) that assigns to a given input~\(x\) the output \(y\) with probability \(f(x)(y)\).
We write \(X \kleislito Y\) for the set of all probabilistic channels from \(X\) to \(Y\).
Two probabilistic channels \(f \colon X \kleislito Y\) and \(g \colon Y \kleislito Z\) can be composed to yield a probabilistic channel \(f \kseq g \colon X \kleislito Z\) given by
$(f\kseq g)(x) = \big(z \mapsto
    \sum_{y\in Y}
    f(x)(y)
    \cdot g(y)(z)\big)$.
    The \emph{unit} at $X$ is the probabilistic channel \(\eta_{X} \colon X \kleislito X\) sending \(x \in X\) to the \emph{Dirac distribution} \(\delta_{x} \in \D X\) defined by \(\delta_{x} = (y \mapsto 1 \text{ if } x = y \text{ else } 0 \)).
    Using sum notation, composition of probabilistic channels is given by \((f \kseq g)(x) = \sum_{i \in I}\sum_{j \in J_{i}} r_{i}r_{ij}z_{ij}\), where \(f(x) = \sum_{i \in I} r_{i} y_{i}\) and \(g(y_{i}) = \sum_{j \in J_{i}} r_{ij} z_{ij}\), and the unit is \(\eta_{X}(x) = 1x\).
    Composition \(\kseq\) of probabilistic channels is associative and has \(\eta\) as an identity: \((f \kseq g) \kseq h = f \kseq (g \kseq h)\) and \(\eta_{X} \kseq f = f \kseq \eta_{Y}\) for \(f \colon X \kleislito Y, g \colon Y \kleislito Z\) and \(h \colon Z \kleislito W\).
    A channel \(f \colon X \kleislito 2\) is a \emph{probabilistic predicate} \(f \colon X \rightarrow [0, 1]\) on \(X\), where we identify \(\D 2\) with the unit interval.
    A map \(f \colon X \rightarrow Y\) induces the \emph{pure}  channel \(f \seq \eta_{Y} \colon X \kleislito Y\), which we also denote by $f \colon X \rightarrow Y$ by abuse of notation.

\subsection{Probabilistic Automata}
Probabilistic automata, due to Rabin~\cite{Rabin63},  generalize deterministic automata. Transitions are no longer given by a unique successor state $\delta(q,a)$ for every state $q$ and input $a$, but a probability distribution over possible successor
states. Accordingly, probabilistic automata compute \emph{probabilistic languages}, which are simply probabilistic predicates $\Sigma^*\kleislito 2$. Formally:

\begin{defn}\label{def:pfa} A \emph{probabilistic finite automaton} (\emph{PFA}) $\A =
  (Q,i,\delta,o)$ consists of a finite set~$Q$ of \emph{states} and the following channels for
  an \emph{initial distribution}, the \emph{transition distributions}, and an \emph{acceptance
    predicate}, respectively:
\[
  i\colon
  \begin{tikzcd}[cramped]
    1
    \arrow[kleisli]{r}{}
    & Q,
  \end{tikzcd}
  \qquad
  \delta\colon
  \begin{tikzcd}[cramped]
    Q\times \Sigma
    \arrow[kleisli]{r}{}
    & Q,
  \end{tikzcd}
  \qquad
  o\colon
  \begin{tikzcd}[cramped]
    Q
    \arrow[kleisli]{r}{}
    & 2.
  \end{tikzcd}
\]
We denote by $\bar\delta\colon \Sigma\to (Q\kleislito Q)$ the curried form of $\delta$ and
define the \emph{iterated transition map} $\bar \delta^{*} \colon \Sigma^{*} \rightarrow (Q
\kleislito Q)$ and the \emph{language} $L \colon \Sigma^{*} \kleislito 2$ \emph{computed by
  \A}, respectively, by
\begin{equation*}
  \bar \delta^{*}(\varepsilon) = \eta_{Q}, \quad \bar \delta^{*}(wa) = \bar \delta^{*}(w) \kseq \bar \delta^{*}(a) \quad \text{ and } \quad L(w) = i \kseq \bar \delta^{*}(w) \kseq o \quad \text{ for } w \in \Sigma^{*}.\label{eq:bar-delta-kleisli}
\end{equation*}

\end{defn}
Comparing with the definition of DFAs in \Cref{sec:regular}, we see that DFAs are precisely
PFAs where $i$, $\delta$, $o$ are pure maps. 

\begin{rem} We think of $i(q)$ as the probability that $\A$ starts in state~$q$, of $\delta(q,a)(q')$ as the probability that $\A$ transitions from $q$ to $q'$ on input $a$, and of $o(q)\in \D2\cong [0,1]$ as the probability that $q$ is accepting. Unravelling the above definition, the language $L\colon \Sigma^*\kleislito 2$ computed by $\A$ is given by the explicit formula
  \begin{equation}\label{eq:pfa-lang}
    \textstyle
    L(w)
    =
    \sum_{\vec{q}\in Q^{n+1}} i(q_0) \cdot \big(\prod_{k=1}^n \delta(q_{k-1},a_k)(q_k)\big) \cdot o(q_n)\quad\text{for $w=a_1\cdots a_n$}.
  \end{equation}
The summand for $\vec{q}\in Q^{n+1}$ is the probability that, on input $w$, the automaton takes the path $\vec{q}$ and accepts $w$. Hence, $L(w)$ is the total acceptance probability.
\end{rem}

\begin{rem}
  \begin{enumerate}[(1)]
    \item Rabin's original notion of PFA~\cite{Rabin63} features an initial state and a set of
      final states, which amounts to restricting  $i\colon 1\kleislito Q$ and $o\colon
      Q\kleislito 2$ in \Cref{def:pfa} to pure channels. Except for the behaviour on the empty
      word, the two versions are expressively equivalent~\cite[Lemmas~3.1.1
      and~3.1.2]{bukharaev95}.
      
    \item The PFA model used here is often called \emph{reactive} in the literature, as opposed
      to \emph{generative} PFAs, whose transition map is of type
      \(Q \kleislito 1 + Q \times \Sigma\), computing \emph{stochastic languages}, which are
      (not necessarily finite!)  distributions over \(\Sigma^{*}\). Generative PFA are not
      instances of \T-automata (\cref{def:effectful-automaton}) and therefore not considered in
      this paper.
  \end{enumerate}
\end{rem}

Our aim is to understand PFA-computable probabilistic languages in terms of recognition by algebraic structures, in the same way that finite monoids recognize regular languages. To this end, we introduce two modes of probabilistic algebraic recognition and prove that they capture precisely the PFA-computable languages.

\subsection{Recognizing Probabilistic Languages by Finite Monoids}
\label{sec:alg-rec-prob-hom}
For our first mode of probabilistic algebraic recognition, we stick with finite monoids as recognizing structures, and only add probabilistic effects to the recognizing monoid morphisms.
First we need some auxiliary machinery:

\begin{definition}\label{def:xi-lambda-pi}
  For all finite sets $X$ and all sets $Y,Z$ we define the maps
  \begin{align}
    \xi_{X,Y}&\colon \D (X\to Y) \to (X\to \D Y),
    &
    \xi_{X,Y}(d)(x) &\textstyle\/= \big(y\mapsto
    \sum_{f\colon X\to Y, f(x) = y}
    d(f)\big),\label{eq:xi}\\
    \lambda_{X,Y}&\colon (X\to \D Y)\to \D (X\to Y),
    &
    \lambda_{X,Y}(g)&\textstyle\/= \big(f\mapsto \prod_{x\in X} g(x)(f(x))\big),\label{eq:lambda}\\
 \pi_{Y,Z}&\colon \D Y\times \D Z \to \D(Y\times Z),& \pi_{Y,Z}(d,e)&= \big((y,z)\mapsto d(y)\cdot e(z)\big).\label{eq:pi}
  \end{align}
\end{definition}
Intuitively,
$\xi_{X,Y}(d)(x)(y)$ is the probability of picking some $f$
according to $d$ satisfying $f(x) = y$.  For a channel $g\colon X\kleislito Y$, the probability
of $f\colon X\to Y$ in the distribution $\lambda_{X,Y}(g)$ provides a measure of how compatible
$f$ is to $g$.  The map $\pi_{Y,Z}$ sends two distributions on $Y$ and $Z$ to their \emph{product
  distribution} on $Y\times Z$.
Well-definedness of $\xi$ and $\lambda$ and the next lemma can be shown by calculation; conceptually, they follow from $\D$ being an \emph{affine monad}.

\begin{lem}\label{lem:xi-surj}
We have $\lambda_{X,Y} \seq \xi_{X,Y}=\id$. In particular, the map $\xi_{X,Y}$ is surjective.
\end{lem}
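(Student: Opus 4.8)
The plan is to prove the identity $\lambda_{X,Y} \seq \xi_{X,Y} = \id$ by a direct computation that unfolds the definitions \eqref{eq:xi} and \eqref{eq:lambda}, and then to read off surjectivity of $\xi_{X,Y}$ as a formal consequence. With the paper's composition convention, the composite $\lambda_{X,Y} \seq \xi_{X,Y}$ applies $\lambda_{X,Y}$ first and $\xi_{X,Y}$ second, so it is an endomap of $(X\to\D Y)$. I would therefore fix an arbitrary channel $g\colon X\to\D Y$ together with an input $x\in X$ and an output $y\in Y$, and show that $\xi_{X,Y}(\lambda_{X,Y}(g))(x)(y) = g(x)(y)$.

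Unfolding both definitions, the left-hand side becomes
\[
  \sum_{f\colon X\to Y,\, f(x)=y}\ \lambda_{X,Y}(g)(f)
  \;=\;
  \sum_{f\colon X\to Y,\, f(x)=y}\ \prod_{x'\in X} g(x')(f(x')).
\]
The key step is to peel off the factor indexed by $x'=x$: since every $f$ in the range of summation satisfies $f(x)=y$, this factor equals the constant $g(x)(y)$ and can be pulled outside the sum. What remains is $g(x)(y)$ times a sum-of-products over the values $f(x')$ for $x'\neq x$, which now range freely over $Y$. This sum factorizes as $\prod_{x'\neq x}\sum_{y'\in Y} g(x')(y')$, and since each $g(x')$ is a probability distribution, every inner sum equals its total mass $1$; as $X$ is finite, the product of these ones is again $1$, leaving exactly $g(x)(y)$.

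For the final assertion of the lemma, surjectivity of $\xi_{X,Y}$ is then immediate: the identity just established exhibits $\lambda_{X,Y}$ as a section of $\xi_{X,Y}$, so every channel $g\in(X\to\D Y)$ lies in the image of $\xi_{X,Y}$, namely as $g=\xi_{X,Y}(\lambda_{X,Y}(g))$.

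I expect the only subtle point to be the interchange of the (possibly infinite) sum over $Y$ with the product over $X$ in the factorization step. This raises no genuine difficulty: finiteness of $X$ keeps the product finite, while finiteness of $\supp(g(x'))$ makes each sum over $Y$ effectively finite, so all rearrangements involve only finitely many nonzero terms.
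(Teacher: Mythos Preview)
Your proof is correct. The direct computation is clean: peeling off the $x'=x$ factor, factorizing the remaining sum of products into $\prod_{x'\neq x}\sum_{y'} g(x')(y')$, and using that each $g(x')$ has total mass $1$ all work exactly as you describe, and your remarks on finiteness adequately justify the rearrangement.

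However, the paper takes a different, more abstract route. Rather than computing with the explicit formulas~\eqref{eq:xi} and~\eqref{eq:lambda}, the paper observes that the identity $\lambda_{X,Y}\seq\xi_{X,Y}=\id$ holds for \emph{any} affine monad (one with $T1\cong 1$), not just the distribution monad $\D$. The argument invokes Kock's characterization of affinity, namely that the double strength $\pi_{X,Y}\colon TX\times TY\to T(X\times Y)$ is a section of the pairing $\langle Tp_1,Tp_2\rangle$; iterating this along the finite product $(X\to Y)\cong\prod_{x\in X} Y$ then yields the desired section-retraction pair $\lambda_{X,Y},\xi_{X,Y}$ (see \Cref{rem:affine-implies-xx-mon-fg}). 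Your approach is more elementary and self-contained for the concrete case $\T=\D$, while the paper's buys the generality needed later in \Cref{sec:effectful}, where the same fact is used for arbitrary affine monads.
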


\begin{notn}\label{not:pi}
Given channels
$f_1\colon X_1\kleislito Y_1$ and
$f_2\colon X_2\kleislito Y_2$,
we define the channel $f_{1} \bar\times f_{2} \colon X_{1} \times X_{2} \kleislito Y_{1} \times Y_{2}$ as the composite in diagram\ in \autoref{fig:kleisli-pairing}.
\end{notn}

\begin{figure}
  \def\currentMinipageHeight{20mm}
  \begin{minipage}[b][\currentMinipageHeight]{.3\textwidth}\centering
      \begin{tikzcd}[cramped,column sep=9.5mm]
        X_1\times X_2
        \arrow[kleisli]{rd}[overlay, swap]{f_1 \bar\times f_2}
        \arrow{r}[overlay]{f_1\times f_2}
        & \D Y_1\times \D Y_2
        \arrow{d}[kleisli, pos=0.44,overlay]{\pi_{Y_1,Y_2}}
        \\
        & Y_1\times Y_2
      \end{tikzcd}
      \caption{Definition of $\bar\times$}
      \label{fig:kleisli-pairing}
  \end{minipage}%
  \hfill%
  \begin{minipage}[b][\currentMinipageHeight]{.47\textwidth}\centering
    \begin{tikzcd}
      N\times N
      \arrow{r}{\mult}
      \arrow[kleisli]{d}[swap]{h\bar\times h}
      & N
      \arrow[kleisli]{d}[swap]{h}
      & 1
      \arrow{l}[swap]{n}
      \arrow{dl}{e}
      \\
      M \times M
      \arrow{r}{\mult}
      &
      M
    \end{tikzcd}
    \caption{Probabilistic monoid morphism}
    \label{fig:prob-hom}
  \end{minipage}%
  \hfill%
  \begin{minipage}[b][\currentMinipageHeight]{.21\textwidth}\centering
    \begin{tikzpicture}[shorten >=1pt,node distance=18mm,on grid,auto,baseline=(q_0.south)]
      \node[state, inner sep=2pt, minimum size=0pt] (q_0) {\(q_{0}\)};
      \node[state, accepting, inner sep=2pt, minimum size=0pt] (q_1) [right=of q_0] {\(q_{1}\)};
      \path[->]
        (q_0) edge node {\(a|\frac{1}{2}\)} (q_1);
      \path[->]
        (q_0) edge [loop,out=60,in=120,looseness=6] node[above] {\(a|\frac{1}{2}\)} (q_0)
        (q_1) edge [loop,out=60,in=120,looseness=6] node[above] {\(a|1\)} (q_1);
    \end{tikzpicture}%
    \caption{A PFA}
    \label{fig:examplePFA}
  \end{minipage}%
\end{figure}

\begin{defn}
A \emph{probabilistic monoid morphism} from a monoid $(N,\mult,n)$ to a monoid $(M,\mult,e)$ is a channel $h\colon N\kleislito M$ making the diagram in \autoref{fig:prob-hom} commute, where the maps $\mult$, $n$, $e$ are regarded as pure channels.
\end{defn}

\begin{rem}\label{R:ext}\twnote{Why does uniqueness hold?}
  The universal property of $\Sigma^*$ extends to the probabilistic case: For every monoid
  $(M,\mult,e)$ and every channel $h_0\colon \Sigma\kleislito M$, there exists a unique probabilistic
  monoid morphism $h\colon \Sigma^*\kleislito M$ with $h(a)=h_0(a)$ for all $a\in \Sigma$. It is given by
  \[\textstyle h(w)= \big(m \mapsto \sum_{m=m_1\mult \cdots \mult m_n} \prod_{i=1}^n
    h_0(a_i)(m_i)\big) \qquad\text{for $w = a_{1} \cdots a_{n} \in \Sigma^{*}$}.\]
\end{rem}

\begin{defn}\label{D:prob-acc}
  A monoid $M$ \emph{probabilistically recognizes} a probabilistic language
  $L\colon \Sigma^*\kleislito 2$ if there exists a probabilistic monoid morphism
  $h\colon \Sigma^*\kleislito M$ and a probabilistic predicate $p\colon M\kleislito 2$ such
  that $L = h\kseq p$.
\end{defn}

We stress that, in \Cref{D:prob-acc}, probabilistic effects only appear in the channels~$h$ and $p$, while the monoid $M$ itself is pure. 
At first sight, it may seem more natural to use `probabilistic monoids', with proper channels as neutral element $1\kleislito M$  and multiplication $M\times M\kleislito M$, as recognizers. Remarkably, we need not require this additional generality:

\begin{theorem}\label{thm:pfas-vs-finite-monoids} For every probabilistic language  $L\colon \Sigma^*\kleislito 2$, there exists a PFA computing~$L$ iff there exists a finite monoid probabilistically recognizing $L$.
\end{theorem}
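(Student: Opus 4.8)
The plan is to prove both implications by transporting the classical DFA/monoid argument to the Kleisli setting, replacing maps by channels and $\seq$ by $\kseq$ throughout. For the direction ``PFA $\Rightarrow$ finite monoid'', given a PFA $\A=(Q,i,\delta,o)$ I would recognize $L$ through the \emph{finite} monoid $(Q\to Q,\seq,\id_Q)$ of endomaps from \Cref{ex:monoids}(1)---the finite set of functions on $Q$, not the infinite set $Q\kleislito Q$ of channels. To produce the recognizing morphism I first lift each single-letter channel to a distribution over functions, setting $h_0\colon\Sigma\kleislito(Q\to Q)$, $h_0(a)=\lambda_{Q,Q}(\bar\delta(a))$, and then let $h\colon\Sigma^*\kleislito(Q\to Q)$ be the unique probabilistic monoid morphism extending $h_0$ supplied by \Cref{R:ext}. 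As predicate I take $p\colon(Q\to Q)\kleislito 2$ with $p(f)=i\kseq f\kseq o$, where $f$ is regarded as a pure channel $Q\kleislito Q$.

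The crux is the identity $\bar\delta^*(w)=\xi_{Q,Q}(h(w))$ for all $w\in\Sigma^*$; once it is in hand, $L=h\kseq p$ drops out after a brief unfolding of both sides into the common sum $\sum_{q}i(q)\cdot\!\sum_f h(w)(f)\,o(f(q))$. To establish the identity I would first check that $\xi_{Q,Q}$ is an ordinary monoid morphism from $\D(Q\to Q)$---carrying the convolution product $\ast$ induced by composition, which is precisely the target multiplication of $h$ dictated by \Cref{fig:prob-hom}---into $(Q\kleislito Q,\kseq,\eta_Q)$. This amounts to verifying $\xi(d\ast e)=\xi(d)\kseq\xi(e)$, a direct calculation in which associativity of function composition inside the defining sum of $\xi$ does the work, together with $\xi(\eta_{Q\to Q}(\id_Q))=\eta_Q$. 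Consequently $w\mapsto\xi_{Q,Q}(h(w))$ is a monoid morphism $\Sigma^*\to(Q\kleislito Q)$ that agrees on generators with $\bar\delta$, since $\xi_{Q,Q}(\lambda_{Q,Q}(\bar\delta(a)))=\bar\delta(a)$ by \Cref{lem:xi-surj}; as $\bar\delta^*$ is by definition the free monoid morphism extending $\bar\delta$ into $(Q\kleislito Q,\kseq,\eta_Q)$, uniqueness of free extensions forces $\xi_{Q,Q}\circ h=\bar\delta^*$.

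For the converse I would reuse the classical state construction: put $Q=M$, let $i$ be the pure channel selecting the unit $e$, set $o=p$, and define $\delta(m,a)$ to be the pushforward of $h(a)$ along left multiplication $m\mult(-)$. A routine induction on $w$---using associativity of $\mult$ and the fact that $h(wa)$ is the pushforward of the product distribution of $h(w)$ and $h(a)$ under $\mult$ (the morphism property of \Cref{fig:prob-hom})---shows that $\bar\delta^*(w)$ sends $m$ to the pushforward of $h(w)$ by $m\mult(-)$. Evaluating at the start state $e$ then gives $i\kseq\bar\delta^*(w)=h(w)$, so the computed language is $L_\A(w)=h(w)\kseq p=(h\kseq p)(w)=L(w)$, as desired.

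I expect the forward direction to be the main obstacle, and within it the conceptual step of seeing that a finite monoid suffices at all: the naive recognizer is the infinite channel monoid $Q\kleislito Q$. Finiteness is unlocked by the retraction pair of \Cref{def:xi-lambda-pi} (\Cref{lem:xi-surj}) together with multiplicativity of $\xi$, which let the distribution-valued morphism $h$ on the finite monoid $Q\to Q$ faithfully encode the channel $\bar\delta^*$---remarkably, even though $\lambda$ itself fails to be multiplicative and is therefore only applied to generators. Verifying that $\xi$ respects the convolution product is the single genuinely technical computation in the whole argument.
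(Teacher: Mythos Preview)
Your proposal is correct and follows essentially the same approach as the paper. Both directions use the same recognizing data: the monoid $Q\to Q$ with $h_0=\bar\delta\seq\lambda_{Q,Q}$ and $p(f)=i\kseq f\kseq o$ in the forward direction, and the PFA on states $M$ with pure initial state $e$, output $p$, and $\delta(m,a)$ the pushforward of $h(a)$ along $m\mult(-)$ in the converse.

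Your organization of the forward verification is in fact slightly cleaner than the paper's proof sketch, which simply says that $L=h\kseq p$ follows from ``a lengthy calculation using \Cref{lem:xi-surj}''. You isolate the key structural fact---that $\xi_{Q,Q}$ is a monoid morphism from $(\D(Q\to Q),\ast)$ to $(Q\kleislito Q,\kseq)$---and then obtain $\xi_{Q,Q}\circ h=\bar\delta^*$ by uniqueness of free extensions, rather than by direct computation. This is precisely the mechanism behind the general \Cref{thm:kleisli-recognition}, where the hypothesis ``$X\kleislito X$ is monoidally finitely generated'' encodes exactly that $\xi$ is a surjective monoid morphism; so your argument is the specialization to $\T=\D$ of the paper's abstract proof, made explicit.
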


\begin{proof}[Proof sketch]
Given a PFA $\A=(Q,\delta,i,o)$ computing $L$, the finite monoid $Q\to Q$ probabilistically recognizes $L$ via the probabilistic monoid morphism $h\colon \Sigma^*\kleislito (Q\to Q)$ that freely extends the probabilistic channel $\bar\delta\kseq \lambda_{Q,Q}\colon \Sigma\kleislito (Q\to Q)$, and the probabilistic predicate $p\colon (Q\to Q)\kleislito 2$ defined by $p(f)=i\kseq f\kseq o\in \D2$. Explicitly, the maps $h$ and $p$ are given by
\[\textstyle h(w) = \big(f \mapsto \sum_{f=f_1\seq \cdots \seq f_n} \prod_{i=1}^n \prod_{q\in Q} \delta(q,a_i)(f_i(q))\big)\quad\text{and}\quad  p(f) = \sum_{q\in Q} i(q)\cdot o(f(q)), \]
for $w=a_1\cdots a_n$. A lengthy calculation using \Cref{lem:xi-surj} shows that $L=h\kseq p$.

Conversely, if a finite monoid $(M,\mult,e)$ probabilistically recognizes $L$ via $h\colon \Sigma^*\kleislito M$ and $p\colon M\kleislito 2$, then the PFA $\A=(M,\delta,e,p)$ computes $L$, with \(\delta \colon M \times \Sigma \kleislito M\) defined by
\[\textstyle \delta(m,a) = \big(n \mapsto \sum_{m'\,:\,m\mult m'=n} h(a)(m')\big). \qedhere\]
\end{proof}

\begin{example}\label{ex:kleisli-rec}
  The probabilistic monoid morphism induced by the channel \(h_{0} \colon \{a\} \kleislito (\{0, 1\}, \lor, 0)\) sending \(a \mapsto \frac{1}{2}0 + \frac{1}{2}1 \), together with the pure predicate \(\id \colon \{0, 1\} \rightarrow \{0, 1\} \) recognizes the language \(L(a^{n}) = 1 - \frac{1}{2^{n}}\).
  The corresponding PFA due to \Cref{thm:pfas-vs-finite-monoids} is
  shown in \autoref{fig:examplePFA}.
\end{example}

\subsection{Recognizing Probabilistic Languages by Convex Monoids}
\label{sec:alg-rec-by-convex-monoids}
The presence of probabilistic effects in the recognizing morphisms places the above mode of probabilistic recognition outside standard universal algebra. In this section we develop an equivalent, purely algebraic approach based on the theory of {convex sets}.

A \emph{convex set}~\cite{stone-49} is a set \(X\) equipped with a family of binary operations \( +_{r}  \colon X \times X \rightarrow X\) (\(r \in [0, 1]\)) subject to following equations, where \(s' = r + s -rs \ne 0\) and \(r' = \frac{r}{s'}\): 
  \[ x +_{r} x = x,
    \quad
    x +_{0} y = y,
    \quad
    x +_{r} y = y +_{1-r} x,
    \quad
    x +_{r} (y +_{s} z) = (x +_{r'} y ) +_{s'} z
  \]
  A map \(f \colon X \rightarrow Y\) between convex sets  
  is \emph{affine} if \(f(x +_{r} x') = f(x) +_{r} f(x')\) for \(x, x' \in X\) and \(r \in [0, 1]\).

\begin{expl}\label{ex:convex-sets}
\begin{enumerate}[(1)]
\item The prototypical convex sets are convex subsets $X\subseteq \R^\kappa$ (for a cardinal~$\kappa$) with the operations $\vec x +_r \vec y := r\cdot \vec x
+ (1-r)\cdot \vec y$. Up to affine isomorphism, these are precisely the \emph{cancellative} convex sets~\cite{stone-49}, which are those satisfying
  \[
    x +_{r} y = x +_{r} z \; \Longrightarrow \; y = z
    \qquad
    \text{for all $x, y , z \in X$ and $r \in (0, 1)$.}
  \]
\item
The set $\D X$ of distributions on a set $X$ is a convex set with structure given by $d+_r e = \big(x \mapsto r \cdot d(x) + (1-r) \cdot e(x)\big)$ for \(d, e \in \D X\). This is the \emph{free convex set on $X$}: every map $h \colon X \rightarrow Y$ to a convex set $Y$ extends uniquely to an affine map $h^\#\colon \D X\to Y$ such that $h = \eta_X\seq h^\#$. Concretely, $h^{\#}$ can be defined by $h^\#(\sum_{i=1}^{n} r_ix_i)=h(x_{1}) +_{r_{1}} h^{\#}(\sum_{i=2}^{n}\frac{r_{i}}{1-r_{1}}x_{i})$.
\item\label{ex:convex-sets-channels} For all sets $X$ and $Y$, the set $X\kleislito Y$ forms a convex set with the operations $f+_r g = \big(x \mapsto  f(x)+_r g(x)\big)$ for $f,g\colon X\kleislito Y$, $x\in X$ and $r\in [0,1]$.
\end{enumerate}
\end{expl}

Reutenauer\ \cite{reu80} showed that finite-dimensional \emph{\R-algebras} -- real vector
spaces equipped with a compatible monoid structure -- precisely recognize rational power series \(\Sigma^{*} \rightarrow \R\).
For algebraic recognition of probabilistic languages, we generalize \R-algebras to convex monoids, which are monoids with an additional convex structure that is respected by the multiplication:

\begin{defn}
  A \emph{convex monoid} is a convex set \(M\) equipped with a monoid structure $(M,\mult,e)$
  whose multiplication \(\mult \colon M \times M \rightarrow M\) satisfies
  \begin{equation}\label{eq:conv-mon}
    (m +_{r} m') \mult n
    =
    m \mult n +_{r} m' \mult n
    \quad \text{ and } \quad
    m \mult (n +_{r} n')
    =
    m \mult n +_{r} m \mult n'.
  \end{equation}
\end{defn}

\begin{expl}\label{ex:conv-mon}
\begin{enumerate}[(1)]
\item\label{ex:free-conv-mon} The convex set $\D \Sigma^*$ with multiplication
  $(\sum_{i\in I} r_iv_i)\mult (\sum_{j\in J} s_jw_j) = \sum_{i\in I,\,j\in J} r_is_jv_iw_j$
  and neutral element $\eta_{\Sigma^{*}}(\varepsilon)=1\varepsilon$ is the \emph{free convex
    monoid} on $\Sigma$: every map $h_0\colon \Sigma\to M$ to a convex monoid $M$ extends to
  a unique affine monoid morphism $h\colon \D\Sigma^*\to M$ such that $h(1a) = h_{0}(a)$.
  
\item\label{ex:dx-power} For every set $X$, the convex set $X\kleislito X$ from
  \Cref{ex:convex-sets}\ref{ex:convex-sets-channels} forms a convex monoid with channel
  composition $\kseq$ as multiplication and unit $\eta_X$ as neutral element.
\end{enumerate}
\end{expl}

\begin{defn}\label{D:conv-recog}
A convex monoid $M$ \emph{recognizes} a language $L\colon \Sigma^*\kleislito 2$ if there exists a monoid morphism $h\colon \Sigma^*\to M$ and an affine map $p\colon M\to \D 2$ such that $L=h\seq p$.
\end{defn}
For a correspondence between PFA-computable languages and convex monoids, we need to impose a suitable finiteness restriction on the latter. Finite convex monoids are not sufficient; instead, we shall work with a more permissive notion of finiteness:

\begin{defn}\label{D:conv-fg}
A convex set $X$ is \emph{finitely generated} if there exists an affine surjection $s\colon \D G\epito X$ for some finite set $G$. A convex monoid is \emph{\finite} if its underlying convex set is finitely generated.
\end{defn}

Intuitively, this definition says that \(X\) is the {convex hull} of a finite subset \(s[G] \subseteq X\): every element in \(X\) is a convex combination of the elements \(s(g), g \in G\).

\begin{expl}\label{ex:fg-carried}
\begin{enumerate}[(1)]
\item A convex subset of $\R^n$ is finitely generated iff it is a \emph{bounded convex polytope}~\cite{grunbaum03}, that is, it is compact and has finitely many extremal points.
\item\label{ex:x-kto-x}
For all finite sets $X$, the convex monoid $X\kleislito X$ from \Cref{ex:convex-sets}\ref{ex:convex-sets-channels} is fg-carried. This is witnessed by the map $\xi_{X,X}\colon \D (X\to X) \epito (X\to \D X)$ of~\eqref{eq:xi}, which is surjective by \Cref{lem:xi-surj}, and affine, since it is the free extension of the map $f\mapsto f\seq \eta_X$.
\end{enumerate}
\end{expl}

Fg-carried convex monoids give rise to our second algebraic characterization of PFAs:

\begin{theorem}\label{thm:pfas-vs-fgc-convex-algebras}
For every probabilistic language $L\colon \Sigma^*\kleislito 2$, there exists a PFA computing~$L$ iff there exists an \finite convex monoid recognizing $L$.
\end{theorem}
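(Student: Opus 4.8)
The plan is to prove both implications, the left-to-right direction being routine and the converse being the substantive one. Given a PFA $\A=(Q,i,\delta,o)$ computing $L$, I would recognize $L$ with the convex monoid $Q\kleislito Q$ of \Cref{ex:conv-mon}\ref{ex:dx-power}, which is fg-carried by \Cref{ex:fg-carried}\ref{ex:x-kto-x}. Its iterated transition map $\bar\delta^{*}\colon \Sigma^{*}\to(Q\kleislito Q)$ is an ordinary monoid morphism, being the free extension of $\bar\delta$ with $\bar\delta^{*}(\varepsilon)=\eta_{Q}$ and $\bar\delta^{*}(wa)=\bar\delta^{*}(w)\kseq\bar\delta^{*}(a)$. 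I would pair it with the map $p\colon (Q\kleislito Q)\to\D 2$ given by $p(f)=i\kseq f\kseq o$. Since Kleisli composition distributes over the pointwise convex combinations of \Cref{ex:convex-sets}\ref{ex:convex-sets-channels} in each argument, $p$ is affine, and $L=\bar\delta^{*}\seq p$ is immediate from $L(w)=i\kseq\bar\delta^{*}(w)\kseq o$.

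For the converse I would turn an fg-carried recognizer into a PFA. Let $M$ recognize $L$ via a monoid morphism $h\colon \Sigma^{*}\to M$ and an affine map $p\colon M\to\D 2$, and fix an affine surjection $s\colon \D G\epito M$ with $G$ finite. The PFA will have state set $Q=G$. Writing $s_{0}\colon G\to M$ for the restriction $s_{0}(g)=s(\eta_{G}(g))$, so that $s=s_{0}^{\#}$ is its free extension, I would set the acceptance predicate to $o(g)=p(s_{0}(g))\in\D 2$, pick any $i\in\D G$ with $s(i)=e$, and --- invoking surjectivity of $s$ once more --- define each transition $\delta(g,a)\in\D G$ to be an \emph{arbitrary} preimage under $s$ of $s_{0}(g)\mult h(a)$.

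The crux is to show this PFA computes $L$, and the main obstacle is precisely that the transitions were chosen non-canonically along the non-injective map $s$. The key observation is that the required simulation property is an identity of affine maps out of the free convex set $\D G$, hence determined on generators. Writing $t_{a}=(\bar\delta(a))^{\#}\colon \D G\to\D G$, the two affine maps $t_{a}\seq s$ and $d\mapsto s(d)\mult h(a)$ agree on each generator $\eta_{G}(g)$ by the choice of $\delta(g,a)$, so they coincide by uniqueness of the free extension. A straightforward induction using $t_{wa}=t_{w}\seq t_{a}$ then yields $s(t_{w}(i))=s(i)\mult h(w)=e\mult h(w)=h(w)$ for all $w$. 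Finally, the PFA language satisfies $L_{\A}(w)=o^{\#}(t_{w}(i))$, and since $o^{\#}=s\seq p$ (both affine and agreeing on generators), I conclude $L_{\A}(w)=p(s(t_{w}(i)))=p(h(w))=L(w)$. The only step needing genuine care is this reduction to generators; all remaining work is bookkeeping relating Kleisli composition to affine composition for the distribution monad.
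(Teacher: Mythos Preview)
Your proof is correct and follows essentially the same approach as the paper: both directions use the same constructions (the convex monoid $Q\kleislito Q$ with $\bar\delta^{*}$ and $p(f)=i\kseq f\kseq o$ for one direction, and the PFA on the generators $G$ with $i,\delta,o$ chosen along the surjection $s$ for the other). Your verification that the constructed PFA computes $L$, via the reduction-to-generators argument showing $t_a\seq s = s(-)\mult h(a)$ as affine maps, is exactly the substance the paper's sketch leaves implicit.
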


\begin{proof}[Proof sketch]
Given a PFA $\A=(Q,\delta,i,f)$ computing $L$, the \finite convex monoid $Q\kleislito Q$ from \Cref{ex:fg-carried}\ref{ex:x-kto-x} recognizes $L$ via the morphism  $\bar\delta^*\colon \Sigma^*\to (Q\kleislito Q)$ and the affine map $p\colon (Q\kleislito Q)\to \D 2$ given by $p(f)=i\kseq f\kseq o\in \D2$.

Conversely, suppose that $(M,\mult,e)$ is an \finite convex monoid (witnessed by an affine
surjection $s\colon \D Q\epito M$) recognizing $L$ via $h\colon \Sigma^*\to M$ and
$p\colon M\to \D2$. Define the PFA $\A_Q=(Q,\delta,i,o)$ where $o=\eta_Q\seq s \seq p$, the transition
distribution $\delta$ is chosen such that $s(\delta(g,a))=s(g)\mult h(a)$ for all $g\in Q$ and
$a\in \Sigma$, and the initial distribution $i$ is chosen such that $s(i)=e$; such choices
exist by surjectivity of $s$. Then $\A_Q$ computes $L$.
\end{proof}

A remarkable property of \finite convex monoids is that they always admit a finite presentation, despite not necessarily being finite themselves. To see this, let us recall some terminology from universal algebra. Given an equational class~$\V$ of algebras over a finitary signature $\Lambda$, a \emph{finite presentation} of an algebra $A\in \V$ is given by (1)~a finite set $G$ of \emph{generators}, (2)~a finite set $R$ of \emph{relations} $s_i=t_i$ ($i=1,\ldots,n$) where $s_i,t_i\in T_\Lambda G$ are $\Lambda$-terms in variables from $G$, and (3)~a surjective $\Lambda$-algebra morphism $q\colon T_\Lambda G\epito A$ satisfying $q(s_i)=q(t_i)$ for all $i$, subject to the universal property that every morphism $h\colon T_\Lambda G\to B$, where $B\in \V$ and $h(s_i)=h(t_i)$ for all $i$, factorizes through $q$. In particular, we have the notions of \emph{finitely presentable convex set} and \emph{finitely presentable convex monoid}.

 By a non-trivial result due to Sokolova and Woracek~\cite{sokolova-15}, finitely generated
 convex sets are finitely presentable (the converse holds trivially). This is the key to the
 following theorem; a categorical version of it is later proved in \Cref{thm:fg-carried-implies-fp-general}.

\begin{theorem}\label{thm:fg-carried-implies-fp}
Every \finite convex monoid is finitely presentable.
\end{theorem}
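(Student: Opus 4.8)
The plan is to convert the finite convex-set presentation of the underlying convex set of $M$ -- guaranteed by the Sokolova--Woracek theorem -- into a finite convex-monoid presentation by adjoining only finitely many relations that account for the monoid operations. So suppose $M$ is \finite, witnessed by an affine surjection $s\colon \D G\epito M$ with $G$ finite (\Cref{D:conv-fg}). Since $\D G$ is the free convex set on $G$ and hence finitely presentable, and since the underlying convex set of $M$ is finitely presentable by Sokolova--Woracek, the kernel congruence of $s$ is finitely generated (a standard consequence in any variety when one maps a finitely generated free algebra onto a finitely presentable one): there is a finite set $R_{\mathrm{cs}}$ of equations between elements of $\D G$ generating $\ker s$ as a convex-set congruence.

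Next I would build the convex-monoid presentation over the same generators $G$. The free convex monoid on $G$ is $\D(G^*)$ (\Cref{ex:conv-mon}\ref{ex:free-conv-mon}), and the map $g\mapsto s(g)$ extends to a unique affine monoid morphism $q\colon \D(G^*)\to M$; as $s$ is surjective, every element of $M$ is a convex combination of the $s(g)$, so $q$ is surjective. Using surjectivity of $s$ once more, I choose distributions $t_e\in \D G$ and $t_{g,g'}\in \D G$ (for $g,g'\in G$) with $s(t_e)=e$ and $s(t_{g,g'})=s(g)\mult s(g')$, and take the finite relation set
\[
  R \;=\; R_{\mathrm{cs}}\;\cup\;\{\,e = t_e\,\}\;\cup\;\{\,g\mult g' = t_{g,g'} \mid g,g'\in G\,\}.
\]
By construction $q$ satisfies every equation in $R$, so the convex-monoid congruence $\theta$ generated by $R$ is contained in $\ker q$. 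It remains to prove the reverse inclusion, which yields $M\cong \D(G^*)/\theta$ together with the required universal property.

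The crux is a normal-form argument showing $\theta=\ker q$. First, by induction on word length I would show that modulo $\theta$ every word reduces to an element of $\D G$: the empty word equals the unit $e$, which is $\theta$-related to $t_e\in\D G$ by the unit relation; and for a word $g\mult w'$ with $w'\equiv_\theta d'\in\D G$, congruence gives $g\mult w'\equiv_\theta g\mult d'$, and bilinearity of the multiplication rewrites $g\mult d'=\sum_k r_k\,(g\mult g_k)$, which the product relations send into $\D G$. Since a general element of $\D(G^*)$ is a convex combination of words and $\theta$ is a congruence, every element of $\D(G^*)$ is $\theta$-equivalent to some element of $\D G$. Now take $x,y\in\D(G^*)$ with $q(x)=q(y)$ and reduce them to $x_0,y_0\in\D G$; since $\theta\subseteq\ker q$ and $q$ agrees with $s$ on the copy of $\D G$ inside $\D(G^*)$, we obtain $s(x_0)=s(y_0)$, i.e.\ $x_0\equiv_{\ker s}y_0$. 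As $R_{\mathrm{cs}}\subseteq R$ generates $\ker s$ and $\theta$ is in particular a convex-set congruence containing $R_{\mathrm{cs}}$, it follows that $x_0\equiv_\theta y_0$, whence $x\equiv_\theta y$. This gives $\ker q\subseteq\theta$, completing the presentation.

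I expect the genuinely deep point to be absorbed entirely by Sokolova--Woracek: the nontrivial fact is that $M$, although possibly infinite as a set, still admits a \emph{finite} set $R_{\mathrm{cs}}$ of convex-set relations. Within the remaining argument, the delicate step is the reduction modulo $\theta$: one must verify that the single unit relation and the $|G|^2$ product relations really suffice to collapse arbitrarily long words, and this is exactly where bilinearity of the convex-monoid multiplication is indispensable, since it lets products of convex combinations be rewritten as convex combinations of products of generators. Once this reduction is in place, closing the universal property reduces cleanly to the convex-set presentation already supplied by $R_{\mathrm{cs}}$.
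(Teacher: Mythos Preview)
Your argument is correct and matches the paper's own proof sketch essentially line for line: take a finite convex-set presentation (courtesy of Sokolova--Woracek), then adjoin finitely many relations $g\mult g'=t_{g,g'}$ (and one for the unit) to pin down the monoid structure, with the normal-form reduction justified by bilinearity of the multiplication. The paper's \emph{full} proof in the appendix, however, takes a different route: rather than working concretely with generators and relations, it derives the result as an instance of a general categorical fact (\Cref{thm:fg-carried-implies-fp-general} via \Cref{cor:fb-impl-fp}) that for any finitary commutative monad~$\T$ whose finitely generated algebras are finitely presentable, every fp-carried $\T$-monoid is a finitely presentable object of $\Mon(\Alg(\T))$; this uses that the tensor product on $\Alg(\T)$ preserves fp objects and then a functor-algebra argument. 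Your elementary argument is self-contained and transparent about where Sokolova--Woracek and bilinearity enter, while the categorical proof buys uniformity across all monads satisfying the hypotheses (e.g.\ $\S$ for a field $S$) without redoing the normal-form computation each time.
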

\begin{proof}[Proof sketch]
  Let $M$ be an \finite convex monoid. Choose a finite presentation
  $(G,R,q)$ of $M$ as a convex set. Since the multiplication of $M$ is
  fully determined by its action on $q[G]$ by \eqref{eq:conv-mon},
  this extends to a finite presentation of the convex monoid $M$ by
  adding a relation $g \mult g' = t_{g,g'}$ for each $g,g'\in G$,
  where $t_{g,g'}$ is any term in the signature of convex sets such
  that $q(t_{g,g'})=q(g)\mult q(g')$.  
\end{proof}

\noindent
We will see in \Cref{ex:syn-mon-not-fg-based-but-fp} that the converse of \Cref{thm:fg-carried-implies-fp} does not hold.

\subsection{Syntactic Convex Monoids}\label{sec:syn-conv-mon}
Compared to ordinary monoids, convex monoids are fairly complex structures. The benefit of the additional complexity is the existence of \emph{canonical} recognizers for probabilistic languages. Recall that every regular language $L\colon \Sigma^*\to 2$ has a canonical recognizer, the \emph{syntactic monoid} $\Syn(L)$. It is given by the quotient \(\Sigma^{*} \! / \!\! \approx_{L}\) of the free monoid $\Sigma^*$ modulo the \emph{syntactic congruence} $\approx_L\, \subseteq \Sigma^*\times \Sigma^*$, where $v \approx_{L} w$ iff $L(xvy) = L(xwy)$ for all $x,y\in \Sigma^*$. The projection $h_L\colon \Sigma^*\epito \Syn(L)$, sending $w\in  \Sigma^*$ to its congruence class $[w]$, recognizes~$L$ via \(p \colon \Syn(L) \rightarrow 2\) with \(p([w])= 1 \text{ iff } w \in L\). Moreover, $h_L$ factorizes through every surjective morphism $h\colon \Sigma^*\epito M$ recognizing $L$, thus $h_L$ is the `smallest' surjective morphism recognizing $L$.
For probabilistic languages, this notion generalizes as follows:

\begin{defn}\label{def:syn-conv-mon}
  Given a probabilistic language \(L \colon \Sigma^{*} \kleislito 2\),
  a \emph{syntactic convex monoid} of~$L$ is a convex monoid
  \(\syn(L)\) with a surjective affine monoid morphism
  \(h_{L} \colon \mathcal{D} \Sigma^{*} \epito \syn(L)\) such that
  $\eta_{\Sigma^*}\seq h_L$ recognizes $L$ and, moreover, $h_L$ factorizes
  through every surjective affine monoid morphism $h$ such that
  $\eta_{\Sigma^*}\seq h$ recognizes~\(L\):
  \medskip
     \[
    \begin{tikzcd}
      \Sigma^* \ar{r}{\eta_{\Sigma^*}} \ar[shiftarr={yshift=15pt}]{rrr}{L} & \mathcal{D} \Sigma^{*} \rar[two heads]{\forall h} \drar[two heads][swap]{h_{L}}
      & M \rar{p} \dar[dashed, swap]{\exists}
      & {[0, 1]}
      \\
      && \Syn(L) \urar[swap]{p_L} &
    \end{tikzcd}
    \]
\end{defn}
Syntactic structures for formal languages are well-studied from a categorical perspective~\cite{Bojan15,uacm17,amu18}. The following result is an instance of {\cite[Thm.~3.14]{amu18}}:
  \begin{theorem}\label{thm:syn-conv-mon}
    Every probabilistic language \(L \colon \Sigma^{*} \kleislito 2\) has a syntactic convex monoid, unique up to isomorphism. It is presented by generators \(\Sigma\) and relations given by the  \emph{syntactic congruence} $\approx_{L}\,\subseteq \D\Sigma^*\times \D\Sigma^*$ defined in \Cref{eq:synt-cong}. The maps \(h_{L} \colon \D \Sigma^{*} \rightarrow \syn(L)\) and \(p_{L} \colon \syn(L) \rightarrow [0, 1]\) are given by $h_L(\sum_i r_iv_i)=[\sum_i r_iv_i]$ and $p_L([\sum_{i}r_{i}v_{i}]) = \sum_{i}r_{i}L(v_{i})$.
    \begin{equation}
      \label{eq:synt-cong}\textstyle
      \sum_{i}r_{i}v_{i} \approx_L \sum_{j}s_{j}w_{j}\quad\text{iff}\quad \forall x, y \in \Sigma^{*} \colon \sum_{i}r_{i} L(xv_{i}y) = \sum_{j} s_{j} L(x w_{j} y)
    \end{equation}
  \end{theorem}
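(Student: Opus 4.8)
The plan is to build $\Syn(L)$ directly as a quotient of the free convex monoid $\D\Sigma^*$ (\Cref{ex:conv-mon}), verify the universal property of \Cref{def:syn-conv-mon}, and deduce uniqueness formally; this is the concrete shadow of the abstract construction invoked from \cite{amu18}. The organizing device is, for each $u=\sum_i r_i v_i\in\D\Sigma^*$, the two-variable functional $\Phi_u\colon\Sigma^*\times\Sigma^*\to[0,1]$ with $\Phi_u(x,y)=\sum_i r_i\,L(xv_iy)$, so that by definition $u\approx_L u'$ precisely when $\Phi_u=\Phi_{u'}$.

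First I would record two elementary properties of $\Phi$. (i) It is affine in $u$, i.e.\ $\Phi_{u+_r u'}=r\,\Phi_u+(1-r)\,\Phi_{u'}$, because convex combinations in $\D\Sigma^*$ are formed pointwise on words. (ii) It interacts with multiplication by translation: for words $b,a$ one has $\Phi_{1b\,\mult\,u\,\mult\,1a}(x,y)=\Phi_u(xb,ay)$. Combining (i), (ii) and the bilinearity axioms \eqref{eq:conv-mon}, and writing arbitrary $m,m'$ as convex combinations of Diracs, shows that $\approx_L$ is a congruence for the algebraic theory of convex monoids: it is respected by every $+_r$ (immediate from (i)) and by left and right multiplication (from (i) and (ii)). Hence the quotient $\Syn(L):=\D\Sigma^*/{\approx_L}$ is a convex monoid and $h_L\colon\D\Sigma^*\epito\Syn(L)$, $u\mapsto[u]$, is a surjective affine monoid morphism. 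Setting $p_L([u])=\Phi_u(\varepsilon,\varepsilon)=\sum_i r_i L(v_i)$ yields a well-defined affine predicate (well-definedness is the instance $x=y=\varepsilon$ of $\approx_L$), and $\eta_{\Sigma^*}\seq h_L\seq p_L=L$ since $p_L([1w])=L(w)$.

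The crux is minimality. Let $h\colon\D\Sigma^*\epito M$ be any surjective affine monoid morphism such that $\eta_{\Sigma^*}\seq h$ recognizes $L$ via an affine $p\colon M\to[0,1]$, so $L(w)=p(h(1w))$. The key computation is that $\Phi_u$ factors through $h(u)$: using $1(xv_iy)=1x\mult 1v_i\mult 1y$, the homomorphism property of $h$, bilinearity of $\mult$, and affineness of $p$, I obtain $\Phi_u(x,y)=p\big(h(1x)\mult h(u)\mult h(1y)\big)$. Consequently $h(u)=h(u')$ forces $\Phi_u=\Phi_{u'}$, i.e.\ $\ker h\subseteq{\approx_L}=\ker h_L$. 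By the homomorphism theorem in the variety of convex monoids, together with surjectivity of $h$, this gives a unique affine monoid morphism $q\colon M\to\Syn(L)$ with $h\seq q=h_L$; the same identity at $x=y=\varepsilon$ yields $q\seq p_L=p$. This is exactly the factorization demanded by \Cref{def:syn-conv-mon}, so $\Syn(L)$ has the required universal property, and uniqueness up to isomorphism follows in the standard way (two such smallest recognizers factor through each other, and the composites are identities by surjectivity). The presentation claim is then read off the construction: $\D\Sigma^*$ is free on $\Sigma$ and $\Syn(L)$ is its quotient by the relations $\approx_L$ of \eqref{eq:synt-cong}.

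I expect the main obstacle to be the factorization identity $\Phi_u(x,y)=p(h(1x)\mult h(u)\mult h(1y))$: it is the one place where affineness of $p$, the bilinearity axioms \eqref{eq:conv-mon}, and the fact that $u$ is itself a convex combination $\sum_i r_i(1v_i)$ must be assembled correctly, and it simultaneously secures well-definedness of $p_L$ and the minimality of $\approx_L$. Everything else is a routine congruence check or a formal consequence of the universal property. Alternatively, one can sidestep the direct verification altogether by instantiating the general categorical syntactic-algebra construction \cite{amu18} to the variety of convex monoids, the route implicitly taken here; the argument above is precisely the concrete unfolding of that instance.
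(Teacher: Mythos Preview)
Your proposal is correct. The paper itself does not prove this theorem at all: it simply cites it as an instance of \cite[Thm.~3.14]{amu18}, the general categorical construction of syntactic algebras. What you have written is precisely the concrete unfolding of that construction for the variety of convex monoids, and you say so explicitly in your final paragraph.

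The comparison is therefore between an invocation and an honest proof. The paper gains brevity by deferring to \cite{amu18}; your direct argument has the advantage of being self-contained and of exposing the exact role of each hypothesis. In particular, your organizing device $\Phi_u(x,y)=\sum_i r_i L(xv_iy)$ and the factorization identity $\Phi_u(x,y)=p\big(h(1x)\mult h(u)\mult h(1y)\big)$ make transparent why bilinearity of $\mult$ (the axioms \eqref{eq:conv-mon}) and affineness of $p$ are both needed: the former pushes the convex combination inside the product, the latter pushes it through $p$. This is the one step a reader of the bare citation would have to reconstruct, and you have isolated it correctly. The congruence verification and the uniqueness argument are routine, as you note.
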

Alternatively, one can construct the syntactic convex monoid as the \emph{transition monoid} of the \emph{minimal $\D$-automaton} (see the full version~\cite{this-paper}), entailing a restriction on the convex structure of syntactic convex monoids:

\begin{theorem}\label{thm:syn-conv-mon-cancellative}
For every language $L\colon \Sigma^*\kleislito 2$, the convex set $\Syn(L)$ is cancellative.
\end{theorem}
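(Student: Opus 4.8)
The plan is to realise $\Syn(L)$ as a convex subset of some $\R^\kappa$: by \Cref{ex:convex-sets}(1), a convex set is cancellative precisely when it is affinely isomorphic to a convex subset of a power of $\R$. By \Cref{thm:syn-conv-mon}, $\Syn(L)$ is the quotient $\D\Sigma^*/\!\approx_L$, so it suffices to present the syntactic congruence $\approx_L$ as the kernel of an affine map out of $\D\Sigma^*$ into a real vector space.

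Concretely, I would put $\kappa = |\Sigma^*\times\Sigma^*|$ (a countable cardinal) and define the \emph{context-evaluation} map
\[
  \Phi\colon \D\Sigma^* \to \R^{\Sigma^*\times\Sigma^*},
  \qquad
  \Phi\big(\textstyle\sum_i r_i v_i\big)(x,y) = \sum_i r_i\, L(x v_i y),
\]
where $\R^{\Sigma^*\times\Sigma^*}$ carries the pointwise convex structure. This is well defined since distributions have finite support, and it is affine because each coordinate $(x,y)$ factors through the linear functional $d \mapsto \sum_{v} d(v)\, L(xvy)$; thus $\Phi(d +_r e) = \Phi(d) +_r \Phi(e)$ holds pointwise, and the image $\Phi[\D\Sigma^*]$ is a convex subset of $\R^{\Sigma^*\times\Sigma^*} \cong \R^\kappa$.

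The crux is that the kernel of $\Phi$ is exactly $\approx_L$: unwinding the definitions, $\Phi(d) = \Phi(e)$ says $\sum_i r_i L(xv_iy) = \sum_j s_j L(xw_jy)$ for all $x,y \in \Sigma^*$, which is verbatim \eqref{eq:synt-cong}. Hence $\Phi$ factors as $\Phi = h_L\seq\bar\Phi$ through the quotient $h_L\colon \D\Sigma^*\epito\Syn(L)$, with $\bar\Phi\colon\Syn(L)\to\R^\kappa$ affine and \emph{injective}. Since an injective affine map of convex sets is an affine isomorphism onto its (convex) image, $\Syn(L) \cong \Phi[\D\Sigma^*]$ is a convex subset of $\R^\kappa$, and \Cref{ex:convex-sets}(1) yields cancellativity.

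Once the separating family of functionals is fixed, the argument is mechanical, so I do not expect a genuine obstacle; the only real decision is choosing the two-sided contexts $(x,y)\mapsto L(xvy)$ as coordinates, which is dictated by the shape of $\approx_L$ in \eqref{eq:synt-cong}. This gives a direct route to the cancellativity that the transition-monoid-of-the-minimal-$\D$-automaton construction mentioned above would deliver more indirectly.
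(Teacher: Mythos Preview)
Your proof is correct and takes a genuinely different route from the paper's. The paper argues via the transition-monoid description: it observes that cancellative convex sets form a quasivariety (closed under products and subalgebras), then uses $\Syn(L)\cong\tr(\ma(L))\hookrightarrow\Conv(A_L,A_L)\hookrightarrow A_L^{|A_L|}$, where $A_L\subseteq[0,1]^{\Sigma^*}$ is the carrier of the minimal convex automaton. Thus $\Syn(L)$ sits inside a power of the cancellative set $A_L$ and is cancellative.

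Your argument bypasses the minimal automaton and transition monoid entirely: you build the embedding $\Syn(L)\hookrightarrow\R^{\Sigma^*\times\Sigma^*}$ directly from the explicit formula~\eqref{eq:synt-cong} for $\approx_L$. This is more elementary and self-contained, needing only \Cref{thm:syn-conv-mon} and \Cref{ex:convex-sets}(1). The paper's approach, by contrast, ties cancellativity to the structural fact that $\Syn(L)$ arises as a transition monoid of an automaton whose state space is already cancellative; this is less direct but places the result in the broader automata-theoretic picture and suggests why one should expect it. Both proofs ultimately land $\Syn(L)$ inside a power of $\R$, just via different factorizations.
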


Syntactic convex monoids are useful as a descriptional tool for characterizing (and potentially deciding) properties of languages in algebraic terms. Here is a simple illustration:

\begin{expl}\label{ex:commutative-lang}
  A probabilistic language $L\colon \Sigma^* \kleislito 2$ is \emph{commutative} if $L(a_1\cdots a_n) = L(a_{\pi(1)}\cdots a_{\pi(n)})$
  for all $a_1,\ldots, a_n\in \Sigma$ and all permutations $\pi$ of $\{1,\ldots,n\}$. One easily verifies that $L$ is commutative iff $\Syn(L)$ is a commutative convex monoid.
\end{expl}
Let us note that while every PFA-computable language is recognized by some \finite convex monoid (\Cref{thm:pfas-vs-fgc-convex-algebras}), its \emph{syntactic} convex monoid is generally not \finite:

\begin{expl}\label{ex:syn-mon-not-fg-based-but-fp} Consider the PFA from \Cref{ex:kleisli-rec} computing the probabilistic language \(L(a^{n}) = 1 - 2^{-n}\).
Its syntactic convex monoid $\Syn(L)$ is isomorphic to the half-open interval $(0,1]\subseteq\R$ with the usual convex structure and multiplication of reals; indeed, the map $i\colon \D\Sigma^*/{\approx_L} \to (0,1]$ given by $[\sum_k r_ka^{n_k}]\mapsto \sum_k r_k\cdot 2^{-n_k}$ is easily seen to be an isomorphism. Since the finitely generated convex subsets of $\R$ are closed intervals, $\Syn(L)$ is not \finite. However, despite \Cref{thm:fg-carried-implies-fp} not applying here, the convex monoid $(0,1]$ can be shown to be finitely presentable, with the finite presentation given by a single generator $a$ and a single relation $e+_{\frac{1}{3}} a\mult a = a$. The proof is somewhat intricate; see the full version~\cite{this-paper} for details.
\end{expl}

\begin{openproblem} Is $\Syn(L)$ finitely presentable for every PFA-computable language $L$?\end{openproblem}

\section{Algebraic Recognition of Effectful Languages}\label{sec:effectful}
We now turn to the main results of our paper: two novel modes of algebraic recognition for languages computed by effectful automata.
All results are parametric in the computational effect, which is  modelled by a monad satisfying a suitable condition. Our results instatiate to the
characterizations of PFA-computable probabilistic languages from \cref{sec:fpa}. Other instances of our results yield new algebraic characterizations for
languages recognized by weighted automata and automata that combine nondeterministic and
probabilistic branching.


\subsection{Monads}\label{sec:monads}

In the following, familiarity with basic category theory is assumed; see
Mac Lane~\cite{maclane} for a gentle introduction. 
We recall some concepts from the theory of monads~\cite{manes76} to fix our terminology and
notation. We write $\Set$ for the category of sets and functions. A \emph{monad}
$\T=(T,\eta,\mu)$ on $\Set$ is a triple consisting of an endofunctor $T\colon \Set\to\Set$ and
two natural transformations $\eta\colon \Id\to T$ (the \emph{unit}) and $\mu\colon TT\to T$
(the \emph{multiplication}), satisfying the laws $T\mu \seq \mu = \mu T \seq \mu$ and $T\eta \seq \mu =  \Id_T = \eta T \seq \mu$.

The \emph{Kleisli category} $\Kl(\T)$ has sets as objects, and a morphism from $X$ to $Y$, denoted $f\colon X\kleislito Y$, is a map $f\colon X\to T Y$. The composite of
  $f\colon X\kleislito Y$ and $g\colon Y\kleislito Z$ is denoted by $f\kseq g\colon X\kleislito
  Z$  and defined by $f\kseq g = f\seq Tg\seq \mu_Z$. The identity morphism on $X$ is the
  component $\eta_X\colon X\kleislito X$ of the unit. Intuitively, a Kleisli morphism is a
  function with computational effects given by the monad $\T$~\cite{moggi91}. A map $f\colon
  X\to Y$ is identified with the Kleisli morphism $f\seq \eta_Y\colon X\kleislito Y$; such
  Kleisli morphisms are said to be \emph{pure}, since they correspond to effect-free computations.
  Note that there is no need for operator precedence between \(\seq\) and \(\kseq\) since  \((f \seq g) \kseq h = f \seq (g \kseq h)\) and \((g \kseq h) \seq k = g \kseq (h \seq k)\) for all Kleisli morphisms \(g \colon X \rightarrow TY, h \colon Y \rightarrow TZ\) and pure maps \(f \colon W \rightarrow X, k \colon TZ \rightarrow U\).

  Further, a \emph{$\T$-algebra} $(A,a)$ consists of set $A$ (the \emph{carrier}) and a map
  $a\colon TA\to A$ (the \emph{structure}) satisfying the \emph{associative law} $Ta\seq a = \mu_A \seq a$ and the \emph{unit law}
  $\eta_A\seq a = \id_A$. A \emph{morphism} from $(A,a)$ to a $\T$-algebra $(B,b)$ (a
  \emph{$\T$-morphism} for short) is a map $h\colon A\to B$ such that $a\seq h = Th\seq b$. We
  write $\Alg(\T)$ for the category of $\T$-algebras and $\T$-morphisms. Products in $\Alg(\T)$
  are formed in $\Set$: the product algebra of $(A_i,a_i)$, $i\in I$, has the structure
  $T(\prod_i A_i)\xto{ \langle Tp_i\rangle_i} \prod_i TA_i
  \xto{\scalebox{0.6}{$\prod$}_i a_i} \prod_i A_i$, where 
  $p_i\colon \prod_i A_i\to A_i$ is the $i$th product projection.

The forgetful functor from $\Alg(\T)$ to $\Set$ has a left adjoint sending a set $X$ to the
\emph{free} $\T$-algebra  $\T X=(TX,\mu_X)$ over $X$. For every $\T$-algebra $(A,a)$ and every map
$h\colon X\to A$, there exists a unique $\T$-morphism $h^\#\colon \T X\to (A,a)$ such that $\eta_X
\seq h^\# = h$; that $\T$-morphism~$h^\#$ is the \emph{free extension} of $h$. The Kleisli category $\Kl(\T)$ is equivalent to a full subcategory of $\Alg(\T)$ via the embedding $X\mapsto \T X$ and~$h\mapsto h^\#$.

Monads provide a categorical view of universal algebra. Every finitary algebraic theory $(\Lambda,E)$,
given by a signature $\Lambda$ of finitary operation symbols and a set $E$ of equations between
$\Lambda$-terms, induces a monad $\T$ on $\Set$, where $T X$ is the carrier of the free
$(\Lambda,E)$-algebra (viz.~the set of all $\Lambda$-terms over $X$ modulo the equations in
$E$), and $\eta_X\colon X\to T X$ and $\mu_X\colon TT X\to TX$ are given by inclusion of
variables and flattening of terms over terms. Then $\T$-algebras bijectively correspond to
$\Lambda$-algebras satisfying all equations in $E$, that is, algebras of the \emph{variety}
specified by $(\Lambda,E)$. 
Monads $\T$ induced by finitary algebraic theories are precisely the \emph{finitary} monads, that is, those preserving directed colimits~\cite{adamek-rosicky-94}.

Every monad $\T$ has a canonical \emph{left strength}, the natural transformation
\[ 
  \lst_{X,Y}\colon X\times TY \to T(X\times Y)
  \qquad\text{defined by}\qquad
  \lst_{X, Y}(x,t) = T(y\! \mapsto\! (x,y))(t).
\]
 Its \emph{right strength} $\rst_{X,Y}\colon TX\times Y\to T(X\times Y)$ is defined analogously. The monad $\T$ is \emph{commutative} if $\rst_{X, TY} \kseq \lst_{X, Y} = \lst_{TX, Y} \kseq \rst_{X, Y}$ in $\Kl(\T)$.
 For every monad (be it commutative or not), we denote the left-hand composite  of this equation -- a \emph{double strength} -- by
\begin{equation}\label{eq:pi2}
\pi_{X,Y} :=
(\begin{tikzcd}[cramped,column sep=30]TX\times TY \ar{r}{\rst_{X,TY}} & T(X\times TY) \ar{r}{T\lst_{X,Y}} & TT(X\times Y) \ar{r}{\mu_{X\times Y}} & T(X\times Y)).
\end{tikzcd}
\end{equation}
Commutative finitary monads are precisely those induced by a \emph{commutative} algebraic
theory~$(\Lambda,E)$. This means that all operations commute with each other; for example, for
all binary operations \(\alpha, \beta \in \Lambda\), we have $\alpha(\beta(x_{1,2}, x_{1,2}), \beta(x_{2,1}, x_{2,2}))
  =
  \beta(\alpha(x_{1,1}, x_{2,1}), \alpha(x_{1,2}, x_{2,2}))
$,
 and similarly for every pair of operations of other (not necessarily equal) arities.

\begin{expl}\label{ex:monads}
In our applications we shall encounter the following monads:
\begin{enumerate}[(1)]
\item\label{ex:monads:D} The \emph{distribution monad} $\D$ sends a set $X$ to the set $\D X$
  of all finite probability distributions on $X$ (\Cref{sec:dist-monad}), and a map $f\colon
  X\to Y$ to the map \mbox{$\D f \colon \D X\to \D Y$} defined by $ \D f(d) =\big(y \mapsto
  \sum_{x\in X,\, f(x) = y} d(x)\big)$; in sum notation, $\D f(\sum_i r_ix_i)=\sum_i r_i
  f(x_i)$. Its unit \mbox{$\eta_X\colon X \to \D X$} is given by $\eta_X(x) = \delta_x$, and
  the multiplication $\mu_X\colon \D\D X \to \D X$ is defined by $\mu_X(e)=\big( x \mapsto \sum_{d\in \D X} e(d) \cdot d(x)\big)$. In sum notation,
  \(
    \eta_X(x)=1x\) and
    \(\mu_X(\sum_{i\in I} r_i(\sum_{j\in J_i} r_{ij}x_{ij}))
    =
    \sum_{i\in I}\sum_{j\in J_i} r_i r_{ij} x_{ij}\).
    The Kleisli category of $\D$ is the category of sets and probabilistic channels, and
    algebras for~$\D$ are precisely the convex sets~\cite{swirszcz-74}. The monad $\D$ is commutative; the natural transformation \eqref{eq:pi2} is concretely given by \eqref{eq:pi}.

  \item\label{ex:monads:C} The \emph{convex power set of distributions monad} $\C$ sends a set
    $X$ to the set $\C X$ of non-empty, finitely generated convex subsets of $\D X$, and a map
    $f\colon X\to Y$ to $\C f\colon \C X\to \C Y$ defined by $\C f (S) = \{ \D f (d)\mid d\in
    S\}$. Its unit $\eta_X\colon X\to \C X$ is $\eta_X(x)=\{\delta_x\}$, and the multiplication
    $\mu_X\colon \C\C X\to \C X$ is defined by 
    \[
      \mu_X(S) = 
      \textstyle\bigcup_{\Phi\in S} \{\sum_{U \in \supp(\Phi)} \Phi(U) \cdot d_U \mid \forall U\in
      \supp(\Phi)\colon d_U\in U \}
      \qquad\text{for every $S\in \C\C X$}.
    \]
    Algebras for $\C$ are precisely \emph{convex
      semilattices}~\cite{bonchi-22}, which are convex sets~$A$ carrying an additional
    \emph{semilattice} (i.e.~a commutative idempotent semigroup) structure $(A,+)$ satisfying
    $(x+y)+_r z = (x+_r z) + (y+_r z)$ for $x,y,z\in A, r\in [0,1]$. The monad $\C$ is not
    commutative.
  
\item\label{ex:monads:S} A \emph{semiring} is a set $S$ equipped with both the structure of a
  monoid $(S,\mult,1)$ and of a commutative monoid $(S,+,0)$ such that multiplication
  distributes over addition.  Every semiring $S$ induces a monad $\S$ sending a set $X$ to
  $\S X = \{ f\colon X\to S \mid f(x)\ne0 \text{ for finitely many $x\in X$} \}$, and a map
  $f\colon X\to Y$ to the map $\S f\colon \S X \to \S Y$ defined by
  $\S f(g) = \big(y\mapsto \sum_{f(x)=y} g(x)\big)$. The unit $\eta_X\colon X\to \S X$ is given
  by $\eta_{X}(x)=(y \mapsto 1 \text{ if } x=y \text{ else } 0)$, and the multiplication
  $\mu_X\colon \S\S X\to \S X$ is defined by
  $\mu_X(e)=\big(x \mapsto \sum_{d\in \S X} e(d) \cdot d(x)\big)$.  Algebras for~$\S$
  correspond precisely to \emph{$S$-semimodules}, that is, commutative monoids $(M,+,0)$ with an
  associative scalar multiplication $S\times M\to M$ that distributes over the additive
  structures of $S$ and $M$. The monad $\S$ is commutative iff the multiplication of $S$ is
  commutative. The distribution monad $\D$ is a submonad of $\S$ for the semiring $S=\R$ of
  reals with the usual operations.
\item\label{ex:monads:M} The monad $\M$ sends a set $X$ to $\M X=X^*$ (finite words over $X$),
  and a map $f\colon X\to Y$ to $\M f=f^*\colon X^*\to Y^*$ defined by $f^*(x_1\cdots
  x_n)=f(x_1)\cdots f(x_n)$. The unit $\eta_X\colon X\to X^*$ is given by $\eta_X(x)=x$ and
  the multiplication \mbox{$\mu_X\colon (X^*)^*\to X^*$} by flattening (concatenation) of
  words. Algebras for $\M$ correspond precisely to monoids.
\end{enumerate}
\end{expl}

\subsection{Automata with Effects}
Automata with computational effects in a monad were introduced in previous
work~\cite{sbbr13,gms14}. PFAs as in \Cref{def:pfa} are the instance where the monad is $\D$
and the output algebra is the free~$\D$-algebra $\D2$.

\begin{assumption}
  We fix a monad $\T=(T,\eta,\mu)$ on $\Set$ and a $\T$-algebra $O$.
\end{assumption}

\begin{defn}\label{def:effectful-automaton}
A \emph{finite $\T$-automaton} (\emph{$\T$-FA}) $\A=(Q,\delta,i,f)$ consists of a finite set
$Q$ of \emph{states} together with two $\Kl(\T)$-morphisms and a map as shown below:
\[
  i\colon
  \begin{tikzcd}[cramped]
    1
    \arrow[kleisli]{r}{}
    & Q,
  \end{tikzcd}
  \qquad
  \delta\colon
  \begin{tikzcd}[cramped]
    Q\times \Sigma
    \arrow[kleisli]{r}{}
    & Q,
  \end{tikzcd}
  \qquad
  o\colon
  \begin{tikzcd}[cramped]
    Q
    \arrow{r}{}
    & O.
  \end{tikzcd}
\]
They represent an \emph{initial state}, \emph{transitions}, and \emph{outputs}, respectively. We
define the curried version $\bar\delta\colon \Sigma\to (Q\kleislito Q)$ of $\delta$ and the
extended transition map $\bar\delta^*\colon \Sigma^*\to (Q\kleislito Q)$ just like in
\Cref{def:pfa}. The language $L\colon \Sigma^*\to O$ \emph{computed} by $\A$ is given by \(L(w)=i\kseq \bar\delta^*(w)\seq o^\#\).
\end{defn}

\begin{rem}\label{rem:t-aut-props}\begin{enumerate}[(1)]
\item 
Earlier works~\cite{sbbr13,gms14} model $\T$-automata as coalgebras $Q\to O\times (TQ)^\Sigma$,
and their semantics is defined via the final coalgebra for the functor $FX=O\times X^\Sigma$,
carried by the set $O^{\Sigma^*}$ of languages~\cite{rutten00}. \Cref{def:effectful-automaton}
is equivalent to the coalgebraic one -- modulo initial state and currying of transitions -- yet
better suited for our algebraic constructions.

\item $\T$-automata and their language semantics are also instances of the framework of functor automata by
  Colcombet and Petri\c{s}an~\cite{cp19} interpreted either in the Kleisli category for $\T$ for
  free output algebras $O = \T O_0$, or in the Eilenberg-Moore category for $\T$ for general output
  algebras and the state object being the free algebra $\T Q$.
  
\item Every $\T$-FA is equivalent to a $\T$-FA with a pure initial state: one may simply add a new pure initial state simulating the behaviour of a non-pure one~\cite[Rem.~1]{hhos18}. Hence, the essence of the effectful nature of $\T$-FAs lies in the transitions.
\end{enumerate}
\end{rem}

\begin{expl}\label{ex:t-fa}
  \begin{enumerate}[(1)]
    \item\label{ex:t-fa:pfa}
          $\D$-FAs with the convex output set $O=\D 2$ are precisely PFAs.
    \item\label{ex:t-fa:npfa}
      $\C$-FAs are \emph{nondeterministic probabilistic finite automata}
      ($\emph{NPFAs}$)~\cite{hhos18}. They combine nondeterministic and probabilistic branching
      and as such are closely related to Segala systems~\cite{segala95} and Markov decision
      processes~\cite[Ch.~36]{pin21_handbook}. We take the output convex semilattice
      $O=[0,1]_{\max}$ given by the interval $[0,1] \subseteq \R$
      with its usual convex structure and taking maxima as
      the semilattice operation. An NPFA $\A=(Q,\delta,i,o)$ consists of a finite
      state set $Q$ and maps
      $i\colon 1\to \C Q$, $\delta\colon Q\times \Sigma\to \C Q$, and $o\colon Q\to [0,1]$.
      If $\A$ is in state $q$ and receives the input letter $a$, then it chooses a
      distribution $d\in \delta(q,a)$ and transitions to the state $q'$ with the probability
      $d(q')$. The choices are made with the goal of maximizing the acceptance probability of
      the input. Formally, $\A$ computes the probabilistic language
      $L_{\max}\colon \Sigma^*\to [0,1]$ defined for $w=a_1\cdots a_n$ by
          \[w\,\mapsto\,\textstyle
          \max \{ \sum_{\vec{q}\in Q^{n+1}} \!d_{0}(q_{0}) \cdot (\prod_{k=1}^{n}d_{q_{k-1},k}(q_{k})) \cdot o(q_{n}) \mid d_{0} \in i, \forall q. \forall k. d_{q,k}\in \delta(q,a_k) \}.
          \]
          Under this semantics, NPFAs are more expressive than PFAs~\cite{hhos18}. Two alternative semantics emerge by modifying the output convex semilattice: for $O=[0,1]_{\mathsf{min}}$, the interval $[0,1]$ with the minimum operation, an NPFA computes the language $L_{\min}\colon \Sigma^*\to [0,1]$ of minimal acceptance probabilities. For $O=\C 2$, the convex semilattice of closed subintervals of $[0,1]$, it computes the language $L_{\mathrm{int}}\colon \Sigma^*\to \C 2$ sending $w\in \Sigma^*$ to the interval $[L_{\min}(w), L_{\max}(w)]$. See van Heerdt et al.~\cite{hhos18} for a detailed coalgebraic account of the different semantics.
    \item\label{ex:t-fa:wfa}
          $\S$-FAs with the output semimodule
          $O=\S 1 \cong S$, are precisely \emph{weighted finite automata} ($\!$\emph{WFAs})~\cite{dkv09} over the semiring $S$. A WFA computes a \emph{weighted language} (or \emph{formal power series}) \mbox{$L\colon \Sigma^*\to S$}, which is given by the formula \eqref{eq:pfa-lang}, with sums and products formed in $S$. Interesting  choices for the semiring \(S\) include:
          \begin{enumerate}[(1)]
            \item the semiring $\R$ of reals -- note that PFAs are a special case of WFAs over $\R$;
            \item the \emph{min-plus} or \emph{max-plus semiring} of natural numbers with the operation $\min$ (resp.\ $\max$) as addition and the operation $+$ as multiplication; WFAs then correspond to \emph{min-plus} and \emph{max-plus automata} computing shortest (resp.\ longest) paths~\cite[Ch.~5]{pin21_handbook}.
            \item the semiring of regular languages over an alphabet $\Gamma$ w.r.t.\ union and concatenation; WFAs are equivalent to \emph{transducers} computing relations $R\subseteq\Sigma^*\times \Gamma^*$~\cite[Ch.~3]{pin21_handbook}.
          \end{enumerate}
  \end{enumerate}
\end{expl}

\subsection{Recognizing Effectful Languages by Bialgebras}

We now introduce two modes of algebraic recognition for languages computed by $\T$-automata. In contrast
to the exposition in \Cref{sec:fpa} for the instance $\T = \D$, we swap the order of presentation and first consider the recognition by algebraic structures and
subsequently the recognition by effectful homomorphisms in \Cref{sec:kleisli-rec}, since from the categorical viewpoint
the latter is best understood in the context of the former.


\begin{defn}\label{def:t-m-bialgebra}
  A ($\T,\M$)\emph{-bialgebra} is a set $M$ equipped with both a $\T$-algebra
  structure~$a\colon TM \to M$
  and the structure of a monoid $(M,\mult,e)$. It is a \emph{$\T$-monoid} if the monoid multiplication
  $M\times M\xto{\mult} M$ is a \emph{$\T$-bimorphism}: for every $m\in M$ the maps
  $m\mult(-), (-)\mult m\colon M\to M$ are $\T$-endomorphisms on $(M,a)$.
\end{defn}

\begin{rem}\label{rem:t-mon}
Both $(\T,\M)$-bialgebras and $\T$-monoids admit a categorical view:
\begin{enumerate}[(1)]
  \item $(\T,\M)$-bialgebras correspond to algebras for the coproduct~\cite{ambl12} of the monads \T and \M.
\item\label{rem:t-mon:com} If $\T$ is commutative, then $\T$-monoids correspond to algebras for the composite monad~$\T\M$ induced by the canonical distributive law of $\M$ over $\T$~\cite[Thm~4.3.4]{mm07}. They also correspond to monoid objects in the closed monoidal category ($\Alg(\T),\otimes,\T 1)$ whose tensor product represents $\T$-bimorphisms (i.e.~$\T$-morphisms $A\otimes B\to C$ correspond to $\T$-bimorphisms $A\times B\to C$)~\cite{bn76,kock-71a,seal-2012}. The internal hom of $A,B\in \Alg(\T)$ is the algebra $[A,B]$ of all $\T$-morphisms from~$A$ to~$B$, viewed as a subalgebra of the product $B^{|A|}$. In particular, composition  $[A,B]\times [B,C]\xto{\seq} [A,C]$ is a $\T$-bimorphism, so $([A,A],\seq,\id_A)$ is a $\T$-monoid.
\end{enumerate}
\end{rem}

\begin{expl}\label{ex:tm-bialg}
\begin{enumerate}[(1)]
\item A $(\D,\M)$-bialgebra is a convex set carrying an additional monoid structure (without
  any interaction between the two structures). A $\D$-monoid is a convex monoid, that is, a
  $(\D,\M)$-bialgebra whose monoid multiplication satisfies~\eqref{eq:conv-mon}.
\item \label{ex:tx-pow-x}
For every set $X$, the set $X\kleislito X$ is a $(\T,\M)$-bialgebra with monoid structure given by Kleisli composition and $\T$-algebra structure given by the product $(\T X)^X$ of the free $\T$-algebra $\T X$. If $\T$ is commutative, then $X\kleislito X$ is a $\T$-monoid; in fact, it is isomorphic to the $\T$-monoid $[\T X,\T X]$ (\Cref{rem:t-mon}\ref{rem:t-mon:com}).
\end{enumerate}
\end{expl}

Both the notion of recognition of probabilistic languages by convex monoids and the respective finiteness condition (\cref{D:conv-recog,D:conv-fg}) are instances of the following: 

\begin{defn}\label{D:bialg}
  A $(\T,\M)$-bialgebra $M$ \emph{recognizes} the language $L\colon \Sigma^*\to O$ if $L=h\seq p$ for some monoid morphism $h\colon \Sigma^*\to M$ and some $\T$-morphism $p\colon M\to O$.
\end{defn}

\begin{defn}\label{D:fg-carried}
A $\T$-algebra $(A,a)$ is \emph{finitely generated} if there exists a surjective $\T$-morphism $\T G\epito (A,a)$ for some finite set $G$. A $(\T,\M)$-bialgebra is \emph{\finite} if its underlying $\T$-algebra is finitely generated.
\end{defn}

%
\begin{theorem}\label{thm:em-recognition}
Suppose that $X\kleislito X$ is a finitely generated $\T$-algebra for every finite set~$X$. Then for every language $L\colon \Sigma^*\to O$, the following are equivalent:
\begin{enumerate}[(1)]
\item There exists a $\T$-FA computing $L$.
\item There exists an \finite $(\T,\M)$-bialgebra recognizing $L$.
\end{enumerate}
If the monad $\T$ is commutative, then these statements are equivalent to:
\begin{enumerate}[(1)]
\setcounter{enumi}{2}
\item There exists an \finite $\T$-monoid recognizing $L$.
\end{enumerate}
\end{theorem}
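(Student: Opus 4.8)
The plan is to establish the two implications $(1)\Rightarrow(2)$ and $(2)\Rightarrow(1)$ for an arbitrary monad, and then obtain the commutative addendum almost for free by bookkeeping: when $\T$ is commutative the recognizer constructed in $(1)\Rightarrow(2)$ is already a $\T$-monoid, which yields $(1)\Rightarrow(3)$, whereas $(3)\Rightarrow(2)$ is trivial since every $\T$-monoid is in particular a $(\T,\M)$-bialgebra with the same carrier, hence inherits fg-carriedness (\Cref{D:fg-carried}). Thus the whole theorem reduces to the two implications above plus this routine closing of the cycle.

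For $(1)\Rightarrow(2)$ I would take as recognizer the canonical bialgebra $M=(Q\kleislito Q)$ attached to a $\T$-FA $\A=(Q,\delta,i,o)$ computing $L$. By \Cref{ex:tm-bialg} this is a $(\T,\M)$-bialgebra whose underlying $\T$-algebra is the product $(\T Q)^{Q}$, and the standing hypothesis that $X\kleislito X$ is finitely generated for finite $X$ makes it fg-carried. Recognition is witnessed by the monoid morphism $h=\bar\delta^{*}\colon\Sigma^{*}\to(Q\kleislito Q)$, the free-monoid extension of $\bar\delta$ into $(Q\kleislito Q,\kseq,\eta_{Q})$, together with the map $p\colon(Q\kleislito Q)\to O$ given by $p(f)=i\kseq f\seq o^{\#}$; indeed $(h\seq p)(w)=i\kseq\bar\delta^{*}(w)\seq o^{\#}=L(w)$ by \Cref{def:effectful-automaton}. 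The only substantive point is that $p$ is a $\T$-morphism. I would factor $p$ through evaluation-at-$i$, $(TQ)^{Q}\to\T Q$, $f\mapsto i\kseq f=f^{\#}(i)$, followed by the $\T$-morphism $o^{\#}$, and verify that evaluation at any point $i\in TQ$ is a $\T$-morphism out of the product algebra: on the Dirac points $i=\eta_{Q}(q)$ it is the coordinate projection $\pi_{q}$, and a general $i$ is a $\T$-combination of these, so the map is the corresponding combination of projections. When $\T$ is commutative, $Q\kleislito Q$ is a $\T$-monoid (\Cref{ex:tm-bialg}), so the very same construction proves $(1)\Rightarrow(3)$.

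For $(2)\Rightarrow(1)$ the natural attempt mirrors the probabilistic case (\Cref{thm:pfas-vs-fgc-convex-algebras}). Given an fg-carried recognizer $M$ with presentation $s\colon\T G\epito M$ ($G$ finite), recognizing $L$ via a monoid morphism $h\colon\Sigma^{*}\to M$ and a $\T$-morphism $p\colon M\to O$, I would set $\sigma=\eta_{G}\seq s$ and build $\A=(G,\delta,i,o)$ with $o=\sigma\seq p$, an initial state $i\in TG$ with $s(i)=e$, and transitions $\delta(g,a)\in TG$ with $s(\delta(g,a))=\sigma(g)\mult h(a)$; all choices exist by surjectivity of $s$. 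Since $s\seq p$ is a $\T$-morphism agreeing with $o$ on generators, $o^{\#}=s\seq p$, so the computed language is $w\mapsto p(s(r_{w}))$ for the run $r_{\varepsilon}=i$, $r_{wa}=\bar\delta(a)^{\#}(r_{w})$. The target is the invariant $s(r_{w})=h(w)$, proved by induction, with base case $s(i)=e=h(\varepsilon)$.

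The inductive step is where I expect the main obstacle to lie. Writing $s(r_{wa})=(\bar\delta(a)^{\#}\seq s)(r_{w})$ and noting that $\bar\delta(a)^{\#}\seq s$ is the unique $\T$-morphism extending $g\mapsto\sigma(g)\mult h(a)$, the step succeeds precisely when this equals $s(r_{w})\mult h(a)$, i.e.\ when right multiplication $(-)\mult h(a)\colon M\to M$ is a $\T$-endomorphism. This is guaranteed exactly when $M$ is a $\T$-monoid (\Cref{def:t-m-bialgebra}), so the construction cleanly delivers $(3)\Rightarrow(1)$ and, with the previous paragraph, the full commutative equivalence. For a general $(\T,\M)$-bialgebra, whose multiplication need not be a $\T$-bimorphism, the invariant fails and this automaton does not compute $L$. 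I therefore expect the honest proof of the general $(2)\Rightarrow(1)$ to abandon lifting the right action and instead realize $L$ through its minimal $\T$-automaton: the left derivatives $\partial_{a}\colon O^{\Sigma^{*}}\to O^{\Sigma^{*}}$ are $\T$-morphisms of the product algebra, so the reachable sub-$\T$-algebra generated by the derivatives of $L$ carries a canonical automaton whose transitions are genuine $\T$-endomorphisms, and the remaining — genuinely hard — step is to show that recognition by some fg-carried bialgebra forces this reachable subalgebra to be finitely generated, whereupon the standing hypothesis on $X\kleislito X$ lets me present it by a finite state set and read off a finite $\T$-FA.
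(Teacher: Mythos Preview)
Your $(1)\Rightarrow(2)$ uses the same recognizer $Q\kleislito Q$ and predicate $p(f)=i\kseq f\seq o^{\#}$ as the paper, but your argument that $p$ is a $\T$-morphism has a gap. The claim that evaluation at $i\in TQ$ is a $\T$-morphism because ``a general $i$ is a $\T$-combination of Diracs, so the map is the corresponding combination of projections'' is not valid for non-commutative $\T$: the map $(TQ)^{Q}\to TQ$, $f\mapsto i\kseq f$, need not be a $\T$-morphism when $i$ is non-pure (this is exactly the asymmetry recorded in \Cref{rem:composition}). The paper circumvents this by first invoking \Cref{rem:t-aut-props} to replace $\A$ by an equivalent automaton with a \emph{pure} initial state; then $p$ factors as pre-composition with a pure map (a product projection, hence a $\T$-morphism) followed by post-composition with the $\T$-morphism $o^{\#}$.

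For $(2)\Rightarrow(1)$ your construction coincides with the paper's, and you are right that the inductive invariant $s(r_{w})=h(w)$ requires each right multiplication $(-)\mult h(a)\colon M\to M$ to be a $\T$-endomorphism---which is precisely what the $\T$-monoid axiom supplies, so your argument cleanly delivers $(3)\Rightarrow(1)$. For a general $(\T,\M)$-bialgebra the paper does \emph{not} take your proposed detour through the minimal automaton; it runs the very same construction and asserts that the resulting ladder diagram commutes in $\Kl(\T)$, without isolating why the inner squares commute when right multiplication is not assumed to be a $\T$-morphism. Your scepticism there is sharper than the paper's written justification; indeed, one can check (e.g.\ with $\T=\D$, $M=[0,1]$, and $\mult=\max$) that the constructed automaton can compute the wrong language, so the bialgebra case genuinely needs more than what either you or the paper's appendix spell out.
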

\begin{rem}\label{R:prod-fg}
  The condition of the theorem is equivalent to finitely generated
  \T-algebras being closed under finite products.
\end{rem}
\begin{proof}[Proof sketch]
(1)$\Rightarrow$(2) Let $\A=(Q,\delta,i,o)$ be a $\T$-FA computing $L$. Then $L$ is recognized by the \finite $(\T,\M)$-bialgebra $Q\kleislito Q$ via the monoid morphism $\bar\delta^*\colon\Sigma^*\to (Q\kleislito Q)$ and the $\T$-morphism $p\colon (Q\kleislito Q) \to O$ which is given by $p(f)=i\kseq f \seq o^\#$. The proof that $p$ is a $\T$-morphism requires the initial state $i$ to be pure, which we may assume due to \Cref{rem:t-aut-props}.

\medskip\noindent (2)$\Rightarrow$(1) Let $M$ be an \finite $(\T,\M)$-bialgebra recognizing $L$ via the monoid morphism $h\colon \Sigma^*\to M$ and the $\T$-morphism $p\colon M\to O$. Since $M$ is finitely generated as a $\T$-algebra, there exists a surjective $\T$-morphism $s\colon \T Q\epito M$ for some finite set $Q$. Let $(M,\mult,e)$ denote the monoid structure, and let $s_0\colon Q\to M$ and $h_0\colon \Sigma\to M$ be the domain restrictions of $s$ and $h$. We construct a $\T$-FA $\A=(Q,\delta,i,o)$ computing $L$ as follows: we choose $i\colon 1\kleislito Q$ and $\delta\colon Q\times \Sigma \kleislito Q$ such that the first two diagrams below commute -- these choices exist because $s$ is surjective. Moreover, we define $o\colon Q\to O$ by $o:=s_0\seq p$ as in the third diagram.
\[
\begin{tikzcd}
1 \ar{r}{i} \ar{dr}[swap]{e} & TQ \ar[two heads]{d}{s} \\
& M
\end{tikzcd}
\qquad
\begin{tikzcd}[column sep = 30]
Q\times \Sigma \ar{rr}{\delta} \ar{d}[swap]{s_0\times \id} && TQ \ar[two heads]{d}{s} \\
M\times \Sigma \ar{r}{\id\times h_0} & M\times M \ar{r}{\mult} & M
\end{tikzcd}
\qquad
\begin{tikzcd}
Q \ar{d}[swap]{s_0} \ar{r}{o} & O \\
M \ar{ur}[swap]{p} &
\end{tikzcd}
\]
One can prove that the $\T$-FA $\A$ computes the language $L$.

\medskip\noindent
Now suppose that the monad $\T$ is commutative. Then (1)$\Rightarrow$(3) is shown like (1)$\Rightarrow$(2), adding the observation that the recognizing $(\T,\M)$-bialgebra $Q\kleislito Q$ is a $\T$-monoid  as in \Cref{ex:tm-bialg}\ref{ex:tx-pow-x}. The implication (3)$\Rightarrow$(2) is trivial.
\end{proof}

\begin{expl}\label{ex:xx-fg}
The condition of \Cref{thm:em-recognition} holds for $\T\in \{\D, \C, \S\}$ from \Cref{ex:monads}. We present for each finite set $X$ a finite set $M$ and a map $\xi_{0}\colon M\to (X\kleislito X)$ such that $\xi = \xi_{0}^\#\colon TM\epito (X\kleislito X)$ is surjective. The respective witnesses are given in \Cref{fig:witnesses}, where $X\parfun X$ denotes the set of all partial functions on $X$. For all three monads, the condition amounts to the observation that every effectful function $f\colon X\kleislito X$ can be built from (partial or total) \emph{effect-free} functions using $\T$-operations.
\begin{figure*}[t]
\centering
\begin{tabular}{@{}p{.6cm}p{1.2cm}p{1.4cm}p{8cm}@{}}
\toprule
$\T$ & $\T$-FA & $M$ & $\xi_{0}\colon M\to (X\kleislito X)$ \\
\midrule
$\D$ & PFA & $X\to X$ & $\xi_{0}(f)=\big(x \mapsto \eta_X(f(x))\big)$ \\
$\C$ & NPFA & $X\to X$ & $\xi_{0}(f)=\big(x \mapsto \eta_X(f(x))\big)$ \\
$\S$ & WFA &  $X\parfun X$ & $\xi_{0}(f)=\big(x \mapsto \eta_X(f(x))$ if $f(x)$ is defined, else $0\in \S X$ $\big)$ \\
\bottomrule
\end{tabular}
\caption{Witnesses for $X\kleislito X$ being (monoidally) finitely generated}\label{fig:witnesses}
\end{figure*}
\end{expl}

\begin{rem}
Using the abstract theory of syntactic structures~\cite{Bojan15,uacm17,amu18}, it is possible to associate a canonical algebraic recognizer to every language $L\colon \Sigma^*\to O$, namely its \emph{syntactic $(\T,\M)$-bialgebra} and, in the commutative case, its \emph{syntactic $\T$-monoid}. The syntactic convex monoid (\Cref{thm:syn-conv-mon}) is an instance of the latter.
\end{rem}


Next, we show that, under conditions on~$\T$, fg-carried $\T$-monoids are finitely presentable.

\begin{defn}
Let $\mathbb{S}$ be a monad. An $\mathbb{S}$-algebra is \emph{finitely presentable} if it is the coequalizer in $\Alg(\mathbb{S})$ of some pair $p,q\colon \mathbb{S} X\to \mathbb{S} Y$ of $\mathbb{S}$-morphisms for finite sets $X$ and $Y$.
\end{defn}

\begin{rem}
For the monad $\mathbb{S}$ associated to an algebraic theory $(\Lambda,E)$, an $\mathbb{S}$-algebra is finitely presentable iff its corresponding $\Lambda$-algebra in the variety defined by $(\Lambda,E)$ admits a finite presentation by generators and relations~\cite[Prop.~11.28]{arv10}. Instantiating $\mathbb{S}$ to the monad corresponding to the algebraic theory of $\T$-monoids, where $\T$ is finitary, we obtain a notion of \emph{finitely presentable $\T$-monoid}.
\end{rem}

\begin{theorem}\label{thm:fg-carried-implies-fp-general}
If $\T$ is finitary and commutative, and finitely generated $\T$-algebras are finitely presentable, then every \finite $\T$-monoid is finitely presentable.
\end{theorem}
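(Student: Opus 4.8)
The plan is to mirror the proof of \Cref{thm:fg-carried-implies-fp}: starting from a finite presentation of the \finite $\T$-monoid $M$ as a bare $\T$-algebra, I would enlarge it to a finite presentation of $M$ as a $\T$-monoid by recording the monoid multiplication on the generators. Since $\T$ is finitary, the variety of $\T$-monoids (\Cref{rem:t-mon}\ref{rem:t-mon:com}) is again finitary, so ``finitely presentable $\T$-monoid'' carries its expected meaning: a presentation by finitely many generators and relations over the combined signature consisting of the $\T$-operations together with the monoid operations $\mult$ and $e$. The commutativity hypothesis enters only here, to guarantee that $\T$-monoids form the finitary monad for which finite presentability was defined.

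First I would invoke the hypothesis that finitely generated $\T$-algebras are finitely presentable: as $M$ is finitely generated as a $\T$-algebra, I may fix a finite set $G$ of generators, a finite set $R$ of $\T$-equations $s_i = t_i$ over $G$, and a surjective $\T$-morphism $q\colon \T G \epito M$ exhibiting $M$ as the $\T$-algebra presented by $G$ and $R$. I then define a candidate $\T$-monoid presentation $\overline{M}$ with generators $G$ and relations consisting of \textbf{(i)} the $\T$-equations in $R$; \textbf{(ii)} for each pair $g,g' \in G$ a relation $g \mult g' = t_{g,g'}$, where $t_{g,g'}$ is any $\T$-term over $G$ with $q(t_{g,g'}) = q(g)\mult q(g')$ (such a term exists by surjectivity of $q$); and \textbf{(iii)} a relation $e = w$, where $w$ is a $\T$-term over $G$ with $q(w) = e$. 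Since $M$ satisfies all of these, there is a canonical $\T$-monoid morphism $\phi\colon \overline{M} \to M$, which is surjective because $q$ already is.

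The heart of the argument -- and the step I expect to be the main obstacle -- is to show that the free extension $\iota\colon \T G \to \overline{M}$ of the generators is surjective, i.e.\ that every element of $\overline{M}$ is represented by a pure $\T$-term over $G$ with no remaining occurrences of $\mult$. This is exactly where the bimorphism laws are used: a product $u_1 \mult u_2$ of two pure $\T$-terms can be reduced by pushing the multiplication through the $\T$-operations on either side, so that $\alpha(\ldots)\mult u \equiv \alpha(\ldots \mult u)$ and $u \mult \beta(\ldots) \equiv \beta(u \mult \ldots)$, until one reaches products $g \mult g'$ of bare generators, which clause \textbf{(ii)} rewrites to $\T$-terms over $G$; the unit is eliminated by clause \textbf{(iii)}. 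I would make this precise by a nested structural induction: an outer induction on the monoid-term structure reduces everything to the case of a single product $u_1 \mult u_2$ of pure $\T$-terms, and an inner induction on the sizes of $u_1$ and $u_2$ carries out the rewriting above. Termination and the bookkeeping of the two interleaved inductions are the only delicate points, but the reduction is purely syntactic.

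Given surjectivity of $\iota$, the isomorphism $\overline{M} \cong M$ follows formally. Since the relations $R$ hold in $\overline{M}$, the morphism $\iota$ coequalizes the pair defining $R$, so the universal property of the coequalizer $q$ yields a unique $\T$-morphism $\psi\colon M \to \overline{M}$ with $q \seq \psi = \iota$. A short diagram chase gives $\iota \seq \phi = q$ (both sides are the free extension of $q|_G$), whence $\psi \seq \phi = \id_M$ (cancelling the epimorphism $q$ on the left) and $\phi \seq \psi = \id_{\overline{M}}$ (cancelling the epimorphism $\iota$ on the left). Thus $\phi$ is a bijective $\T$-monoid morphism, hence an isomorphism, and $M \cong \overline{M}$ is finitely presentable as a $\T$-monoid.
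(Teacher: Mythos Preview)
Your proof is correct and carries out in full the syntactic strategy the paper only sketches for the convex case (\Cref{thm:fg-carried-implies-fp}): extend a finite $\T$-algebra presentation of $M$ by the multiplication table on generators and the unit, then use the bimorphism equations $\alpha(\vec{x})\mult y = \alpha(\vec{x}\mult y)$ and their mirror to rewrite every mixed term to a pure $\T$-term, establishing surjectivity of $\iota\colon \T G \to \overline{M}$ and hence the isomorphism $\overline{M}\cong M$. The paper's actual proof of \Cref{thm:fg-carried-implies-fp-general} takes a genuinely different, categorical route: in the appendix it shows that the tensor product on $\Alg(\T)$ preserves finitely presentable objects, then that every fp-carried monoid in such a closed monoidal category is fp by viewing monoids as algebras for the endofunctor $FX = I + X\otimes X$ and reflecting finite presentability along the fully faithful finitary inclusion $\Mon(\Alg(\T))\hookrightarrow \Alg(F)$. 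Your argument is more elementary and entirely self-contained; the paper's is more abstract and reusable (it applies to monoids in any closed monoidal category with the right preservation properties). A small bonus worth noting: your rewriting argument never actually uses commutativity of $\T$---the bimorphism condition is equational, so $\T$-monoids form a finitary variety for any finitary $\T$---whereas the paper's route needs commutativity to obtain the closed monoidal structure on $\Alg(\T)$; so your proof is in fact slightly more general than you claim.
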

\twnote{The proof of \autoref{thm:fg-carried-implies-fp-general} can be simplified using \cite[Thm.\ 3.5]{amsw19b}.}
Note that \cref{thm:fg-carried-implies-fp} is the instance of this result for $\T = \D$.


\subsection{Recognizing Effectful Languages by Finite Monoids}
\label{sec:kleisli-rec}
Our second mode of recognition of \T-FA-computable languages uses effectful monoid morphisms to finite monoids; recognition by probabilistic channels from \Cref{sec:alg-rec-prob-hom} is an instance.

\begin{notn}
Given $\Kl(\T)$-morphisms $f_i\colon X_i\kleislito Y_i$ ($i=1,2$), we define the Kleisli morphism
$f_1 \bar\times f_2\colon X_1\times X_2\kleislito Y_1\times Y_2$ as in \Cref{not:pi}, with $\T$ in lieu of $\D$.
\end{notn}

\begin{defn}\label{D:T-eff-mor}
A ($\T$-)\emph{effectful monoid morphism} from a monoid $(N,\mult,n)$ to a monoid $(M,\mult,e)$ is a $\Kl(\T)$-morphism $h\colon N\kleislito M$ such that the diagram in \autoref{fig:prob-hom} commutes. The monoid~$M$ \emph{effectfully recognizes} the language $L\colon \Sigma^*\to O$ if there exists an effectful monoid morphism $h\colon \Sigma^*\kleislito M$ and a map  $p\colon M\to O$ such that $L = h\seq p^\#$.
\end{defn}
The key condition on the monad $\T$ that makes effectful recognition work is isolated in \Cref{thm:kleisli-recognition}. This requires some technical preparation:

\begin{rem}\label{rem:mon-functor}
The natural transformations $\pi_{X,Y}\colon TX\times TY\to T(X\times Y)$ of~\eqref{eq:pi2} and
$\eta_X\colon X\to TX$ make $T$ a \emph{monoidal functor}, that is, $\pi$ and $\eta$ satisfy
coherence laws w.r.t.\ the natural isomorphisms $(X\times Y)\times Z \cong X\times (Y\times Z)$
and $1\times X\cong X\cong X\times 1$~\cite[Thm.~2.1]{kock70}. Consequently, $T$ preserves monoid structures: for
every monoid $(M,\mult,e)$, the set $TM$ forms a monoid with neutral element and multiplication,
respectively, defined by
\[ 1\xto{e} M\xto{\eta_M} TM\qquad\text{and}\qquad TM\times TM \xto{\pi_{M,M}} T(M\times
  M)\xto{T\mult} TM.\]
It is folklore that if $\T$ is commutative, then, for every $\T$-monoid $N$, the extension $h^{\#} \colon TM \rightarrow N$
of every monoid morphism $h \colon M \rightarrow N$ is also a monoid morphism. For non-commutative monads \(\T\), this is not true in general.
\end{rem}

\begin{defn}\label{def:mon-fin-gen}
 A $(\T,\M)$-bialgebra $N$ is \emph{monoidally finitely generated} if there exists a finite monoid $M$ and a monoid morphism $\xi_{0}\colon M\to N$ whose free extension $\xi = \xi_{0}^\#\colon TM\to N$ is a surjective monoid morphism.
\end{defn}

We are now ready to state the desired general result on effectful recognition. 
 Note that its condition implies that of \Cref{thm:em-recognition}; for a separating example see the full version~\cite{this-paper}. 

\begin{theorem}\label{thm:kleisli-recognition}
Suppose that for every finite set $X$ the $(\T, \M)$-bialgebra $X\kleislito X$ from \Cref{ex:tm-bialg}\ref{ex:tx-pow-x} is monoidally finitely generated. Then for every language $L\colon \Sigma^*\to O$, there exists a $\T$-FA computing $L$ iff there exists a finite monoid $\T$-effectfully recognizing $L$.
\end{theorem}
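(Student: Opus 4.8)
The plan is to reduce both directions to the universal property of the free monoid $\Sigma^*$, once one observes that an effectful monoid morphism into $M$ is nothing but an ordinary monoid morphism into the monoid $TM$ of \Cref{rem:mon-functor}. Indeed, for a monoid $(M,\mult,e)$ that remark equips $TM$ with unit $\eta_M(e)$ and multiplication $t\star u := T\mult(\pi_{M,M}(t,u))$. Unravelling \autoref{fig:prob-hom}, and recalling that $\kseq$ against the pure maps $\mult$, $n$, $e$ is ordinary composition with $T(-)$, a Kleisli morphism $h\colon N\kleislito M$, i.e.\ a map $h\colon N\to TM$, is an effectful monoid morphism (\Cref{D:T-eff-mor}) if and only if $h\colon N\to (TM,\star,\eta_M(e))$ is an ordinary monoid morphism. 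In particular $\Sigma^*$ retains its universal property in $\Kl(\T)$, which also justifies the uniqueness claim in \Cref{R:ext}: every map $h_0\colon\Sigma\to TM$ extends to a \emph{unique} effectful monoid morphism $h\colon\Sigma^*\kleislito M$.

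For the direction from automata to finite monoids, let $\A=(Q,\delta,i,o)$ be a $\T$-FA computing $L$; by \Cref{rem:t-aut-props} we may assume $i=\eta_Q(q_0)$ is pure. The hypothesis yields a finite monoid $M$ with a monoid morphism $\xi_0\colon M\to (Q\kleislito Q)$ whose free extension $\xi=\xi_0^\#\colon TM\to (Q\kleislito Q)$ is a surjective monoid morphism. Using surjectivity of $\xi$, I would choose for each $a\in\Sigma$ some $h_0(a)\in TM$ with $\xi(h_0(a))=\bar\delta(a)$ and let $h\colon\Sigma^*\kleislito M$ be the induced effectful monoid morphism. Then $h\seq\xi$ and $\bar\delta^*$ are monoid morphisms $\Sigma^*\to (Q\kleislito Q)$ agreeing on $\Sigma$, hence $h\seq\xi=\bar\delta^*$. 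Now evaluation $\ev_{q_0}\colon (Q\kleislito Q)\to TQ$ at $q_0$ is a product projection and thus a $\T$-morphism, so the composite $P:=\xi\seq\ev_{q_0}\seq o^\#\colon TM\to O$ is a $\T$-morphism; writing $p:=\eta_M\seq P\colon M\to O$ we get $P=p^\#$. Since $i$ is pure, $i\kseq\bar\delta^*(w)=\bar\delta^*(w)(q_0)=\ev_{q_0}(\xi(h(w)))$, whence $L(w)=i\kseq\bar\delta^*(w)\seq o^\#=P(h(w))=(h\seq p^\#)(w)$. Thus the finite monoid $M$ effectfully recognizes $L$.

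For the converse, which needs no assumption on $\T$, let $(M,\mult,e)$ effectfully recognize $L$ via $h\colon\Sigma^*\kleislito M$ and $p\colon M\to O$. I would take the $\T$-FA $\A=(Q,\delta,i,o)$ with $Q:=M$, $i:=\eta_M(e)$, $o:=p$, and $\bar\delta(a):=\eta_M\seq R_a$, where $R_a\colon TM\to TM$, $R_a(t)=t\star h(a)$, is right multiplication by $h(a)$ in $TM$. The computation rests on the identity
\[
  \bar\delta^*(w)(m)=\eta_M(m)\star h(w)\qquad (w\in\Sigma^*,\ m\in M),
\]
proved by induction on $w$: the base case uses $\bar\delta^*(\varepsilon)=\eta_M$ and $h(\varepsilon)=\eta_M(e)$, while the step uses that $h\colon\Sigma^*\to (TM,\star)$ is a monoid morphism together with $\bar\delta^*(wa)(m)=R_a(\bar\delta^*(w)(m))$, the latter following from $\bar\delta(a)^\#=R_a$. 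Taking $m=e$ gives $\bar\delta^*(w)(e)=h(w)$, so $L(w)=i\kseq\bar\delta^*(w)\seq o^\#=o^\#(h(w))=p^\#(h(w))$, as required.

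The crux, and the only place where care is needed for a non-commutative monad, is the identity $\bar\delta(a)^\#=R_a$: it asserts that right multiplication by a fixed element is a $\T$-morphism $(TM,\mu_M)\to (TM,\mu_M)$. This holds in general because, fixing the second argument, the double strength factors through $T$-images of functions and the $\T$-morphism $\mu_{M\times M}$; concretely $\rst_{M,TM}(-,h(a))=T\big(m'\mapsto (m',h(a))\big)$, so that $R_a=T\big(m'\mapsto(m',h(a))\big)\seq T\lst_{M,M}\seq \mu_{M\times M}\seq T\mult$ is a composite of $\T$-morphisms. The identity $\bar\delta(a)^\#=R_a$ then expresses that $R_a$ is the free extension of its restriction $\bar\delta(a)=\eta_M\seq R_a$. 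I expect this asymmetry -- right multiplication is a $\T$-morphism, unlike left multiplication for non-commutative $\T$ (cf.\ \Cref{rem:mon-functor}) -- to be the subtle point; it is precisely the right strength that breaks the symmetry in our favour and lets the construction work for arbitrary monads.
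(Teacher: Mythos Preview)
Your proof is correct and follows essentially the same approach as the paper: both directions hinge on identifying effectful monoid morphisms $\Sigma^*\kleislito M$ with ordinary monoid morphisms $\Sigma^*\to TM$, then lifting $\bar\delta^*$ through the surjection $\xi\colon TM\epito(Q\kleislito Q)$ in the forward direction and building the same $\T$-FA $(M,\delta,e,p)$ with $\delta(m,a)=\eta_M(m)\star h(a)$ in the backward direction. The only cosmetic difference is that in ($\Leftarrow$) the paper invokes \Cref{thm:em-recognition} as a black box (with $G=M$ and $q=\id_{TM}$), whereas you unfold that argument into the explicit induction $\bar\delta^*(w)(m)=\eta_M(m)\star h(w)$ and isolate the key fact that right multiplication $R_a$ is a $\T$-morphism---a point the paper uses implicitly but does not highlight.
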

\begin{proof}[Proof sketch]
($\Rightarrow$) Suppose that $\A = (Q, \delta, i, f)$ is a $\T$-FA computing $L$. The proof of \Cref{thm:em-recognition} shows that $L$ is recognized by the $(\T,\M)$-bialgebra $Q\kleislito Q$; say $L=g\seq p$ for some monoid morphism $g\colon \Sigma^*\to (Q\kleislito Q)$ and some $\T$-morphism $p\colon (Q\kleislito Q)\to O$. (The specific choices of $g$ and $p$ in that proof are not relevant here.) Since $Q\kleislito Q$ is monoidally finitely generated, there exists a monoid morphism $\xi_{0}\colon M\to (Q\kleislito Q)$ such that~$M$ is finite and $\xi = \xi_{0}^{\#}\colon TM\epito (Q\kleislito Q)$ is a surjective monoid morphism. By the universal property of the free monoid $\Sigma^*$ and the surjectivity of $\xi$, we obtain a (not necessarily unique) monoid morphism $h\colon \Sigma^*\to TM$ with $g=h\seq \xi$ to fill the commutative diagram~\eqref{eq:tfa-to-finite-t-monoid}.
Moreover, one can show that $h\colon \Sigma^*\kleislito M$ is an \emph{effectful} monoid morphism. Then
the finite monoid $M$ effectfully recognizes $L$ via $h$ and $\xi_{0} \seq p$ because $L=g\seq p=h\seq (\xi_{0} \seq p)^\#$.
\begin{minipage}[c]{0.45\textwidth}
  \begin{equation}
    \label{eq:tfa-to-finite-t-monoid}
    \begin{tikzcd}
      \Sigma^* \ar{dr}[swap]{g} \ar[dashed, overlay]{r}{h} & TM \ar[two heads]{d}{\xi} \ar[overlay]{r}{(\xi_{0} \seq p)^\#} & O \\
      & Q\kleislito Q \ar{ur}[swap]{p} &
    \end{tikzcd}
  \end{equation}
\end{minipage}
\hfill
\begin{minipage}[c]{0.45\textwidth}
  \begin{equation}
    \label{eq:t-monoid-to-tfa}
    \begin{tikzcd}
      M \times \Sigma \ar[overlay]{r}{\eta_{M}\! \times h_{0}} \ar[kleisli]{d}{\delta} &
      TM \times TM \ar[kleisli]{d}{\pi_{M, M}} \\
      M &
      M \times M
      \ar[swap]{l}{\mult}
    \end{tikzcd}
  \end{equation}
  \hfill
\end{minipage}
($\Leftarrow$) Suppose that the finite monoid $(M,\mult, e)$
effectfully recognizes $L$ via $h\colon
\Sigma^*\kleislito M$ and $p\colon M\to O$. Then the $\T$-FA
$\A=(M,\delta,e,p)$ whose transition map
$\delta\colon M\times \Sigma\kleislito M$ is given by the composite\ \eqref{eq:t-monoid-to-tfa} (where $h_0(a) = h(a)$ for \(a \in \Sigma\)) computes $L$.\qedhere
\end{proof}

\begin{expl}\label{ex:xx-mon-fg}
The condition of \Cref{thm:kleisli-recognition} holds for $\T\in \{\D, \C, \S\}$, as the maps $\xi_{0}\colon M\to (X\kleislito X)$ from \Cref{fig:witnesses} witnessing that $X\kleislito X$ is finitely generated  also witness that $X\kleislito X$ is \emph{monoidally} finitely generated. We note that the verification that $\xi\colon TM\to (X\kleislito X)$ is a monoid morphism is non-trivial unless \T{} is commutative (see \Cref{rem:mon-functor}).
\Cref{prop:central-implies-mon-morphism} below simplifies the computations for
non-commutative \T, and also explains conceptually \emph{why} the non-commutative monads $\C$
and $\S$ satisfy the condition: for a finite set $X$, the $(\T, \M)$-bialgebra $X \kleislito X$ is generated by \emph{central} morphisms.
\end{expl}

 Recall that a Kleisli morphism \(f\colon X \kleislito Y\) is
 \emph{central}~\cite{carette-23,power-02} if for all 
 Kleisli morphisms \(f' \colon X' \kleislito Y' \), the following diagram commutes in $\Kl(\T)$.
\begin{equation}\label{eq:central}
  \begin{tikzcd}[row sep=0.5pt]
    & TY \times X' \ar[kleisli]{r}{\rst_{Y, X'}} & Y \times X' \ar{r}{1 \times f'} & Y \times TY' \ar[kleisli, pos=0.3]{rd}{\lst_{Y, Y'}}  & \\
    X \times X' \ar[pos=0.8]{ru}{f \times 1} \ar[swap, pos=0.9]{rd}{1 \times f'} & & & & Y \times Y' \\
    & X \times TY' \ar[kleisli]{r}{\lst_{X, Y'}} & X \times Y' \ar{r}{f \times 1} & TY \times Y' \ar[swap, kleisli,pos=0.3]{ru}{\rst_{Y, Y'}} &
  \end{tikzcd}
\end{equation}

\begin{proposition}\label{prop:central-implies-mon-morphism}
  If \(\xi_{0} \colon M \rightarrow (X \kleislito X)\) is a monoid morphism whose uncurried form \(X\times M \kleislito X\) is central, then its free extension \({\xi} \colon TM \rightarrow (X \kleislito X)\) is a monoid morphism.
\end{proposition}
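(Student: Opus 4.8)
The plan is to verify that $\xi=\xi_{0}^{\#}$ preserves both the unit and the multiplication of the monoid $TM$ from \Cref{rem:mon-functor}. Unit preservation is immediate: the unit of $TM$ is $\eta_{M}(e)$, and since $\eta_{M}\seq\xi=\xi_{0}$ and $\xi_{0}$ is a monoid morphism, we get $\xi(\eta_{M}(e))=\xi_{0}(e)=\eta_{X}$, the unit of $X\kleislito X$. The substance lies in the identity $\xi(s\mult t)=\xi(s)\kseq\xi(t)$ for all $s,t\in TM$, where $\mult$ denotes the induced multiplication $\pi_{M,M}\seq T\mult$ on $TM$ and $\kseq$ is Kleisli composition, the multiplication on $X\kleislito X$ (\Cref{ex:tm-bialg}\ref{ex:tx-pow-x}). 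I would prove this by applying the universal property of the free $\T$-algebra $TM$ \emph{twice}, extending one argument at a time — and the order of extension will be crucial.

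First I fix a \emph{pure} left argument $\eta_{M}(m)$ and view both sides of $\xi(\eta_{M}(m)\mult t)=\xi_{0}(m)\kseq\xi(t)$ as functions of $t\in TM$. Using the monoidal-functor coherence of $\pi$ and $\eta$ together with the unit laws, one computes $\eta_{M}(m)\mult t=T(m\mult{-})(t)$, so the left-hand side is the composite of the $\T$-morphisms $T(m\mult{-})$ and $\xi$, hence a $\T$-morphism in $t$. The right-hand side is the composite of $\xi$ with left multiplication $L_{\xi_{0}(m)}$ in $X\kleislito X$; granting (see below) that $L_{\xi_{0}(m)}$ is a $\T$-morphism, it too is a $\T$-morphism. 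The two agree on pure $t=\eta_{M}(m')$, since there both reduce via $\eta_{M}\seq\xi=\xi_{0}$ to $\xi_{0}(m\mult m')=\xi_{0}(m)\kseq\xi_{0}(m')$, which holds because $\xi_{0}$ is a monoid morphism. The universal property then yields the identity for all $t$.

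Second, I fix an \emph{arbitrary} right argument $t$ and view both sides of $\xi(s\mult t)=\xi(s)\kseq\xi(t)$ as functions of $s$. The left-hand side is $\xi$ composed with right multiplication by $t$ in $TM$, which is a $\T$-morphism because it factors as $T(a\mapsto(a,t))\seq T\lst_{M,M}\seq\mu_{M\times M}\seq T\mult$, a composite of $T$-images of maps and a monad multiplication. The right-hand side is $\xi$ composed with right multiplication $R_{\xi(t)}$ in $X\kleislito X$; this is \emph{always} a $\T$-morphism, since under the product structure $X\kleislito X=(TX)^{X}$ one has $R_{g}=(g^{\#})^{X}$, the componentwise application of the $\T$-endomorphism $g^{\#}$ of $\T X$. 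As the two $\T$-morphisms agree on pure $s=\eta_{M}(m)$ by the first step, the identity follows for all $s$, completing multiplicativity.

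The one place that genuinely needs the hypothesis — and the main obstacle — is establishing that left multiplication $L_{\xi_{0}(m)}$ in $X\kleislito X$ is a $\T$-morphism; note that, in contrast to $R_{g}$, this fails for general elements when $\T$ is non-commutative, which is exactly why $X\kleislito X$ is a $\T$-monoid only in the commutative case. I would isolate the key lemma that $L_{g}$ is a $\T$-morphism whenever $g\colon X\kleislito X$ is central: unfolding the equation $L_{g}\seq\alpha=\alpha\seq T(L_{g})$ against the product structure map $\alpha$ of $(TX)^{X}$ reduces the required identity to the centrality square~\eqref{eq:central} for $g$, paired with the generic effect encoded by an element of $T((TX)^{X})$. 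To apply this to $g=\xi_{0}(m)$, I would deduce centrality of each $\xi_{0}(m)$ from the hypothesis that the uncurried $X\times M\kleislito X$ is central, by writing $\xi_{0}(m)=p_{m}\kseq\hat{\xi}_{0}$ for the pure map $p_{m}\colon X\to X\times M$, $x\mapsto(x,m)$, and invoking that pure morphisms are central and that central morphisms are closed under Kleisli composition~\cite{power-02,carette-23}. Matching the centrality diagram to the product $\T$-algebra structure is the technical heart; everything else is formal bookkeeping with the strengths and the universal property.
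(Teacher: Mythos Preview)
Your argument is correct and takes a genuinely different route from the paper. The paper works entirely diagrammatically in the premonoidal structure of $\Kl(\T)$: it records three small commuting squares (one expressing $\xi$ via $\ev$ and the left strength, one expressing $\kseq$ on $X\kleislito X$, one expressing that $\xi_0$ is a monoid morphism) and then glues them, together with the centrality square for the uncurried $\xi_0$, into a single large diagram whose curried boundary is exactly the multiplicativity of $\xi$. Your proof instead proceeds by two successive applications of the universal property of $\T M$, and isolates the reusable lemma that $L_g$ on $(TX)^X$ is a $\T$-morphism whenever $g$ is central (while $R_g$ always is). This makes the asymmetry between left and right Kleisli multiplication, and hence the precise role of centrality, completely explicit---indeed your remark that the order of extension is forced is spot on: for the choice of $\pi$ in \eqref{eq:pi2}, right multiplication $s\mapsto s\mult t$ on $TM$ is a $\T$-morphism but left multiplication $t\mapsto s\mult t$ generally is not, so the opposite order would break. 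A small bonus of your decomposition is that it shows one only needs each $\xi_0(m)\colon X\kleislito X$ to be central, which you correctly derive from centrality of the uncurried form via closure of central morphisms under (pure pre-)composition. The paper's approach is more compact and uses the hypothesis exactly once; yours is more elementary, avoids premonoidal bookkeeping, and yields a lemma of independent interest.
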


\begin{rem}\label{rem:affine-implies-xx-mon-fg}
The witness $\xi_{0}\colon (X\to X)\to (X\kleislito X)$ defined by $f \mapsto f \seq \eta_{X}$
for the monad $\D$ from \Cref{def:xi-lambda-pi} works more generally for all \emph{affine
  monads}~\cite{kock-71} (i.e.~monads~$\T$ satisfying $T1\cong 1$). Hence, our \Cref{thm:em-recognition,thm:kleisli-recognition} apply to all affine monads $\T$.
\end{rem}

\subsection{Applications}
We proceed to instantiate \Cref{thm:em-recognition,thm:kleisli-recognition} to derive algebraic characterizations for several effectful automata models. For $\T=\D$, we recover \Cref{thm:pfas-vs-finite-monoids,thm:pfas-vs-fgc-convex-algebras} for PFAs:
\begin{theorem}\label{thm:pfa-alg}
A probabilistic language is PFA-computable iff it is $\D$-effectfully recognized by a finite monoid iff it is recognized by an \finite convex monoid.
\end{theorem}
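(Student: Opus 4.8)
The plan is to derive \Cref{thm:pfa-alg} as the instantiation of the two general correspondence theorems \Cref{thm:em-recognition,thm:kleisli-recognition} to the monad $\T=\D$ with output algebra $O=\D 2$. The first task is to check that $\D$ satisfies the hypotheses of both theorems. The condition of \Cref{thm:em-recognition}, namely that $X\kleislito X$ is a finitely generated $\D$-algebra for every finite set $X$, holds by \Cref{ex:xx-fg}; it is witnessed by the surjective affine map $\xi_{X,X}\colon \D(X\to X)\epito(X\kleislito X)$ of \Cref{ex:fg-carried}\ref{ex:x-kto-x}. The stronger condition of \Cref{thm:kleisli-recognition}, that $X\kleislito X$ be \emph{monoidally} finitely generated, holds by \Cref{ex:xx-mon-fg}. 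Finally, $\D$ is commutative (\Cref{ex:monads}\ref{ex:monads:D}), so the $\T$-monoid clause~(3) of \Cref{thm:em-recognition} is available.

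The second task is to line up the abstract vocabulary with its concrete probabilistic counterpart. PFA-computability coincides with $\D$-FA-computability for $O=\D 2$ by \Cref{ex:t-fa}\ref{ex:t-fa:pfa}. By \Cref{ex:tm-bialg}, $\D$-monoids are precisely convex monoids, so \finite $\D$-monoids (\Cref{D:fg-carried}) are precisely \finite convex monoids (\Cref{D:conv-fg}); moreover a $\D$-morphism into $\D 2$ is exactly an affine map $M\to\D 2$, whence bialgebra recognition (\Cref{D:bialg}) specializes verbatim to convex-monoid recognition (\Cref{D:conv-recog}). Likewise, a $\D$-effectful monoid morphism (\Cref{D:T-eff-mor}) is exactly a probabilistic monoid morphism, and since $\D 2$ is the free convex set on $2$ we have $p^\#=\D p\seq\mu_2$ and hence $h\seq p^\#=h\kseq p$; thus $\D$-effectful recognition by a finite monoid coincides with probabilistic recognition (\Cref{D:prob-acc}).

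Combining the two ingredients, \Cref{thm:em-recognition} gives the equivalence of PFA-computability with recognition by an \finite convex monoid, while \Cref{thm:kleisli-recognition} gives its equivalence with $\D$-effectful recognition by a finite monoid; chaining these produces the desired three-way equivalence. I expect the only genuine work to lie in the dictionary of the second paragraph---verifying that the abstract $\T$-morphisms, $(\T,\M)$-bialgebra recognition, and effectful monoid morphisms instantiate to affine maps, convex-monoid recognition, and probabilistic monoid morphisms, respectively. This is routine once one recalls that $\D$-algebras are convex sets and that the Kleisli composite $h\kseq p$ equals $h\seq p^\#$; all the substantive content already resides in the general theorems and in the verification of the finite-generation hypotheses, each of which is supplied by the cited examples.
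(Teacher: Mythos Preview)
Your proposal is correct and matches the paper's approach exactly: \Cref{thm:pfa-alg} is stated in the paper as the instance of \Cref{thm:em-recognition,thm:kleisli-recognition} for $\T=\D$ and $O=\D 2$, with the hypotheses discharged via \Cref{ex:xx-fg,ex:xx-mon-fg} and the dictionary between abstract and probabilistic notions supplied by \Cref{ex:t-fa}\ref{ex:t-fa:pfa}, \Cref{ex:tm-bialg}, and \Cref{ex:monads}\ref{ex:monads:D}. The paper gives no further proof beyond this instantiation, and your explicit verification that $h\seq p^\#=h\kseq p$ (using $p^\#=\D p\seq\mu_2$ since $O=\D 2$ is free) is exactly the small check needed to align \Cref{D:T-eff-mor} with \Cref{D:prob-acc}.
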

 For $\T=\C$, we obtain a new algebraic characterization of NPFAs. Note that a \emph{$(\C,\M)$-bialgebra} is a convex semilattice with an additional monoid structure.
\begin{theorem}\label{thm:npfa-alg}
$\!$A probabilistic language is NPFA-computable iff it is $\C$-effectfully recognized by a finite monoid iff it is recognized by an \finite $(\C,\M)$-bialgebra.
\end{theorem}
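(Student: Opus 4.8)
The plan is to obtain \Cref{thm:npfa-alg} as a direct instantiation of the two general correspondence results \Cref{thm:em-recognition,thm:kleisli-recognition} at the monad $\T=\C$. Most of the groundwork is already in place. By \Cref{ex:t-fa}\ref{ex:t-fa:npfa}, an NPFA is precisely a $\C$-FA with a convex-semilattice output (we fix one such output, say $O=[0,1]_{\max}$), so that \emph{NPFA-computable} and \emph{$\C$-FA-computable} denote the same class of probabilistic languages $\Sigma^*\to[0,1]$. Likewise, as noted just before the theorem, a $(\C,\M)$-bialgebra is exactly a convex semilattice equipped with a compatible monoid structure, matching the phrasing of the statement. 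Hence it only remains to verify that $\T=\C$ meets the hypotheses of the two abstract theorems and to read off the resulting biconditionals.

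First I would invoke \Cref{thm:em-recognition}, whose hypothesis is that $X\kleislito X$ is a finitely generated $\C$-algebra for every finite set $X$. This is exactly \Cref{ex:xx-fg}: the monoid morphism $\xi_{0}\colon(X\to X)\to(X\kleislito X)$, $f\mapsto f\seq\eta_{X}$, has a surjective free extension $\xi=\xi_{0}^{\#}\colon\C(X\to X)\epito(X\kleislito X)$. The equivalence (1)$\Leftrightarrow$(2) of \Cref{thm:em-recognition} then gives the first biconditional: a language is NPFA-computable iff it is recognized by an \finite $(\C,\M)$-bialgebra. Since $\C$ is \emph{not} commutative (\Cref{ex:monads}\ref{ex:monads:C}), statement~(3) of \Cref{thm:em-recognition} — recognition by an \finite $\C$-monoid — is unavailable; this is precisely why the theorem is formulated with $(\C,\M)$-bialgebras rather than $\C$-monoids, and I would flag this explicitly.

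Second I would invoke \Cref{thm:kleisli-recognition}, whose hypothesis is the stronger requirement that $X\kleislito X$ be \emph{monoidally} finitely generated. By \Cref{ex:xx-mon-fg} the same witness $\xi_{0}$ serves, but now one must additionally check that the free extension $\xi\colon\C(X\to X)\to(X\kleislito X)$ is a \emph{monoid} morphism. This is the one genuinely nontrivial point for the non-commutative monad $\C$, and I expect it to be the main obstacle: for commutative monads the claim is folklore (\Cref{rem:mon-functor}), but here it fails to follow automatically. I would discharge it using \Cref{prop:central-implies-mon-morphism}, reducing the question to showing that the uncurried generator $X\times(X\to X)\kleislito X$ is central in $\Kl(\C)$. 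This morphism is \emph{pure}, being the evaluation map $X\times(X\to X)\to X$ followed by $\eta_{X}$, and pure Kleisli morphisms make the strength diagram \eqref{eq:central} commute, so centrality holds. The equivalence of \Cref{thm:kleisli-recognition} then yields the second biconditional: a language is NPFA-computable iff it is $\C$-effectfully recognized by a finite monoid.

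Chaining the two biconditionals through their common middle term — $\C$-FA-computability, i.e.\ NPFA-computability — completes the proof. In short, the argument is almost entirely an application of the abstract machinery; the only conceptual content specific to $\C$ is the non-commutativity, which (i) forces the bialgebra formulation over the $\T$-monoid one, and (ii) makes the verification that $\xi$ is a monoid morphism rest on the centrality of the pure generators of $X\kleislito X$.
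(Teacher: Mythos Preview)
Your proposal is correct and follows essentially the same approach as the paper: instantiate \Cref{thm:em-recognition,thm:kleisli-recognition} at $\T=\C$, using \Cref{ex:xx-fg,ex:xx-mon-fg} to discharge the respective hypotheses, and observe that non-commutativity of $\C$ forces the bialgebra formulation. The one place where you differ from the paper is the verification that the uncurried generator $X\times(X\to X)\kleislito X$ is central: the paper's appendix checks this by an explicit elementwise computation specific to $\C$, whereas you argue more abstractly that the map is pure and that pure Kleisli morphisms are always central. Your route is cleaner and works uniformly (it would equally handle the $\C$ case and the $\D$ case in one stroke), while the paper's computation has the mild advantage of being self-contained and not relying on the folklore fact about centers of premonoidal categories.
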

Finally, $\T=\S$ gives a new algebraic characterization of WFAs. Note that an \emph{$(\S,\M)$-bialgebra} is an $S$-semimodule with an additional monoid structure. For a commutative semiring $S$, the notion of an $\S$-monoid is that of an \emph{(associative) $S$-algebra}.
\begin{theorem}\label{thm:wfa-alg}
An $S$-weighted language is WFA-computable iff it is $\S$-effectfully
recognized by a finite monoid iff it is recognized by an \finite
$(\S,\M)$-bialgebra, and, for a commutative semiring $S$, iff it is recognized by an \finite $S$-algebra.
\end{theorem}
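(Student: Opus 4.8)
The plan is to derive all three equivalences by instantiating the general correspondences \Cref{thm:em-recognition,thm:kleisli-recognition} to the semiring monad $\T=\S$ with output algebra $O=\S 1\cong S$. The first step is just unfolding the relevant instances. By \Cref{ex:t-fa} an $\S$-FA with this output is precisely a WFA over $S$; by \Cref{ex:monads} a $(\S,\M)$-bialgebra is an $S$-semimodule carrying a monoid structure, and for commutative $S$ an $\S$-monoid is exactly an associative $S$-algebra (the $\T$-bimorphism condition of \Cref{def:t-m-bialgebra} unfolds to bilinearity of the multiplication). Hence the three recognition notions appearing in the statement are literally the instances of $\S$-effectful recognition by a finite monoid, of recognition by an \finite $(\S,\M)$-bialgebra, and -- in the commutative case -- of recognition by an \finite $\S$-monoid.

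The second step is to check that $\S$ satisfies the hypotheses of both general theorems. For \Cref{thm:em-recognition} one needs $X\kleislito X$ to be a finitely generated $\S$-algebra for every finite $X$; this is exactly \Cref{ex:xx-fg}, where the finite generating monoid is the set $X\parfun X$ of partial functions and the witness $\xi_{0}$ of \Cref{fig:witnesses} sends a partial function to the associated pure-or-zero channel. For \Cref{thm:kleisli-recognition} one needs the stronger property that $X\kleislito X$ is \emph{monoidally} finitely generated, established in \Cref{ex:xx-mon-fg} with the same witness. Granting these, \Cref{thm:em-recognition} immediately gives the equivalence between WFA-computability and recognition by an \finite $(\S,\M)$-bialgebra, and since $\S$ is commutative precisely when the multiplication of $S$ is (\Cref{ex:monads}), its clause~(3) supplies the $S$-algebra characterization in the commutative case; \Cref{thm:kleisli-recognition} gives the remaining equivalence with $\S$-effectful recognition by a finite monoid.

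The main obstacle lies in the verification behind monoidal finite generation for a \emph{non-commutative} semiring $S$: one must show that the free extension $\xi=\xi_{0}^{\#}\colon \S(X\parfun X)\to (X\kleislito X)$ is a monoid morphism, and the folklore argument of \Cref{rem:mon-functor} is unavailable since $\S$ need not be commutative. The plan is to invoke \Cref{prop:central-implies-mon-morphism}, which reduces the task to showing that the uncurried generating channel $X\times(X\parfun X)\kleislito X$ is \emph{central}, i.e.\ that the diagram~\eqref{eq:central} commutes for every $f'$. The crucial point is that $\xi_{0}$ produces only the semiring weights $0$ and $1$, which are central elements of $S$ since $0s=0=s0$ and $1s=s=s1$ for all $s\in S$; consequently the two legs of~\eqref{eq:central} differ only in the order in which these weights are multiplied against the weights coming from $f'$, and the centrality of $0$ and $1$ forces the two composites to agree regardless of non-commutativity of $S$. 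Once centrality is in hand, \Cref{prop:central-implies-mon-morphism} upgrades $\xi$ to a monoid morphism, completing the hypothesis check; the theorem then follows by direct instantiation.
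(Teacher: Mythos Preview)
Your proposal is correct and follows essentially the same route as the paper: it instantiates \Cref{thm:em-recognition,thm:kleisli-recognition} to $\T=\S$ with $O=\S 1$, citing \Cref{ex:xx-fg,ex:xx-mon-fg} for the required hypotheses, and handles the non-commutative case via \Cref{prop:central-implies-mon-morphism} by observing that the uncurried generator $X\times(X\parfun X)\kleislito X$ is central because it only uses the semiring weights $0$ and $1$. This matches the paper's argument in the appendix details for \Cref{ex:xx-mon-fg}.
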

For the case where $S$ is a commutative {ring} (i.e.~has additive inverses), we recover the
correspondence between WFAs and \finite $S$-algebras due to Reutenauer~\cite{reu80}. However,
\Cref{thm:wfa-alg} applies to every semiring, so it also yields novel algebraic characterizations of min- and max-plus automata and transducers (\Cref{ex:t-fa}\ref{ex:t-fa:wfa}) as special instances.

\section{Conclusions and Future Work}

We have developed the foundations of an algebraic theory of automata with generic computational
effects. Under suitable conditions on the effect monad \T{}, we characterized
$\T$-FA-computable languages by algebraic modes of effectful recognition. As special cases,
this entails the first algebraic characterizations of probabilistic automata and weighted
automata over unrestricted semirings. We proceed to give some prospects for future work.

Proving finite presentability of syntactic convex monoids is challenging even for simple probabilistic languages, as \Cref{ex:syn-mon-not-fg-based-but-fp} illustrates. Identifying conditions on the language ensuring this property is thus a natural question.

We aim to extend our theory beyond the category of sets. Of particular interest are \emph{nominal sets}, which would allow us to capture effectful (e.g.~probabilistic or weighted) register automata~\cite{bkl14}. The main technical hurdle lies in the construction $Q\mapsto (Q\to Q)$ of function spaces, which does not preserve orbit-finite (i.e.~finitely presentable) nominal sets. This might be overcome via the recently proposed restriction to \emph{single-use} functions~\cite{bs20,bns24}.

An orthogonal generalization of our theory concerns effectful languages beyond finite words. This requires switching from monoids to algebras for a monad $\mathbb{S}$~\cite{Bojan15}.
For example, taking the monad~$\mathbb{S}$ corresponding to \emph{\(\omega\)-semigroups}~\cite{pp04} could lead to algebraic recognition of effectful languages over \emph{infinite words}.
For a generalization of the recognition by $\T$-monoids, we expect that the interaction between the monads~$\mathbb{S}$ and~\(\T\) be given by some form of distributive law similar to the interaction of~\(\M\) and~\(\T\).

Lastly, our effectful automata/monoid correspondence could pave the way to a \emph{topological} account of effectful languages based on effectful profinite monoids. A first glimpse in this direction is given by Fijalkow~\cite{fijalkow-17} who analyzes the value-1 problem for PFAs in terms of a notion of \emph{free prostochastic monoid}. We aim to study effectful versions of the duality theory of profinite monoids~\cite{gehrke-13,bum24} and its applications, notably variety theorems~\cite{ggp08,uacm17}.

\bibliography{bibliography-new}

\clearpage
\appendix 

\section*{Appendix}

This appendix provides proofs and additional details omitted from the main text for space reasons. In \Cref{app:cats} we recall some concepts from category theory and establish a few auxiliary results that will be used in subsequent proofs. In \Cref{app:trans-mon} we explain the construction of syntactic convex monoids via transition monoids briefly mentioned in \Cref{sec:syn-conv-mon}. In \Cref{app:proofs} we present full proofs of all theorems in the paper, as well as more details for examples.

\section{Categorical Background}\label{app:cats}

\subsubsection*{The Monoidal Category of $\T$-Algebras}
We the closed monoidal structure on the category $\Alg(\T)$ for a commutative monad $\T$ on $\Set$ (\Cref{rem:t-mon}). We remark that the constructions shown here are not specific to $\Set$, but can be developed for monads on general symmetric monoidal closed categories with enough limits and colimits~\cite{kock-71a,kock-72,seal-2012}.

Given algebras $A,B,C\in \Alg(\T)$, a \emph{$\T$-bimorphism} from $A,B$ to $C$ is a map $f\colon A\times B\to C$ such that all maps $f(a,-)\colon B\to C$ and $f(-, b) \colon A \rightarrow C$ are $\T$-morphisms for $a\in A$ and $b \in B$. We write  $\Alg(\T)(A,B;C)$ for the set of all $\T$-bimorphisms from $A,B$ to $C$.

The \emph{tensor product} $A\otimes B$ of two $\T$-algebras $(A, a)$ and $(B, b)$ is defined as the coequalizer of the following pair of $\T$-morphisms, with $\pi$ given by \eqref{eq:pi2}:
\[
  \begin{tikzcd}[column sep = 50]
    T(TA \times TB)
    \ar[yshift=3]{r}{T \pi_{A,B}\,\seq\, \mu_{A\times B}}
    \ar[yshift=-3]{r}[swap]{a \times b}
    &
    T(A \times B) \ar[two heads]{r}{c_{A,B}} & A\otimes B
  \end{tikzcd}
\]
It has the following universal property: The map
\[t_{A,B} = \eta_{A\times B}\seq c_{A,B}\colon A\times B\to A\otimes B\]
is a \emph{universal $\T$-bimorphism} for \(A, B\), which means that (1) \(t_{A, B}\) itself is a bimorphism, (2) for every $\T$-bimorphism $f\colon A\times B\to C$ there exists a unique $\T$-morphism $h\colon A\otimes B\to C$ such that $f=t_{A,B}\seq h$.

The tensor product $\otimes$ makes $\Alg(\T)$ a symmetric monoidal closed category (with tensor unit $\T 1$). The right adjoint of $- \otimes A\colon \Alg(\T)\to \Alg(\T)$ is given by $[A,-]\colon \Alg(\T)\to \Alg(\T)$;
recall that the \emph{internal hom} $[A,B]=\Alg(\T)(A,B)$ is carried by the set of \(\T\)-morphisms from \(A\) to \(B\), with the subalgebra structure of the product algebra $B^{|A|}$, i.e.~with the $\T$-algebra structure defined pointwise from the structure of $B$. We thus have the following natural isomorphisms:
\[ \Alg(\T)(A,B;C) \cong \Alg(\T)(A\otimes B,C)\cong \Alg(\T)(A,[B,C]). \]
Since the tensor product has \(\T 1\) as tensor unit and satisfies \(\T(A \times B) \cong \T A \otimes \T B\), we have that the free algebra functor \(\T \colon \Set \rightarrow \Alg(\T)\) is strong monoidal.
Recall that a \emph{monoid} in a monoidal category $\C$ (with tensor product $\otimes$ and unit $I$) is given by an object $M$ with a unit $e\colon I\to M$ and multiplication $\mult\colon M\otimes M\to M$ satisfying diagrammatic versions of the associative and unit laws. A \emph{morphism} $h\colon (M,\mult,e)\to (M',\mult,e')$ of monoids is a $\C$-morphism $h\colon M\to M'$ that preserves that monoid structure. We write $\Mon(\C)$ for the category of monoids and their morphisms. Since in $\C=\Alg(\T)$ the tensor product $\otimes$ represents $\T$-bimorphisms and $I=\T 1$, the category $\Mon(\Alg(\T))$ is isomorphic to the category of $\T$-monoids, that is, $\T$-algebras with an additional monoid structure on the underlying set whose multiplication is a $\T$-bimorphism (\Cref{def:t-m-bialgebra}). Morphisms of $\T$-monoids are maps that are simultaneously $\T$-morphisms and monoid morphisms.

\subsubsection*{Finitary Functors}
A diagram $D\colon I\to \C$ in a category $\C$ is \emph{directed} if its scheme $I$ is a directed poset, that is, every finite subset of $I$ has an upper bound. A \emph{directed colimit} is a colimit of a directed diagram. A functor $F\colon \C\to \D$ is \emph{finitary} if it preserves directed colimits.

  \begin{proposition}\label{prop:pointwise-finitary-implies-finitary}
    A functor \(F \colon \C \times \cat D \rightarrow \cat E\) is finitary if and only if it is finitary in both arguments.
  \end{proposition}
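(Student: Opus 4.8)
The plan is to prove the two implications separately; the forward direction is immediate by restriction, while the backward direction rests on a diagonal/interchange argument for the directed shape. For the forward direction, I would note that for a fixed object $c$ the partial functor $F(c,-)$ factors as $F$ precomposed with the functor $(c,-)\colon \mathcal D\to \C\times\mathcal D$, $d\mapsto (c,d)$. Since directed colimits in a product category are computed componentwise, and the colimit of the constant directed diagram on $c$ is $c$ itself (a directed poset being nonempty and connected), the functor $(c,-)$ preserves directed colimits; hence so does $F(c,-)$ whenever $F$ does, and symmetrically for $F(-,d)$.

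For the backward direction, I would start from a directed diagram in $\C\times\mathcal D$, which amounts to a directed poset $I$ together with diagrams $(c_i)_{i\in I}$ in $\C$ and $(d_i)_{i\in I}$ in $\mathcal D$ whose colimit is $(\colim_i c_i,\colim_i d_i)$. The goal is to show that the canonical comparison $\colim_i F(c_i,d_i)\to F(\colim_i c_i,\colim_i d_i)$ is an isomorphism. The central idea is to introduce the two-variable diagram $G\colon I\times I\to \mathcal E$, $G(i,j)=F(c_i,d_j)$, and to compute $\colim_{I\times I} G$ in two ways. First, the diagonal $\Delta\colon I\to I\times I$ is final: for each $(i,j)$ the set $\{k\in I: i\le k,\ j\le k\}$ is nonempty and directed by directedness of $I$, hence connected, so the relevant comma category is nonempty and connected. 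Consequently $\colim_{I\times I} G\cong \colim_I(G\circ\Delta)=\colim_i F(c_i,d_i)$.

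Second, by the interchange law for colimits over a product, $\colim_{I\times I} G\cong \colim_i\colim_j F(c_i,d_j)$, where the inner colimit exists and equals $F(c_i,\colim_j d_j)$ by finitariness of $F(c_i,-)$, and the outer colimit then equals $F(\colim_i c_i,\colim_j d_j)$ by finitariness of $F(-,\colim_j d_j)$. Chaining these isomorphisms — all induced by the respective universal properties — yields $\colim_i F(c_i,d_i)\cong F(\colim_i c_i,\colim_i d_i)$, and a routine check confirms that the composite isomorphism is the canonical comparison morphism. The finitariness hypothesis is used exactly once in each variable, to collapse first the inner and then the outer colimit.

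The main obstacle is this backward direction, specifically the two computations of $\colim_{I\times I} G$: establishing cofinality of the diagonal (which genuinely uses directedness, not merely filteredness of the index shape) and invoking the Fubini interchange, together with verifying that the resulting isomorphism coincides with the canonical comparison rather than some incidental one. By contrast, the forward direction and all the bookkeeping around componentwise colimits in the product category are routine.
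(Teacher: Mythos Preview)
Your proposal is correct and follows essentially the same approach as the paper: both directions match, using finality of the diagonal $\Delta\colon I\to I\times I$ for directed $I$, the Fubini interchange of iterated colimits, and finitariness in each variable separately, with the forward direction handled by factoring $F(c,-)$ through the finitary functor $(c,-)\colon\mathcal D\to\C\times\mathcal D$. Your explicit remark that the composite isomorphism coincides with the canonical comparison morphism is a welcome addition that the paper leaves implicit.
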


\begin{proof}
We first note that for a directed set \(I\) the diagonal \(\Delta \colon I \rightarrow I \times I\) is \emph{final}, which means
for \((i, j) \in I \times I\) the slice \((i, j) \downarrow \Delta\) is non-empty and connected.
Indeed, for non-emptiness take an upper bound \(k\) of \(i, j\), then \((i, j) \le (k, k) = \Delta k\).
For connectivity assume that \((i, j) \le (k, k)\) and \((i, j) \le (k', k')\), then any upper bound \(k, k' \le k''\) joins \((i, j) \le \Delta k, \Delta k' \le \Delta k''\). Finality of $\Delta$ implies that any two diagrams $D,D'$ with $D= \Delta \seq D'$ have the same colimit~\cite[Section IX.3]{maclane}.

Now let \(F \colon \cat C \times \cat D \rightarrow \cat E\) be a functor, and suppose that
 \(F\) is finitary in both arguments. Let \(D = \langle D_{1}, D_{2} \rangle \colon I \rightarrow \cat C \times \cat D\) be a directed diagram. Then we can write \(D\) as the composite functor \(D = \Delta \seq (D_{1} \times D_{2})\). We prove that $F$ is finitary:
\begin{align*}
  \colim_{i \in I} F D(i) &= \colim_{i \in I} F (D_{1} \times D_{2}) \Delta (i) \\
                          &\cong \colim_{(i, j) \in I \times I} F (D_{1}(i), D_{2}(j))  && \text{\(\Delta\) final} \\
                          &\cong \colim_{i \in I} \colim_{j \in I} F(D_{1}(i),D_{2}(j)) && \text{colimits commute with colimits} \\
                          &\cong  F(\colim_{i \in I}D_{1}(i),\colim_{j \in I}D_{2}(j)) && \text{\(F\) finitary in both arguments}\\
                          &\cong  F(\colim_{(i,j) \in I \times I} (D_{1} \times D_{2})(i, j)) && \text{colimits in \(\cat C \times \cat D\) computed pointwise} \\
                          &\cong F(\colim_{i \in I}(D_{1} \times D_{2})\Delta(i)) && \text{\(\Delta\) final}\\
                          &\cong F(\colim_{i \in I}D(i))
\end{align*}
Conversely, if \(F\) is finitary, then for every $C\in \C$ the functor \(F(C, -)\) is finitary, being the composite of $F$ with the finitary functor $\langle C,\Id\rangle\colon \D\to \C\times \D$. Analogously, $F(-,D)$ is finitary for every $D\in \D$. So $F$ is finitary in both arguments.
\end{proof}

\subsubsection*{Finitely Presentable Algebras} The idea of presenting algebras by generators and relations can be generalized to the level of objects of arbitrary categories. An object \(X\) of a category $\C$ is \emph{finitely presentable} (\emph{fp}) if the hom functor \(\C(X, -)\colon \C\to \Set\) is finitary. In more explicit terms, this means that every morphism $f$ from $X$ into a directed colimit $\colim_{i \in I} C_i$ factorizes \emph{essentially uniquely} through some  colimit injections $\kappa_i\colon C_i \to \colim C_i$. This means that
\begin{enumerate}[(1)]
\item there exists $i \in I$ and a morphism $f'\colon X \to C_i$
  such that $f' \seq \kappa_{i} = f$, and
\item\label{cond:fp-2} given two such factorizations $ f' \seq \kappa_{i} =  f'' \seq \kappa_{i}$
  of $f$, there exists a connecting morphism $\kappa_{i,j}\colon C_i \to
  C_j$ in the diagram such that $f' \seq \kappa_{i,j}  = f'' \seq \kappa_{i,j} $.
\end{enumerate}
We denote the class of all finitely presentable object of \(\C\) by \(\Cfp\). 

For $\C=\Alg(\T)$, the categories of algebras for a finitary monad $\T$ on $\Set$ induced by an algebraic theory $(\Lambda,E)$, the abstract categorical notion of finite presentability coincides with the usual algebraic one. More precisely, the following statements are equivalent for every $A\in \Alg(\T)$, see e.g.~\cite[Prop.~11.28]{arv10}:
\begin{enumerate}[(1)]
\item $A$ is a finitely presentable object of $\Alg(\T)$.
\item $A$ is a coequalizer in $\Alg(\T)$ of a pair $p,q\colon \T X\to \T Y$ where $X,Y$ are finite.
\item $A$ has a finite presentation by generators and equations as an algebra of the variety specified by $\Sigma$ and $E$.
\end{enumerate} 

\begin{proposition}\label{prop:tensor-pres-fp}
  For every finitary commutative monad \(\T\) on \Set, the tensor product on $\Alg(\T)$ preserves finitely presentable algebras.
\end{proposition}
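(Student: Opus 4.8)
The plan is to exploit the monoidal closed structure on $\Alg(\T)$ established above, which makes each functor $-\otimes A$ a left adjoint (to the internal hom $[A,-]$) and hence colimit-preserving; by symmetry $A\otimes -$ preserves colimits as well. The two further ingredients are the strong-monoidality isomorphism $\T X\otimes \T Y\cong \T(X\times Y)$ and the fact that, since $\T$ is finitary, $\Alg(\T)$ is locally finitely presentable, so that free algebras on finite sets are finitely presentable and the class of finitely presentable objects is closed under finite colimits, in particular under coequalizers. (The latter holds because $\Alg(\T)(\colim_n X_n,-)\cong\lim_n\Alg(\T)(X_n,-)$ is a finite limit of finitary functors when the $X_n$ range over a finite diagram, and finite limits commute with directed colimits in $\Set$.) I would deliberately avoid the superficially tempting route via $\Alg(\T)(A\otimes B,-)\cong\Alg(\T)(A,[B,-])$, since it would require $[B,-]$ to be finitary for finitely presentable $B$, a statement essentially equivalent to the theorem itself.

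Now let $A$ and $B$ be finitely presentable. Using the characterization recalled above, I would present them as coequalizers of free algebras on finite sets,
\[
\T X\rightrightarrows \T Y\epi A,\qquad \T X'\rightrightarrows \T Y'\epi B,
\]
with $X,Y,X',Y'$ finite. First, applying the colimit-preserving functor $-\otimes B$ to the left-hand coequalizer exhibits $A\otimes B$ as the coequalizer of $\T X\otimes B\rightrightarrows \T Y\otimes B$. It therefore suffices to show that $\T X\otimes B$ and $\T Y\otimes B$ are finitely presentable. For this I would apply the colimit-preserving functor $\T Y\otimes -$ to the right-hand coequalizer, obtaining $\T Y\otimes B$ as the coequalizer of
\[
\T Y\otimes \T X'\rightrightarrows \T Y\otimes \T Y'.
\]
By strong monoidality these two objects are $\T(Y\times X')$ and $\T(Y\times Y')$, which are free on finite sets and hence finitely presentable; the identical computation applies to $\T X\otimes B$.

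Consequently $\T X\otimes B$ and $\T Y\otimes B$ are coequalizers of finitely presentable algebras, hence finitely presentable by closure under finite colimits, and then $A\otimes B$ — being a coequalizer of these two — is finitely presentable as well. The argument is essentially a double application of ``tensoring preserves the chosen coequalizer presentation'', and its only real content is the interplay of three facts: left-adjointness of the tensor in each variable, the isomorphism $\T X\otimes\T Y\cong\T(X\times Y)$, and closure of finitely presentable objects under finite colimits. I expect the main point to check carefully is not any individual calculation but the correct invocation of the finitariness hypothesis (to guarantee the locally-finitely-presentable machinery and the coequalizer presentation of finitely presentable algebras) and of commutativity (which is precisely what makes $\otimes$ and the whole closed structure available in the first place).
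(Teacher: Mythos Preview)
Your proof is correct, but it follows a genuinely different route from the paper.

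The paper takes precisely the route you deliberately avoid: it proves the proposition via the chain
\[
\Alg(\T)(A\otimes B,\colim_i C_i)\cong \Alg(\T)(A,[B,\colim_i C_i])\cong \Alg(\T)(A,\colim_i[B,C_i])\cong\colim_i\Alg(\T)(A\otimes B,C_i).
\]
The middle step does require that the canonical bijection $\colim_i[B,C_i]\cong[B,\colim_i C_i]$ is an isomorphism \emph{of $\T$-algebras}, not just of sets; the paper establishes this (``internal finite presentability'') directly by a diagram chase showing that the comparison map is a $\T$-morphism. So the route is not circular, as you feared, but it does need this extra lemma.

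Your argument instead exploits that $-\otimes A$ is a left adjoint and hence preserves the coequalizer presentations of fp objects, together with $\T X\otimes\T Y\cong\T(X\times Y)$ and closure of fp objects under finite colimits. This is a clean, standard lfp-category argument and avoids any analysis of the algebra structure on internal homs. The paper's approach, on the other hand, yields the stronger intermediate fact that $[B,-]\colon\Alg(\T)\to\Alg(\T)$ is finitary for fp $B$, which may be of independent interest. Both are perfectly valid; yours is arguably more elementary, the paper's more informative about the closed structure.
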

\begin{proof}
  We first show that every finitely presentable algebra \(A\) is also \emph{internally finitely presentable}:
  for every directed diagram \(C_{i}, i \in I\) of algebras with colimiting cocone \(\kappa_{i} \colon C_{i} \rightarrow \colim_{i}C_{i}\) the canonical bijection
  \begin{alignat*}{4}
    \varphi \colon&\quad&  \colim_{i}\,\, & \Alg(\T)(A, C_{i}) &\quad&\cong &\quad \Alg(\T)(A, & \colim_{i} C_{i})\\
     &&[f_{i} \colon & A \rightarrow C_{i}] &\quad &\mapsto &\quad ( f_{i} \seq \kappa_{i} \colon A \rightarrow & C_{i} \rightarrow \colim_{i}C_{i}),
  \end{alignat*}
  which is extending the cocone \(\Alg(\T)(A, \kappa_{i})\), is an algebra homomorphism. Consider the following diagram in \Set,
  where \(a\) denotes the respective algebra structures and \(\iota_{i} = \Alg(\T)(A, \kappa_{i}) ; \varphi^{-1} \colon \Alg(\T)(A, C_{i}) \rightarrow \colim_{i} \Alg(\T)(A, C_{i})\).
  \[
    \begin{tikzcd}
      T \Set(A, C_{i}) \ar[shiftarr={xshift=-40pt}, swap]{ddd}{a} \ar{rr}{T \Set(A, \kappa_{i})} & & T \Set(A, \colim_{i} C_{i}) \ar[shiftarr={xshift=50pt}]{ddd}{a}  \\
      T \Alg(\T)(A, C_{i}) \ar{u}{T(\incl)} \ar{d}{a}  \ar{r}{T \iota_{i}} & T(\colim_{i} \Alg(\T)(A, C_{i})) \ar{r}{T \varphi}  \ar{d}{a} & T \Alg(\T)(A, \colim_{i} C_{i}) \ar{d}{a} \ar{u}{T (\incl)}   \\
      \Alg(\T)(A, C_{i}) \dar[hook]{} \rar{\iota_{i}} & \colim_{i} \Alg(\T)(A, C_{i}) \rar{\varphi} & \Alg(\T)(A, \colim_{i} C_{i}) \dar[hook]{}    \\
      \Set(A, C_{i}) \ar{rr}{\Set(A, \kappa_{i})} & & \Set(A, \colim_{i} C_{i})  \\
    \end{tikzcd}
  \]
  Then the top and bottom squares commute by definition of \(\varphi\);
  the outer left  and right parts commute because the inclusions are \T-morphisms;
  the center left square by definition of the algebra structure on the filtered colimit.
  This shows that the center right square commutes when pre- and postcomposed with \(T\iota_{i}\) and the inclusion, respectively.
  The former are collectively epic and the latter is monic, thus we get the desired commutativity of the center right square.

This entails that $\otimes$ preserves finitely presentable algebras. Indeed, given finitely presentable algebras $A,B$, the tensor \(A \tensor B\) is finitely presentable because its hom functor preserves filtered colimits:
  \begin{align*}
      & \Alg(\T)(A \tensor B, \colim_{i} C_{i}) \\
       \cong\,\, & \Alg(\T)(A, \Alg(\T)(B, \colim_{i}C_{i})) \\
    \cong\,\, & \Alg(\T)(A, \colim_{i} \Alg(\T)(B, C_{i})) \\
    \cong\,\, & \colim_{i} \Alg(\T)(A, \Alg(\T)(B, C_{i})) \\
    \cong\,\, & \colim_{i}\Alg(\T)(A \tensor B, C_{i}) \qedhere
  \end{align*}
%
\end{proof}

\begin{lemma}\label{lem:ful-faith-fin-reflect-fp}
    Every fully faithful finitary functor \(U \colon \C \rightarrow \cat D\) reflects fp objects.
\end{lemma}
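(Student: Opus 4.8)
The plan is to unwind the categorical definition of finite presentability and reduce the claim to a short chain of natural isomorphisms. Recall that an object $X\in\C$ is finitely presentable exactly when the hom functor $\C(X,-)\colon \C\to\Set$ is finitary. So, assuming that $UX$ is fp in $\cat D$, I must show that $\C(X,-)$ preserves directed colimits.

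First I would fix a directed diagram $D\colon I\to\C$ with colimit cocone $\kappa_i\colon D_i\to \colim_i D_i$. Two facts are available: since $U$ is fully faithful, there is a natural isomorphism $\Phi_Y\colon \C(X,Y)\cong \cat D(UX,UY)$; and since $U$ is finitary, it sends the colimit cocone $\kappa_i$ to a colimit cocone $U\kappa_i\colon UD_i\to U(\colim_i D_i)$, i.e.\ $U(\colim_i D_i)\cong\colim_i UD_i$. Combining these with finite presentability of $UX$ yields
\begin{align*}
\C(X,\colim_i D_i) &\cong \cat D(UX,\, U(\colim_i D_i)) && (U\text{ fully faithful})\\
&\cong \cat D(UX,\, \colim_i UD_i) && (U\text{ finitary})\\
&\cong \colim_i \cat D(UX, UD_i) && (UX\text{ fp})\\
&\cong \colim_i \C(X, D_i) && (U\text{ fully faithful}).
\end{align*}

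The remaining---and only delicate---step is to check that this composite isomorphism is precisely the canonical comparison map $\colim_i \C(X,D_i)\to\C(X,\colim_i D_i)$ induced by the cocone $\bigl(\C(X,\kappa_i)\bigr)_{i\in I}$, rather than merely an abstract bijection between the two sets. For this I would verify that the naturality square for $\Phi$ at $\kappa_i$ commutes, so that $\Phi$ carries the cocone $\C(X,\kappa_i)$ to the cocone $\cat D(UX,U\kappa_i)$. Since $\Phi$ is a natural isomorphism, it induces an isomorphism on the directed colimits that is compatible with the two comparison maps; and since $U$ is finitary, the comparison map for $\cat D(UX,-)$ along $U\kappa_i$ is exactly the one witnessing finite presentability of $UX$. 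Hence the canonical comparison map for $\C(X,-)$ is an isomorphism, which is precisely what finitarity of $\C(X,-)$---and therefore finite presentability of $X$---demands.

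The main obstacle is thus not any substantial construction but the bookkeeping of naturality that identifies the canonical comparison morphism on the $\C$-side with the one on the $\cat D$-side; once the relevant squares are seen to commute componentwise, the conclusion is immediate.
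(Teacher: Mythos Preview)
Your proposal is correct and follows essentially the same approach as the paper: the same four-step chain of isomorphisms using full faithfulness, finitarity of $U$, and finite presentability of $UX$. You are in fact slightly more careful than the paper, which writes down the chain without explicitly remarking that the composite coincides with the canonical comparison map.
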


\begin{proof}
Let \(C\) be an object of \(\C\) such that \(UC\) is fp in \(\D\).
Then for every directed diagram \(C_{i}, i \in I\) in \cat C we have
\begin{align*}
  \cat{C}(C, \colim_{i} C_{i}) &\cong \cat{D}(UC, U\colim_{i} C_{i}) && \text{\(U\) fully faithful} \\
                               &\cong \cat{D}(UC, \colim_{i}UC_{i}) && \text{\(U\) finitary} \\
                               &\cong \colim_{i}\cat{D}(UC, UC_{i}) && \text{\(UC\) fp} \\
                               &\cong \colim_{i} \cat{C}(C, C_{i}) && \text{\(U\) fully faithful. }
\end{align*}
This proves that $C\in \Cfp$.
\end{proof}
Recall that an \emph{algebra} for an endofunctor $F\colon \C\to \C$ (short: an \emph{\(F\)-algebra}) is a pair $(A,a)$ of an object $A\in \C$ and a morphism $a\colon FA\to A$. A \emph{morphism} $h\colon (A,a)\to (B,b)$ of $F$-algebras is a morphism $h\colon A\to B$ of $\C$ such that $a\seq h= Fh\seq b$. We write $\Alg(F)$ for the category of $F$-algebras and their morphisms.

\begin{defn}
  Let \(\C\) be a category and let \(\D \in \{\Alg(\T), \Alg(F), \Mon(\C)\}\).
  We call an object of \(\D\) \emph{fp-carried} if its underlying \(\C\)-object is finitely presentable.
\end{defn}

\begin{proposition}\label{prop:fp-carrier-f-alg}
  Suppose that \(\C\) has directed colimits and that $F\colon \C \to \C$ is finitary and preserves finitely presentable objects.
  Then every fp-carried \(F\)-algebra is finitely presentable.
\end{proposition}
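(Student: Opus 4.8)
The plan is to verify the factorization criterion for finite presentability directly in $\Alg(F)$, feeding on finite presentability in $\C$ of \emph{both} $A$ and $FA$. The first step is to pin down how directed colimits behave in $\Alg(F)$: since $F$ is finitary, the forgetful functor $U\colon \Alg(F)\to \C$ creates directed colimits. Concretely, for a directed diagram $(B_i,b_i)_{i\in I}$ whose underlying colimit is $B=\colim_i B_i$ with injections $\kappa_i\colon B_i\to B$ in $\C$, the isomorphism $FB\cong \colim_i FB_i$ lets me equip $B$ with the unique structure map $b\colon FB\to B$ induced by the cocone $(b_i\seq \kappa_i)_i$; one checks this cocone is compatible using that each $\kappa_{i,j}$ is an $F$-algebra morphism, so that each $\kappa_i$ becomes an $F$-algebra morphism and $(B,b)=\colim_i(B_i,b_i)$ in $\Alg(F)$. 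I will record this as the underlying computation of directed colimits in $\Alg(F)$. I also note at the outset that $FA$ is finitely presentable in $\C$: $A$ is fp by the fp-carried hypothesis, and $F$ preserves fp objects by assumption.

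For the existence half of the factorization criterion, take an $F$-algebra morphism $f\colon (A,a)\to (B,b)$ into a directed colimit. Forgetting structure, finite presentability of $A$ in $\C$ yields an index $i$ and a $\C$-morphism $g\colon A\to B_i$ with $g\seq \kappa_i = f$. This $g$ need not respect the algebra structure, and repairing it is the crux. Using that $f$ is an algebra morphism ($a\seq f = Ff\seq b$) together with the algebra law $F\kappa_i\seq b = b_i\seq \kappa_i$, a short diagram chase gives $a\seq g\seq \kappa_i = Fg\seq b_i\seq \kappa_i$; that is, the two $\C$-morphisms $a\seq g$ and $Fg\seq b_i$ from $FA$ to $B_i$ become equal after postcomposition with $\kappa_i$. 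Since $FA$ is fp, essential uniqueness of factorizations out of $FA$ supplies a connecting morphism $\kappa_{i,j}$ with $a\seq g\seq \kappa_{i,j} = Fg\seq b_i\seq \kappa_{i,j}$. The corrected map $g' = g\seq \kappa_{i,j}\colon A\to B_j$ is then checked to be an $F$-algebra morphism: using that $\kappa_{i,j}$ is one ($F\kappa_{i,j}\seq b_j = b_i\seq \kappa_{i,j}$) one computes $Fg'\seq b_j = Fg\seq b_i\seq \kappa_{i,j} = a\seq g' $, and moreover $g'\seq \kappa_j = g\seq \kappa_i = f$.

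For essential uniqueness, suppose two $F$-algebra morphisms $g_1,g_2\colon (A,a)\to (B_i,b_i)$ satisfy $g_1\seq \kappa_i = g_2\seq \kappa_i$. As $\C$-morphisms they agree after composing with $\kappa_i$, so finite presentability of $A$ in $\C$ alone furnishes a connecting morphism $\kappa_{i,j}$ with $g_1\seq \kappa_{i,j} = g_2\seq \kappa_{i,j}$; these composites are again $F$-algebra morphisms. This establishes both conditions, so $(A,a)$ is finitely presentable in $\Alg(F)$.

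I expect the only genuine obstacle to be the repair step in the existence part: the naive factorization lives in $\C$ rather than in $\Alg(F)$, and it is precisely the finite presentability of $FA$ -- hence the hypothesis that $F$ preserves fp objects, combined with $A$ being fp-carried -- that lets me promote the set-level factorization to an $F$-algebra morphism at a later stage of the diagram. The finitariness of $F$ enters only to guarantee that directed colimits in $\Alg(F)$ are computed on underlying objects, which is what makes the criterion testable against colimits formed in $\C$.
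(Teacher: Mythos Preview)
Your proof is correct and follows essentially the same approach as the paper: factor through the underlying directed colimit using that $A$ is fp in $\C$, then repair the factorization to an $F$-algebra morphism by passing to a later stage using that $FA$ is fp, with essential uniqueness inherited from $\C$. Your write-up is in fact slightly more careful than the paper's (you spell out the essential-uniqueness half and the verification that $g'$ is an algebra morphism), but the argument is the same.
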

The analogous statement for $F$-\emph{co}algebras is known~\cite[Lemma~3.2]{ap04}, and the proof follows a similar pattern.
\begin{proof}
  Let \((A, a)\) be an \(F\)-algebra such that \(A\in \Cfp\). To prove that $(A,a)$ is finitely presentable as an algebra, let \((A_{i}, a_{i})\), $i \in I$ be a directed diagram in \(\Alg(F)\).
  The colimit is created by the forgetful functor \(\Alg(F) \rightarrow \C\) since \(F\) is finitary (see~\cite[Thm~4.6]{rutten00} for the dual result for coalgebras). Therefore, the colimit is given by  \((\colim A_{i}, a)\) for the unique \(F\)-algebra structure \(a \colon F(\colim A_{i}) \rightarrow \colim A_{i}\) such that the colimit injections \(\kappa_{i}\colon A_i \to \colim A_i\) are  algebra morphisms.
  Now let \(f \colon (A,a) \rightarrow (\colim A_{i}, a)\) be an algebra morphism. We have to show that it factorizes essentially uniquely through a colimit injection.
  Since \(A\) is finitely presentable in~\C, the underlying \C-morphism \(f \colon A \rightarrow \colim A_{i}\) factorizes in \(\C\) through a colimit injection as  \(f = f_{i} \seq \kappa_{i} \) for some morphism $f_i\colon A \rightarrow A_{i}\). We do not claim that \(f_{i}\) is an algebra morphism.
  But we see that $\kappa_i$ merges $a \seq f_i$ and $Ff \seq F a_i$ using that $f$ is an algebra morphism:
  \[
    (a \seq f_{i}) \seq  \kappa_{i} = a \seq f = F f \seq a  = F f_{i} \seq F \kappa_{i} \seq a = (F f_{i}  \seq F a_{i}) \seq \kappa_{i}.
  \]
  Since \(F\) preserves fp objects, we have that \(FA \in \Cfp\). Thus, there exists a connecting morphism \(a_{i,j}\) merging \(a \seq f_{i} \) and \(F f_{i} \seq a_{i} \).
  The composite \(f_{j} = f_{i} \seq a_{i,j} \) is a morphism of \(F\)-algebras; indeed, the outside of the diagram below commutes since so do the upper and lower parts as well as the right-hand square, whereas the left-hand square commutes when post-composed with $a_{i,j}$:%
  \[
    \begin{tikzcd}
      A
      \rar{f_{i}}
      \ar[shiftarr={yshift=20}]{rr}{f_{j}}
      &
      A_{i}
      \rar{a_{i,j}}
      &
      A_{j}
      \\
      FA
      \uar[swap]{a}
      \rar{F f_{i}}
      \ar[shiftarr={yshift=-20}]{rr}{F f_{j}}
      &
      F A_{i}
      \uar[swap]{a_{i}}
      \rar{F a_{i,j}}
      &
      F A_{j}
      \uar[swap]{a_{j}}
    \end{tikzcd}
  \]
  Moreover, $f_j$ is the desired factorization: $f_j \seq \kappa_j = f_i \seq a_{i, j} \seq \kappa_j = f_i \seq \kappa_i = f$. Essential uniqueness of the factorization is clear since it already holds in $\C$. This proves that $(A,a)$ is finitely presentable.
\end{proof}

The above proposition extends from functor algebras to monoids:

\begin{theorem}\label{thm:mon-fp}
  Let \((\C,\otimes,I)\) be a closed monoidal category with directed colimits such that $\otimes$ preserves finitely presentable objects and  and $I$ is fp.
  Then every fp-carried monoid is finitely presentable.
\end{theorem}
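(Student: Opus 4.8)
The plan is to adapt the argument of \Cref{prop:fp-carrier-f-alg} from functor-algebras to monoid objects, treating the two structure maps---the multiplication $\mult\colon M\otimes M\to M$ and the unit $e\colon I\to M$---separately and then amalgamating the resulting factorizations by exploiting directedness of the diagram.

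First I would establish that the forgetful functor $U\colon\Mon(\C)\to\C$ creates directed colimits. The key input is that $\otimes$ preserves directed colimits: closedness makes $-\otimes A$ a left adjoint, so $\otimes$ is finitary in each variable and hence jointly finitary by \Cref{prop:pointwise-finitary-implies-finitary}. Combined with finality of the diagonal $\Delta\colon I\to I\times I$ (established in the proof of that proposition), this yields $M\otimes M\cong\colim_i(M_i\otimes M_i)$ for a directed diagram of monoids with colimit $M=\colim_i M_i$ in $\C$. One then transports the multiplications $\mult_i$ and units $e_i$ along this isomorphism to equip $M$ with a monoid structure making the injections $\kappa_i$ monoid morphisms, and the monoid laws transfer from the $M_i$ because the relevant iterated tensor powers are again directed colimits. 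This makes $(M,\mult,e)$ the colimit in $\Mon(\C)$.

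With colimits in hand, let $(M,\mult,e)$ be an fp-carried monoid and $f\colon M\to\colim_i M_i$ a monoid morphism into a directed colimit. Since $M\in\Cfp$, its underlying morphism factors in $\C$ as $f=f_i\seq\kappa_i$ for some not-necessarily-monoid morphism $f_i\colon M\to M_i$. I would then correct $f_i$ into a monoid morphism by two applications of essential uniqueness. For the multiplication, both $\mult\seq f_i$ and $(f_i\otimes f_i)\seq\mult_{M_i}$ are merged by $\kappa_i$ (using that $f$ and $\kappa_i$ are monoid morphisms), and since $M\otimes M\in\Cfp$---because $\otimes$ preserves fp objects---there is a connecting morphism equalizing them after composition. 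For the unit, both $e\seq f_i$ and $e_{M_i}$ are merged by $\kappa_i$, and since $I\in\Cfp$ by hypothesis there is a connecting morphism equalizing those. Using directedness I would pick a common upper bound $j$ of the two indices; post-composing $f_i$ with the connecting morphism $\kappa_{i,j}$ then yields $f_j\colon M\to M_j$ which is genuinely a monoid morphism and satisfies $f_j\seq\kappa_j=f$. Essential uniqueness of this factorization in $\Mon(\C)$ follows from essential uniqueness in $\C$ together with faithfulness of $U$, exactly as in \Cref{lem:ful-faith-fin-reflect-fp}.

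I expect the main obstacle to be the bookkeeping in the colimit-creation step---specifically, verifying that the induced multiplication on $\colim_i M_i$ is well-defined and associative, which hinges on $\otimes$ preserving the relevant binary and ternary directed colimits. Once the finitariness of $\otimes$ is pinned down via \Cref{prop:pointwise-finitary-implies-finitary}, the remaining two-structure-map correction is a routine elaboration of \Cref{prop:fp-carrier-f-alg}, the only genuinely new ingredient being the use of directedness to merge the two independent connecting morphisms into one.
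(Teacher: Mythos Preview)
Your approach is correct and differs from the paper's in an instructive way. The paper packages both structure maps into a single endofunctor $FX = I + X\otimes X$, observes that $F$ is finitary (via closedness and \Cref{prop:pointwise-finitary-implies-finitary}) and fp-preserving (via the hypotheses and closure of fp objects under finite coproducts), applies \Cref{prop:fp-carrier-f-alg} once to conclude that the underlying $F$-algebra is fp, and then invokes \Cref{lem:ful-faith-fin-reflect-fp} for the fully faithful finitary inclusion $\Mon(\C)\hookrightarrow\Alg(F)$. Your argument instead unpacks the proof of \Cref{prop:fp-carrier-f-alg} and runs it twice---once for $\mult$ and once for $e$---amalgamating the two corrections via directedness. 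The paper's route is shorter and more modular, reusing the lemmas wholesale; your route is more explicit and, incidentally, does not require $\C$ to have binary coproducts (which the paper's formation of $F$ tacitly uses but the theorem statement does not assume). Both proofs rely on the same underlying ideas: finitariness of $\otimes$ from closedness, and the creation of directed colimits by the forgetful functor.
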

\begin{proof}
  Since the category \(\C\) is closed, the tensor preserves arbitrary colimits in both arguments and thus is finitary by \autoref{prop:pointwise-finitary-implies-finitary}.
  Therefore the functor \(FX = I + X\otimes X\) is finitary, and it is fp-preserving by assumption. (Note that fp objects are closed under finite coproducts in every category.)
  Consider the functor \(U \colon \Mon(\C) \rightarrow \Alg(F)\) that turns a monoid $(M,\mult,e)$ into an $F$-algebra $I+M\otimes M\xto{[e,\mult]} M$, and is identity on morphisms. If $M\in \Cfp$, then
   $U(M,\mult,e)$ is finitely presentable in \(\Alg(F)\) by \autoref{prop:fp-carrier-f-alg}.
  Since \(U\) is fully faithful and finitary, \autoref{lem:ful-faith-fin-reflect-fp} shows that \((M,\mult,e)\) is finitely presentable in \(\Mon(\C)\).
\end{proof}
By \Cref{prop:tensor-pres-fp} the condition of the above theorem is satisfied in the category of algebras for a finitary commutative monad. Therefore, we conclude:

\begin{corollary}\label{cor:fb-impl-fp}
  Let \(\T\) be a finitary commutative monad on \Set and let $M$ be a $\T$-monoid. If the $\T$-algebra underlying $M$ is finitely presentable, then $M$ is finitely presentable as a $\T$-monoid.
\end{corollary}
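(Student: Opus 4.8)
The plan is to obtain the result as a direct application of \Cref{thm:mon-fp} to the closed monoidal category $\C=\Alg(\T)$. First I would verify that $\Alg(\T)$ meets the hypotheses of that theorem. The tensor product $\otimes$ and the tensor unit $\T 1$ turning $\Alg(\T)$ into a closed monoidal category were constructed above; directed colimits exist in $\Alg(\T)$ because $\T$ is finitary; and the unit $\T 1$ is finitely presentable, being the free algebra on the one-element set. The remaining and only substantive hypothesis, namely that $\otimes$ preserves finitely presentable objects, is precisely \Cref{prop:tensor-pres-fp}, which applies since $\T$ is finitary and commutative. Thus all hypotheses of \Cref{thm:mon-fp} hold for $\C=\Alg(\T)$.

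Next I would invoke the identification, established in the discussion of the monoidal category of $\T$-algebras, of monoids in $\Alg(\T)$ with $\T$-monoids: since $\otimes$ represents $\T$-bimorphisms and the tensor unit is $\T 1$, the category $\Mon(\Alg(\T))$ is isomorphic to the category of $\T$-monoids, and under this isomorphism the underlying $\Alg(\T)$-object of a monoid is exactly the $\T$-algebra underlying the corresponding $\T$-monoid. Hence a $\T$-monoid is fp-carried (in the sense of the definition preceding \Cref{prop:fp-carrier-f-alg}) precisely when its underlying $\T$-algebra is finitely presentable.

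With these two observations the corollary is immediate. By assumption the $\T$-algebra underlying $M$ is finitely presentable, so $M$, viewed as a monoid in $\Alg(\T)$, is fp-carried. \Cref{thm:mon-fp} then yields that $M$ is finitely presentable in $\Mon(\Alg(\T))$, which under the above isomorphism means that $M$ is finitely presentable as a $\T$-monoid. I do not expect a genuine obstacle here, as the technical work has been discharged in \Cref{thm:mon-fp} and \Cref{prop:tensor-pres-fp}; the only point requiring care is to confirm that the abstract categorical notion of finite presentability in $\Mon(\Alg(\T))$ coincides with the notion of finitely presentable $\T$-monoid defined via the algebraic theory of $\T$-monoids. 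This coincidence is automatic, since finite presentability is invariant under isomorphism of categories and, $\T$ being finitary, the theory of $\T$-monoids induces a finitary monad whose category of algebras is exactly the category of $\T$-monoids.
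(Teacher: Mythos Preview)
Your proposal is correct and follows exactly the paper's approach: the corollary is deduced from \Cref{thm:mon-fp} applied to $\C=\Alg(\T)$, with \Cref{prop:tensor-pres-fp} supplying the key hypothesis that $\otimes$ preserves finitely presentable objects. The additional verifications you spell out (existence of directed colimits, finite presentability of $\T 1$, the identification $\Mon(\Alg(\T))\cong$ $\T$-monoids, and the coincidence of the two notions of finite presentability) are all implicit in the paper's one-line derivation and are handled correctly.
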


\section{Syntactic Convex Monoids and Transition Monoids}\label{app:trans-mon}
We present an alternative construction of the syntactic convex monoid of a probabilistic language (\Cref{def:syn-conv-mon}) as a \emph{transition monoid}. For this purpose we first need a probabilistic automaton model that admits \emph{canonical} minimal automata, which PFA do not. We instead consider deterministic automata in the category $\conv$ of  convex sets and affine maps.

\begin{defn}
  Let \(\Sigma\) be a finite alphabet.
  A \emph{convex automaton} consists of a convex state set \(Q\), an inital state \(q_{0} \in Q\), an affine output map \(o \in Q\to [0,1]\) and a transition map \(d \colon Q \times \Sigma \rightarrow Q\) such that every \(d(-, a) \colon Q \rightarrow Q\) is affine.
\end{defn}

Note that convex automata  are simply algebras for the endofunctor \(F = 1 + - \tensor \mathcal{D} \Sigma\) on \conv with an affine output map.
Equivalently, convex automata are pointed coalgebras for the endofunctor \(G = [0, 1] \times (-)^{\Sigma}\) on \conv.
The free convex monoid \(\mathcal{D} \Sigma^{*}\) naturally carries the structure of an \(F\)-algebra by right multiplication \(\mathcal{D} \Sigma^{*} \otimes \mathcal{D} \Sigma \cong \D (\Sigma^{*} \times \Sigma) \rightarrow \D \Sigma^{*}\) and with \(\eta_{\Sigma^{*}}(\varepsilon) \colon 1 \rightarrow \D \Sigma^{*}\) as initial state.
Dually, the convex set of all probabilistic languages $[0,1]^{\Sigma^*} \cong \conv(\mathcal{D} \Sigma^{*}, [0, 1])\) is a \(G\)-coalgebra with transition map the curryfication of
\[\conv(\mathcal{D} \Sigma^{*}, [0, 1]) \tensor \mathcal{D} \Sigma \tensor \mathcal{D} \Sigma^{*} \xrightarrow{\id \tensor m} \conv(\mathcal{D} \Sigma^{*}, [0, 1]) \tensor \mathcal{D} \Sigma^{*} \xrightarrow{\ev} [0, 1],\]
where \(m\) is monoid multiplication in \(\mathcal{D} \Sigma^{*}\).
The output map is evaluation at the empty word.

\begin{notheorembrackets}
  \begin{corollary} [{\cite{goguen-75,amu18}}]
    \begin{enumerate}[(1)]
      \item The $F$-algebra \(\mathcal{D} \Sigma^{*}\) is the initial \(F\)-algebra.
      \item The $G$-coalgebra \([0, 1]^{\Sigma^{*}}\) is the final \(G\)-coalgebra.
    \end{enumerate}
  \end{corollary}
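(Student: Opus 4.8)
The plan is to prove the two parts separately, each by transporting a well-understood universal property along a (co)continuous functor between $\Set$ and $\conv$, and then matching the transported structure with the concrete one described above. Conceptually this is the standard initial-algebra/final-coalgebra pair for language automata, lifted from $\Set$ to $\conv$.

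For part (1), the idea is to exploit that the free-convex-set functor $\D\colon \Set\to\conv$ is a strong monoidal left adjoint, as recorded in \Cref{app:cats}: it preserves colimits and satisfies $\D 1\cong 1$ (the one-point convex set, which is the tensor unit) and $\D(X\times \Sigma)\cong \D X\tensor \D\Sigma$. Writing $F_{\Set}Y = 1 + Y\times\Sigma$ for the set-level polynomial functor whose initial algebra is $\Sigma^{*}$ with structure $[\varepsilon,\snoc]$, these facts combine into a natural isomorphism $\D F_{\Set}\cong F\D$, since $\D$ preserves the coproduct and the product with $\Sigma$ becomes a tensor. First I would note that $F = 1 + (-)\tensor\D\Sigma$ is finitary: the tensor is cocontinuous in each argument because $\conv$ is closed, hence finitary as a bifunctor by \Cref{prop:pointwise-finitary-implies-finitary}, so $-\tensor\D\Sigma$ and thus $F$ are finitary. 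Its initial algebra is therefore the colimit $\colim_{n<\omega}F^{n}0$ of the initial-algebra chain. Because $\D 0\cong 0$ (there is no distribution with empty support) and $\D F_{\Set}\cong F\D$, the functor $\D$ carries the $\Set$-level initial chain isomorphically onto the $\conv$-level one; preserving its colimit yields $\D\Sigma^{*}\cong\colim_{n<\omega}F^{n}0$, the initial $F$-algebra. Equivalently one may argue monoidally: $\D\Sigma^{*}\cong\coprod_{n}(\D\Sigma)^{\tensor n}$ is the free monoid on $\D\Sigma$ in $(\conv,\tensor,\D 1)$, which is exactly the initial $F$-algebra. The remaining bookkeeping is to check that the structure transported from $[\varepsilon,\snoc]$ is precisely right multiplication $\D\Sigma^{*}\tensor\D\Sigma\to\D\Sigma^{*}$ together with $\eta_{\Sigma^{*}}(\varepsilon)$.

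For part (2), I would obtain finality dually, via the contravariant hom-functor $\Phi=\conv(-,[0,1])\colon \conv^{\mathrm{op}}\to\conv$ into the convex set $[0,1]$. The key computation is a natural isomorphism relating $\Phi\circ F^{\mathrm{op}}$ with $G\circ\Phi$: using $\conv(1,[0,1])\cong[0,1]$, the tensor–hom adjunction $\conv(X\tensor\D\Sigma,[0,1])\cong\conv(X,\conv(\D\Sigma,[0,1]))$, the identification $\conv(\D\Sigma,[0,1])\cong[0,1]^{\Sigma}$ coming from the free–forgetful adjunction, and finiteness of $\Sigma$ (so that $\conv(X,-)$ preserves the finite power $(-)^{\Sigma}$), one gets $\conv(FX,[0,1])\cong[0,1]\times\conv(X,[0,1])^{\Sigma}=G\,\conv(X,[0,1])$. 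Since hom-functors send colimits in the first argument to limits, $\Phi$ turns the initial-algebra chain for $F$ into the terminal-coalgebra chain $1\leftarrow G1\leftarrow G^{2}1\leftarrow\cdots$ for $G$ (note $\Phi 0\cong 1$, the terminal convex set). As $G=[0,1]\times(-)^{\Sigma}$ preserves all limits, its terminal chain converges at $\omega$ and its limit is the final $G$-coalgebra. Applying $\Phi$ to part (1) then gives $\conv(\D\Sigma^{*},[0,1])\cong\lim_{n<\omega}G^{n}1$, the final coalgebra; composing with the established isomorphism $[0,1]^{\Sigma^{*}}\cong\conv(\D\Sigma^{*},[0,1])$ identifies the final coalgebra with $[0,1]^{\Sigma^{*}}$, and one checks that its coalgebra structure is the evaluation/derivative map described above.

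I expect the main obstacle to be the careful verification of the two functor isomorphisms $\D F_{\Set}\cong F\D$ and $\Phi\circ F^{\mathrm{op}}\cong G\Phi$ \emph{as isomorphisms compatible with the (co)algebra structures}, rather than merely on objects: one must track naturality and ensure that, after transport, the induced structures are exactly right multiplication in part (1) and the derivative/evaluation coalgebra in part (2). By contrast, convergence of the terminal chain and preservation of the relevant (co)limit are comparatively standard, following from $G$ preserving limits and $\Phi$ being continuous in its contravariant argument.
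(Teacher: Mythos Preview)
The paper does not provide its own proof of this statement; it is stated as a corollary with citations to Goguen~\cite{goguen-75} and Ad\'amek, Milius, and Urbat~\cite{amu18}, and the text moves on immediately afterwards. So there is nothing to compare against on the paper's side.

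Your proposal is correct and is essentially the standard argument. For part~(1), lifting the $\Set$-level initial algebra $\Sigma^{*}$ of $Y\mapsto 1+Y\times\Sigma$ along the strong monoidal left adjoint $\D\colon\Set\to\Conv$ works exactly as you describe: $\D$ preserves the initial object and the $\omega$-chain colimit, and the strong monoidal structure yields the natural isomorphism $\D F_{\Set}\cong F\D$. The alternative free-monoid description $\D\Sigma^{*}\cong\coprod_{n}(\D\Sigma)^{\otimes n}$ is equally valid and is in fact closer in spirit to how~\cite{amu18} phrases things. For part~(2), the dualization via $\Phi=\Conv(-,[0,1])$ is clean; the chain of isomorphisms you give for $\Phi F\cong G\Phi$ is correct, and $G=[0,1]\times(-)^{\Sigma}$ is continuous so the terminal sequence converges at $\omega$. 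The only places requiring care are exactly the ones you flag: verifying naturality of the lifting isomorphisms and checking that the transported structure maps agree with right multiplication and with the evaluation/derivative coalgebra, respectively. These are routine.
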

\end{notheorembrackets}

A convex automaton \(A\) is \emph{reachable} if the unique \(F\)-algebra morphism \(\mathcal{D} \Sigma^{*} \rightarrow A\) is surjective and it is \emph{simple} if the unique \(G\)-coalgebra morphism \(A \rightarrow [0, 1]^{\Sigma^{*}}\) is injective; \(A\) is \emph{minimal} if it is both reachable and simple.
Given a probabilistic language \(L\) we can turn the initial \(F\)-algebra \(\mathcal{D}\Sigma^{*}\) and the final \(G\)-coalgebra \([0, 1]^{\Sigma^{*}}\) into convex automata by choosing \(L\) as output map and initial state, respectively,
to get the initial and final automata recognizing \(L\).
Then the \emph{minimal convex automaton} \(\ma(L)\) is then given by the image of the unique automaton morphism \(\mathcal{D} \Sigma^{*} \rightarrow [0, 1]^{\Sigma^{*}}\) (obtained via its surjective-injective factorization), see~\cite{goguen-75}.
The \emph{transition monoid} \(\tr(A)\) of a convex automaton \(A = (Q, d, i)\) is the image of the monoid homomorphism \(\mathcal{D} \Sigma^{*} \rightarrow \conv(Q, Q)\) induced by \(a \mapsto d(-, a)\).

\begin{notheorembrackets}
  \begin{corollary}[\cite{amu18}]\label{thm:synt-mon-trans-mon}
 Let $L\colon \Sigma^*\to [0,1]$ be a probabilistic language, then the syntactic convex monoid is the transition monoid of the minimal automaton of \(L\): \[\syn(L) \cong \tr(\ma(L)).\]
  \end{corollary}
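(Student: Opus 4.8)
The plan is to realize the transition monoid $\tr(\ma(L))$ explicitly as a quotient of the free convex monoid $\D\Sigma^*$ and to identify the associated congruence with the syntactic congruence $\approx_L$ of \eqref{eq:synt-cong}; the isomorphism $\syn(L)\cong\tr(\ma(L))$ then drops out of the description of $\syn(L)$ in \Cref{thm:syn-conv-mon}. Write $\ma(L)=(Q,q_0,o,d)$ for the minimal convex automaton, and let $q\cdot\phi$ denote the affine right action of $\phi\in\D\Sigma^*$ on a state $q\in Q$ induced by the $F$-algebra structure, so that $q\cdot(\phi\psi)=(q\cdot\phi)\cdot\psi$ and the reachability morphism is $r=q_0\cdot(-)\colon\D\Sigma^*\epito Q$. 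By definition $\tr(\ma(L))$ is the image of the map $t_L\colon\D\Sigma^*\to\conv(Q,Q)$, $t_L(\phi)=\bigl(q\mapsto q\cdot\phi\bigr)$, which is an affine monoid morphism since $\D\Sigma^*$ is the free convex monoid on $\Sigma$ (\Cref{ex:conv-mon}) and $\conv(Q,Q)$ is a convex monoid under composition (the internal hom $[Q,Q]$, cf.\ \Cref{rem:t-mon}). Corestricting to its image yields a surjective affine monoid morphism $t_L\colon\D\Sigma^*\epito\tr(\ma(L))$.

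First I would check that $\tr(\ma(L))$ recognizes $L$ through $t_L$. Define $p_{\tr}\colon\tr(\ma(L))\to[0,1]$ by $p_{\tr}(\tau)=o(\tau(q_0))$. Evaluation at $q_0$ is affine and $o$ is affine, so $p_{\tr}$ is affine; and $p_{\tr}(t_L(w))=o(q_0\cdot w)=L(w)$ for all $w\in\Sigma^*$ since $\ma(L)$ recognizes $L$. Thus $\eta_{\Sigma^*}\seq t_L$ recognizes $L$ via $p_{\tr}$.

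The crux is the kernel computation. For $\phi,\psi\in\D\Sigma^*$ I would argue
\begin{align*}
  t_L(\phi)=t_L(\psi)
  &\iff \forall q\in Q:\ q\cdot\phi=q\cdot\psi\\
  &\iff \forall w\in\Sigma^*:\ q_0\cdot(w\phi)=q_0\cdot(w\psi)\\
  &\iff \forall w,y\in\Sigma^*:\ o\bigl(q_0\cdot(w\phi y)\bigr)=o\bigl(q_0\cdot(w\psi y)\bigr)\\
  &\iff \forall w,y\in\Sigma^*:\ L(w\phi y)=L(w\psi y)
  \iff \phi\approx_L\psi.
\end{align*}
The second equivalence uses \emph{reachability}: as $r$ is a surjective affine map, the states $q_0\cdot w$ ($w\in\Sigma^*$) generate $Q$ as a convex set, so two affine endomaps agree everywhere iff they agree on these generators, and the action law rewrites $(q_0\cdot w)\cdot\phi=q_0\cdot(w\phi)$. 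The third equivalence uses \emph{simplicity}: the observability morphism $Q\monoto[0,1]^{\Sigma^*}$, $q\mapsto(y\mapsto o(q\cdot y))$, is injective, so equality of states reduces to equality of output behaviours. The final step unfolds $L(w\phi y)=\sum_i r_iL(wv_iy)$ for $\phi=\sum_i r_iv_i$ (affineness of concatenation in $\D\Sigma^*$ and of $L$), matching \eqref{eq:synt-cong}. Hence the kernel congruence of $t_L$ is exactly $\approx_L$.

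Finally, by the homomorphism theorem for convex monoids (which form a variety), the surjection $t_L$ with kernel $\approx_L$ factors as an isomorphism $\D\Sigma^*/{\approx_L}\xrightarrow{\,\cong\,}\tr(\ma(L))$; since $\syn(L)=\D\Sigma^*/{\approx_L}$ by \Cref{thm:syn-conv-mon}, this gives $\syn(L)\cong\tr(\ma(L))$, and one checks that the isomorphism carries $p_L$ to $p_{\tr}$, so it is an isomorphism of syntactic recognizers. The main obstacle is the middle of the displayed chain: one must carefully exploit reachability to reduce equality of affine transition maps to the word-indexed generators and simplicity to replace state-equality by behavioural equality, and then align the two quantifiers $w,y$ with the quantifiers $x,y$ of \eqref{eq:synt-cong} while correctly propagating the affine extensions of the action and of $L$. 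Alternatively, since the result is an instance of the general minimal-automaton/syntactic-monoid correspondence of \cite{amu18}, one could instead verify that its hypotheses hold for convex automata, but the direct argument above is more self-contained.
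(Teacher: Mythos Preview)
Your argument is correct. The paper itself does not prove this corollary: it is stated as an instance of the general syntactic-monoid/transition-monoid correspondence from~\cite{amu18} and is simply cited as such. You instead give a direct, self-contained proof by identifying the kernel of the canonical surjection $t_L\colon \D\Sigma^*\twoheadrightarrow \tr(\ma(L))$ with the syntactic congruence~$\approx_L$, using reachability to reduce to the word-indexed generators of $Q$ and simplicity to pass from state equality to behavioural equality, then invoking the first isomorphism theorem and \Cref{thm:syn-conv-mon}. This is exactly the concrete unwinding one would perform to verify the cited abstract result in the convex setting; your version has the advantage of being elementary and specific to $\D$, while the paper's deferral to~\cite{amu18} buys uniformity across all commutative monads at the cost of opacity. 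One small point worth making explicit in your write-up is that $t_L(\phi)$ is affine in $q$ (not just in $\phi$): this is because each $d(-,a)$ is affine by the definition of convex automaton, hence so is $(-)\cdot w$ for $w\in\Sigma^*$, and then $(-)\cdot\phi$ is a convex combination of these; you use this implicitly when reducing the equality $t_L(\phi)=t_L(\psi)$ on all of $Q$ to equality on the convex generators $q_0\cdot w$.
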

\end{notheorembrackets}

\section{Omitted Proofs and Details}\label{app:proofs}

\detailsfor{def:xi-lambda-pi}
The map \(\xi_{X,Y}\colon \D(X \rightarrow Y) \rightarrow (X \rightarrow \D
Y)\) is the free extension \(\xi_{X, Y} = s^{\#}\) of \[s \colon (X
\rightarrow Y) \rightarrow (X \rightarrow \D Y), \quad f \mapsto f \seq
\eta_{Y}\] and therefore well-defined.

It is a standard result that the product distribution $\pi_{Y,Z}$ is a
probability distribution on $Y\times Z$. Iterating the product distribution
$|X|$-times for a fixed set $Y$ (i.e.~$Z := Y$) and
identifying such $|X|$-tuples with maps $f\colon X\to Y$ yields the
definition of $\lambda_{X,Y}$, which is therefore a probability distribution, too.

\appendixproof{lem:xi-surj}
The equation \(\lambda_{X, Y} ; \xi_{X, Y} = \id\) holds for every affine monad,
as shown in the \hyperref[proof:rem:affine-implies-xx-mon-fg]{Details for
\Cref{rem:affine-implies-xx-mon-fg}} (where the morphism \(\xi_{X, Y}\) is
denoted by \(h^{\#}\)).

\appendixproof[Proof Sketch for]{thm:fg-carried-implies-fp}

  [For a full proof, see the proof of the general \Cref{thm:fg-carried-implies-fp-general} for \(\T = \D\).]

  Let $M$ be an \finite convex monoid. Choose a finite presentation
  $(G,R,q)$ of $M$ as a convex set. Since the multiplication of $M$ is
  fully determined by its action on $q[G]$ by~\eqref{eq:conv-mon},
  this extends to a finite presentation of the convex monoid $M$ by
  adding a relation $g \mult g' = t_{g,g'}$ for each $g,g'\in G$,
  where $t_{g,g'}$ is any term in the signature of convex sets such
  that $q(t_{g,g'})=q(g)\mult q(g')$.

\detailsfor{ex:commutative-lang}
We prove that for every probabilistic language $L\colon \Sigma^* \kleislito 2$,
\[ \text{$L$ is commutative}\quad\text{iff}\quad \text{$\Syn(L)$ is commutative}.\]

  If \(\Syn(L)\) is commutative, then we have for all $a_1,\ldots,a_n\in \Sigma$ and all permutations $\pi$ of $\{1,\ldots,n\}$:
  \begin{align*}
L(a_1\cdots a_n) &= p_L(h_L(a_1\cdots a_n)) \\
&= p_L(h_{L}(a_{1}) \cdots h_{L}(a_{n})) \\
&= p_L(h_L(a_{\pi(1)}) \cdots h_L(a_{\pi(n)})) \\
&= p_L(h_L(a_{\pi(1)} \cdots a_{\pi(n)}))\\
& = L(a_{\pi(1)}\cdots a_{\pi(n)});
\end{align*}
the third equality uses commutativity of $\Syn(L)$. Thus, $L$ is commutative. 

Conversely, if \(L\) is commutative, then in its syntactic monoid all pairs of elements \([\sum_{i}r_{i}v_{i}], [\sum_{j}s_{j}w_{j}] \in \syn(L)\) satisfy
  \begin{align*}
    \textstyle
    \big[\sum_{i}r_{i} v_{i}\big]\big[\sum_{j}s_{j}w_{j}\big]
    &=
    \big[\sum_{i}\sum_{j} r_{i}s_{j} v_{i}w_{j}\big] \\
    &=
    \big[\sum_{j}\sum_{i} s_{j}r_{i}w_{j}v_{i}\big] \\
    &=
    \big[\sum_{j}s_{j} w_{j}\big] \big[\sum_{i}r_{i}v_{i}\big];
  \end{align*}
  the second equality uses commutativity of \(L\). Thus, $\Syn(L)$ is commmutative.

\appendixproof{thm:syn-conv-mon-cancellative}
  Note first that the full subcategory of $\Conv$ given by cancellative convex sets is closed under arbitrary products and convex subsets; this is because the cancellation property is defined via Horn implications of equations, so cancellative convex sets form a \emph{quasivariety}.
  Let \(L \colon \Sigma^{*} \rightarrow [0, 1]\) be a language.
  The minimal convex automaton \(\ma(L)\) of \(L\) is the image of the unique automaton morphism from the inital to the final convex automaton for $L$ (\Cref{app:trans-mon}).
  Its carrier \(A_{L}\) is cancellative since it is a convex subset of \([0, 1]^{\Sigma^{*}}\).
  By \autoref{thm:synt-mon-trans-mon} we get
  \[\syn(L) \cong \tr(\ma(L)) \incl \conv(A_{L}, A_{L}) \incl A_{L}^{|A_{L}|}.\]
 Thus, \(\syn(L)\) is cancellative, being a convex subset of a power of $A_L$.

\detailsfor{ex:syn-mon-not-fg-based-but-fp}
A convex monoid $M$ is {finitely presentable} iff there exists a finite set $G$ and a surjective affine monoid morphism $e\colon \D G^*\epito M$ whose kernel congruence $\{(\varphi,\varphi')\mid e(\varphi)=e(\varphi')\}$ is finitely generated. If $G=\{g_1,\ldots,g_m\}$ and the kernel is generated by the pairs $(\varphi_1,\varphi_1'),\ldots,(\varphi_n,\varphi_n')$, we say that $M$ is finitely presented by the generators $g_1,\ldots,g_m$ and equations $\varphi_1=\varphi_1',\ldots, \varphi_n=\varphi_n'$. We prove the following claim:
\begin{proposition}\label{prop:iso-01}
The convex monoid $((0, 1], \cdot, 1)$ is presented by a single generator $a$ and the equation $a^{0} +_{\frac{1}{3}} a^{2} = a^{1}$.
\end{proposition}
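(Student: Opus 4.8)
The plan is to realize the claimed presentation through a single evaluation morphism together with a normal-form argument. Write the free convex monoid on the one-letter alphabet $\{a\}$ as $\D\{a\}^*$, whose elements are finite convex combinations $\sum_k r_k a^{n_k}$ of powers $a^{n}=a\mult\cdots\mult a$, and let $\sim$ be the smallest convex-monoid congruence on $\D\{a\}^*$ containing the single pair $(a^{0}+_{\frac13}a^{2},\,a)$ (recall $x+_r y=rx+(1-r)y$, so $a^{0}+_{\frac13}a^{2}=\frac13 a^{0}+\frac23 a^{2}$). First I would introduce the evaluation map $e\colon\D\{a\}^*\to(0,1]$, the unique affine monoid morphism with $e(a)=\frac12$, given explicitly by $e(\sum_k r_k a^{n_k})=\sum_k r_k 2^{-n_k}$. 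It is a monoid morphism since $2^{-(m+n)}=2^{-m}2^{-n}$, and it is surjective because the convex hull of $\{2^{-n}:n\ge0\}$ in $\R$ is exactly $(0,1]$. The computation $e(a^{0}+_{\frac13}a^{2})=\frac13+\frac23\cdot\frac14=\frac12=e(a)$ shows that $e$ respects the defining relation, so $\sim\,\subseteq\ker e$ and $e$ factors as a surjective affine monoid morphism $\bar e\colon\D\{a\}^*/{\sim}\,\to(0,1]$. By the notion of finite presentation recalled above, the proposition reduces to the reverse inclusion $\ker e\subseteq\,\sim$, i.e.\ to showing that $\bar e$ is injective.

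The engine for this is the family of derived relations obtained by right-multiplying the generator by $a^{n}$ and using the right-distributive law \eqref{eq:conv-mon}, namely $\frac13 a^{n}+\frac23 a^{n+2}\sim a^{n+1}$ for every $n\ge0$; these express that, modulo $\sim$, consecutive powers are affinely spaced. Next I would fix a system of normal forms: the unit $a^{0}$ together with the two-consecutive-power combinations $r a^{N}+(1-r)a^{N+1}$ for $N\ge0$ and $r\in[0,1)$. Since $e$ sends such a form to $2^{-N-1}(r+1)$, which sweeps the half-open interval $[2^{-N-1},2^{-N})$ as $(N,r)$ varies, while $e(a^{0})=1$, the map $e$ restricts to a bijection from normal forms onto $(0,1]$.

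The heart of the argument is a reduction lemma: every element of $\D\{a\}^*$ is $\sim$-equivalent to a normal form. The basic move is a \emph{fold}: for $r\ge\tfrac13$ one has the convex decomposition $r a^{N}+(1-r)a^{N+2}=\tfrac{3r-1}{2}\,a^{N}+\tfrac{3-3r}{2}\,(\tfrac13 a^{N}+\tfrac23 a^{N+2})$, whence $r a^{N}+(1-r)a^{N+2}\sim\tfrac{3r-1}{2}\,a^{N}+\tfrac{3-3r}{2}\,a^{N+1}$ upon replacing the inner bracket by $a^{N+1}$ via the derived relation; the symmetric decomposition handles $r\le\tfrac13$. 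Thus any combination of two powers at distance two collapses onto two consecutive powers with the same $e$-value while strictly shrinking the spread of its support. I would then promote this to arbitrary elements by induction on the width of the support (maximum minus minimum exponent), repeatedly folding to eliminate the extremal powers until only two consecutive exponents survive. With the reduction lemma in hand the proposition follows: if $e(\varphi)=e(\varphi')$, reduce both to normal forms, which then have equal $e$-value and hence coincide by injectivity of $e$ on normal forms, so $\varphi\sim\varphi'$; thus $\bar e$ is injective, an isomorphism, and $\ker e=\,\sim$.

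The main obstacle is precisely the reduction lemma. The subtlety is that convex sets admit no subtraction: although in any real vector space the defining relation would let one solve the recurrence $a^{n+1}=\tfrac32 a^{n}-\tfrac12 a^{n-1}$ (with characteristic roots $1$ and $\tfrac12$, explaining the value $2^{-n}$), this computation is illegitimate here, and every reduction must be carried out as an honest convex rewrite assembled from the relations $\frac13 a^{n}+\frac23 a^{n+2}\sim a^{n+1}$. Making the folding induction terminate---choosing a measure on distributions that strictly decreases under folding while the $e$-value stays fixed, and handling the bookkeeping when an extremal power lacks a partner two steps away---is the delicate point and the source of the intricacy alluded to in the statement. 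An alternative route would be to prove directly that the quotient $\D\{a\}^*/{\sim}$ is cancellative and then embed it into $\R$ (using that cancellative convex sets are convex subsets of $\R^{\kappa}$, cf.\ \Cref{thm:syn-conv-mon-cancellative}), but establishing cancellativity of the quotient appears no easier than the reduction itself.
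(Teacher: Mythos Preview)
Your overall architecture matches the paper's exactly: the same evaluation morphism $e\colon\D a^*\to(0,1]$ with $a\mapsto\tfrac12$, the same system of normal forms $r\,a^N+(1-r)\,a^{N+1}$, and the same reduction-to-normal-form strategy for proving $\ker e\subseteq{\sim}$. You have also correctly derived the family $\tfrac13 a^n+\tfrac23 a^{n+2}\sim a^{n+1}$ and the two-term fold, and you are right that the whole difficulty lies in the termination argument for the reduction lemma.

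What the paper supplies, and what your proposal leaves open, is precisely that termination argument. The paper first promotes your fold to a \emph{local} bidirectional rule valid inside any distribution: from $\sum_i r_i a^i$ one may shift mass $3\lambda$ at position $n{+}1$ into $\lambda$ at $n$ and $2\lambda$ at $n{+}2$ (and back), for any admissible $\lambda$. It then runs a two-phase algorithm. \emph{Spreading} first patches every ``hole'' (a gap of length $\ge 2$ in the support) by splitting the rightmost nonzero coefficient of the gap via the rule, thereby shrinking the hole without creating new ones; after finitely many steps the support is an interval. \emph{Sweeping} then repeatedly applies the rule from right to left across the whole interval with a fixed $\lambda$ chosen as the minimum of the interior coefficients and half the rightmost one, which strictly decreases both extremal coefficients while never decreasing the interior ones; after finitely many sweeps one extremal coefficient hits $0$ and the range drops. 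Iterating Sweeping reaches range $\le 2$, i.e.\ a normal form. Your simple ``fold the two extremals'' cannot be carried out directly when the extremal power has no neighbour at distance two (e.g.\ $\tfrac12 a^0+\tfrac12 a^3$), and naively creating such a neighbour by unfolding may increase the width; the Spreading/Sweeping split is exactly the bookkeeping device you were looking for, and the measure that decreases is the range \emph{after} the support has been made contiguous.
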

\begin{proof}
  The map \(\{a\} \mapsto (0, 1]\), $a \mapsto \frac{1}{2}$, extends to the affine monoid morphism \(E = E_{X} \colon \mathcal{D} a^{*} \rightarrow (0, 1]\) that sends a finite distribution \(\varphi\) over \(a^{*}\) to its expected value \(E_{X}\varphi\) with respect to the random variable \[X \colon a^{*} \rightarrow \R, \quad a^{n} \mapsto \frac{1}{2^{n}}, \qquad \text{i.e.} \qquad E(\varphi) = \sum_{n \in \N}\frac{\varphi(n)}{2^n}.\]
  It is easy to see that \(E\) is surjective:
  every \(x \in (0, 1]\) has a unique representation as \(x = \frac{r}{2^{n}} + \frac{1-r}{2^{n+1}}\) for unique \(n \in \N, r \in (0, 1]\).
  This representation gives a canonical preimage of \(x\) under \(E\) given by the distribution \(\varphi_{x} = a^{n} +_{r} a^{n+1} \in \mathcal{D} a^{*}\).
  We call \(\varphi_{x}\) the \emph{canonical representative} of \(x \in (0, 1]\).

To prove the proposition, it suffices to show that the kernel of $E$ coincides with the convex monoid congruence $\sim\,\subseteq \D a^* \times \D a^*$ generated by $a^{0} +_{\frac{1}{3}} a^{2} \sim a^{1}$:
\[ \forall \varphi,\varphi'\in \D a^*.\; \varphi \sim \varphi'\iff E(\varphi)=E(\varphi'). \]
For the $\Rightarrow$ direction we only need to check that the pair  $(a^{0} +_{\frac{1}{3}} a^{2}, a^{1})$ is contained in the kernel congruence of $E$, which follows from
\[ E(a^0+_{\frac{1}{3}} a^2) = \frac{1}{3}\cdot 1 + \frac{2}{3}\cdot \frac{1}{4} = \frac{1}{2} = E(a^1).\]
For the $\Leftarrow$ direction it suffices to show that $\varphi \sim \varphi_{E(\varphi)}$ for every $\varphi\in \D a^*$; then $E(\varphi)=E(\varphi')$ implies $\varphi \sim \varphi_{E(\varphi)} \sim \varphi_{E(\varphi')} \sim \varphi'$ as required.
  For the proof of \(\varphi \sim \varphi_{E(\varphi)}\), let us first note that
 \(a^{n} +_{\frac{1}{3}} a^{n+2} \sim a^{n+1}\) for all $n$ since $\sim$ is closed under multiplication.
 We claim that for all \(\lambda\) with \(3\lambda + \sum_{n}r_{n} = 1\) we have
  \[(\sum_{i \in \N \setminus \{n, n+1, n+2\}} r_{i}a^{i}) + (r_{n} + \lambda)a^{n} + r_{n+1}a^{n+1} + (r_{n+2} + 2 \lambda) a^{n+2}\]
  \begin{equation}\label{eq:game-rule} \sim \end{equation}
  \[(\sum_{i \in \N \setminus \{n, n+1, n+2\}} r_{i}a^{i}) + r_{n}a^{n} + (r_{n+1} + 3 \lambda)a^{n+1} + r_{n+2} a^{n+2}.\]
  To see this, assume we have a  sequence \((r_{i})_{i \in \mathbb{N}}\) with \(3 \lambda  + \sum_{i}r_{i} = 1\) and \(\lambda < \frac{1}{3}\) (if \(\lambda = \frac{1}{3}\) the statement is trivial).
  Let \(\chi = \frac{\lambda}{1 - 3 \lambda}\) (so \(\lambda = \frac{\chi}{3 \chi + 1}\)) and \(s_{i} = r_{i}(3 \chi + 1)\).
  Note that
  \[\sum_{i} s_{i} = (3 \chi + 1) \sum_{i} r_{i} = (3 \chi + 1)(1 - 3 \lambda) = 1, \]
  and \(\chi > 0\) since \(\lambda < \frac{1}{3}\).
  so \(\sum_{i}s_{i}a^{i}\) is an element of \(\mathcal{D} a^{*}\).
  We compute
  \begin{align*}
    & (\sum_{i \in \N \setminus \{n, n+1, n+2\}} r_{i}a^{i}) + (r_{n} + \lambda)a^{n} + r_{n+1}a^{n+1} + (r_{n+2} + 2 \lambda) a^{n+2} \\
    =\,\,& (\sum_{i \in \N \setminus \{n, n+1, n+2\}} \frac{s_{i}}{3 \chi + 1}a^{i}) + \frac{s_{n} + \chi}{3 \chi + 1}a^{n} + \frac{s_{n+1}}{3 \chi + 1}a^{n+1} + \frac{s_{n+2} + 2 \chi}{3 \chi + 1} a^{n+2} \\
    =\,\,& (\sum_{i \in \N \setminus \{n, n+1, n+2\}} \frac{s_{i}}{3 \chi + 1}a^{i}) + \frac{s_{n}}{3 \chi + 1}a^{n} + \frac{s_{n+1}}{3 \chi + 1}a^{n+1} + \frac{s_{n+2}}{3 \chi + 1}a^{n+2} \\
    &\quad+ \frac{3\chi}{3 \chi + 1}\frac{1}{3}a^{n}    + \frac{3 \chi}{3 \chi + 1}\frac{2}{3} a^{n+2} \\
    =\,\,& \frac{1}{3 \chi + 1}(\sum_{i \in \mathbb{N}} s_{i}a^{i}) + (1 - \frac{1}{3 \chi + 1}) (\frac{1}{3}a^{n}    +  \frac{2}{3}a^{n+2}) \\
    =\,\,& \frac{1}{3 \chi + 1}(\sum_{i \in \mathbb{N}} s_{i}a^{i}) + (1 - \frac{1}{3 \chi + 1}) (a^{n} +_{\frac{1}{3}} a^{n+2}) \\
    \sim\,\,& \frac{1}{3 \chi + 1}(\sum_{i \in \mathbb{N}} s_{i}a^{i}) + (1 - \frac{1}{3 \chi + 1}) a^{n+1} \\
    =\,\,& (\sum_{i \in \mathbb{N} \setminus \{n, n+1, n+1\}} \frac{s_{i}}{3 \chi + 1}a^{i}) + \frac{s_{n}}{3 \chi + 1} a^{n} + \frac{ s_{n+1} + 3\chi }{3 \chi + 1} a^{n+1} + \frac{s_{n+2}}{3 \chi + 1}a^{n+2} \\
    =\,\, & (\sum_{i \in \N \setminus \{n, n+1, n+2\}} r_{i}a^{i}) + r_{n}a^{n} + (r_{n+1} + 3 \lambda)a^{n+1} + r_{n+2} a^{n+2}. \\
  \end{align*}
This proves \eqref{eq:game-rule}. We now consider the following \emph{one-player game}:
  \begin{game}
    A game position is given by a distribution \(\varphi = \sum_{n} r_{n} \cdot a^{n} \in \mathcal{D} a^{*}\), denoted by an indexed row vector
    \[
      \begin{bmatrix}
        \cdots & n & n+1 & n+2 & \cdots \\
        \cdots & r_{n} & r_{n+1} & r_{n+2} & \cdots
      \end{bmatrix}.
    \]
    The game has only one bidirectional rule corresponding to \eqref{eq:game-rule}:
    given positions \[\varphi = (r_{n+1} + 3\lambda)a^{n+1} + \sum_{i \ne n+1} r_{i}a^{i} \text{ and } \varphi' = (r_{n} + \lambda)a^{n} + (r_{n+2} + 2 \lambda)a^{n+2} + \sum_{i \ne n, n+2} r_{i}a^{i}  \]
    for some \(\lambda\) with \(3\lambda + \sum_{i} r_{i} = 1\), the player can move between \(\varphi\) and \(\varphi'\) as indicated by
    {
      \begin{equation}
        \tag{RULE}
        \label{eq:rule}
        \begin{bmatrix}
          \cdots & n & n+1 & n+2 & \cdots  \\
          \cdots & r_{n} & \rn{11}{r_{n+1} + 3 \lambda} & r_{n+2} & \cdots \\
          \cdots & \rn{20}{r_{n} + \lambda} & r_{n+1} & \rn{22}{r_{n+2} + 2 \lambda} & \cdots
        \end{bmatrix}.
      \end{equation}
      \drawrule{11}{20}{11}{22}
    }

    We say that the rule has been applied at index \(n\).
    The \emph{winning positions} are the following positions for \(r_{n} \in [0, 1)\), corresponding to canonical distributions:
    \[
      \begin{bmatrix}
        \cdots & n - 1 & n & n+1 & n+2 & \cdots  \\
        \cdots & 0 & r_{n} & 1 - r_{n} & 0 & \cdots
      \end{bmatrix}.
    \]
  \end{game}

  By construction, we have \(\varphi \sim \varphi_{E(\varphi)}\) iff the player can reach a (necessarily unique) winning position when starting in the position corresponding to  \(\varphi\).
  So we have to prove that the player has a winning strategy for every starting position.
  We introduce some auxiliary terminology:
  \begin{defn}
    A \emph{hole} in a position \(\varphi \in \mathcal{D} a^{*}\) is given by a pair \([n, k] \in \N\) with \(k \ge 2\) such that \(\varphi(n) \ne 0, \varphi(n + k) \ne 0\) and \(\forall 1 \le j < k \colon \varphi(n + j) = 0\).
    The \emph{range} of a position \(\varphi \in \mathcal{D} a^{*}\) is defined as \[\rng \varphi = (\max_{\varphi(n) \ne 0} n) - (\min_{\varphi(n) \ne 0} n) + 1.\]
  \end{defn}
  Note that a position is winning iff it has range 1 or 2.
  Our strategy rests on the following two lemmas.
  \begin{lemma}[Spreading]\label{lem:spreading}
    From every position the player can reach a position without holes.
  \end{lemma}
\begin{proof}
  Let \(\varphi \in \mathcal{D} a^{*}\) be the current position.
  Let \([n,k]\) be a hole in \(\varphi\), we show how to ``patch'' it, i.e. how to get to a position that has one less hole.
  This means that \(\varphi\) looks locally like the first position in
  {
    \[
      \begin{bmatrix}
        \cdots & n - 1 & n     & n+1 & \cdots & n+k - 2& n+k-1 &  n+k & n+k+1 & \cdots  \\
        \cdots & r_{n-1}     & r_{n} & 0   & \cdots & 0 & 0 & \rn{11}{r_{n+k}} & r_{n+k+1} & \cdots \\
        \cdots & r_{n-1}     & r_{n} & 0   & \cdots & 0 & \rn{20}{\frac{r_{n+k}}{4}} & \frac{r_{n+k}}{4} & \rn{22}{r_{n+k+1}} + \frac{r_{n+k}}{2}& \cdots
      \end{bmatrix}.
    \]
    \drawrule{11}{20}{11}{22}
  }
  We apply \eqref{eq:rule} at \(n+k-1\) to split the coefficient at \(n+k\) with \(\lambda = \frac{r_{n+k}}{4}\), which makes the hole smaller (i.e.~decreases \(k\)) but does not create any new holes.
  If \(k = 2\) the hole is fixed, otherwise we repeat this process with new, smaller hole \([n, k-1]\) iteratively.
\end{proof}

  \begin{lemma}[Sweeping]\label{lem:sweeping}
    From every non-winning position without holes the player can reach a position without holes that has a strictly smaller range.
  \end{lemma}

\begin{proof}
  Let \(\varphi \in \mathcal{D} a^{*}\) be a non-winning position without holes.
  Let the minimal and maximal indices with non-zero coefficients be \(n \) and \(n+k\), respectively, so that \(\rng \varphi = k+1 > 2\) as \(\varphi\) is non-winning.
  Set \(\lambda = \min(\{ r_{n+i} \mid i < k\} \cup \{\frac{r_{n+k}}{2}\})\).
  We now ``sweep'' from left to right, applying \eqref{eq:rule} in sequence from right to left, starting with index \(n+k-2\) and stopping at index \(n\).
  {
    \[
      \begin{bmatrix}
        n     & n+1    & n+2    & \cdots &  n+k-3    & n+k-2     & n+k-1    &  n+k     \\
        r_{n} & r_{n+1} & r_{n+2} & \cdots & r_{n+k-3} & \rn{1lu}{r_{n+k-2}} & {r_{n+k-1}} & \rn{1ru}{r_{n+k}} \\
        r_{n} & r_{n+1} & r_{n+2} & \cdots & \rn{2lu}{r_{n+k-3}} & r_{n+k-2} - \lambda & \rn{1ld}{r_{n+k-1}} + \rn{1rd}{3 \lambda} & {r_{n+k} - 2 \lambda} \\
        r_{n} & r_{n+1} & r_{n+2} & \cdots & r_{n+k-3}\!-\!\lambda & \rn{2ld}{r_{n+k-2}} + \rn{2rd}{2 \lambda} & r_{n+k-1} +  \lambda & r_{n+k} - 2 \lambda \\
        \vdots & \rn{4lu}{\vdots} & \vdots & \rn{4ru}{\vdots} & \rn{3rd}{\vdots } & \vdots & \vdots & \vdots \\
        \rn{5lu}{r_{n}} & r_{n+1} \!-\!\lambda & \rn{4ld}{r_{n+2}}\! +\! \rn{4rd}{2 \lambda} & \cdots & r_{n+k-3} & r_{n+k-2} & r_{n+k-1} +  \lambda & r_{n+k} - 2 \lambda \\
        r_{n}\! -\! \lambda & \rn{5d}{r_{n+1}\! +\! 2 \lambda} & r_{n+2} & \cdots & r_{n+k-3} & r_{n+k-2} & r_{n+k-1} +  \lambda & r_{n+k} - 2 \lambda \\
      \end{bmatrix}
    \]
    \begin{tikzpicture}[overlay,remember picture]
      \draw [-, dotted, shorten >=4pt, shorten <=4pt] (1lu) -- (1ld);
      \draw [-, dotted, shorten >=4pt, shorten <=4pt] (1ru) -- (1rd);
      \draw [-, dotted, shorten >=4pt, shorten <=4pt] (2lu) -- (2ld);
      \draw [-, dotted, shorten >=4pt, shorten <=4pt] (1ld) -- (2rd);
      \draw [-, dotted, shorten >=5pt, shorten <=4pt] (2ld) -- (3rd);
      \draw [-, dotted, shorten >=7pt, shorten <=6pt] (4lu) -- (4ld);
      \draw [-, dotted, shorten >=4pt, shorten <=4pt] (4ru) -- (4rd);
      \draw [-, dotted, shorten >=4pt, shorten <=4pt] (5lu) -- (5d);
      \draw [-, dotted, shorten >=4pt, shorten <=4pt] (4ld) -- (5d);
    \end{tikzpicture}
  }
  We observe (1) that this constitutes a valid sequence of applications of \eqref{eq:rule}, i.e.~it does not happen that any index is \(< 0\) by definition of \(\lambda\);
  (2) after one such sweep all fields with index in \(\{n+1, \ldots, n+k-1\}\) did not decrease in value;
  (3) the fields at \(n\) and \(n+k\) strictly decreased in value.
  We continue sweeping \(j-1\) times with the same \(\lambda\) until \(r_{n} - j \lambda \le \lambda\) or \(r_{n+k} - 2 \lambda \le 2j \lambda\).
  Note that (2) above ensures that we can indeed continue with the same \(\lambda\).
  The new position now looks like this:
  \[
    \begin{bmatrix}
      n     & n+1    & n+2    & \cdots &  n+k-3    & n+k-2     & n+k-1    &  n+k    \\
      r_{n} - j \lambda & r_{n+1} + 2j \lambda & r_{n+2} & \cdots & r_{n+k-3} & r_{n+k-2} & r_{n+k-1} +  j\lambda & r_{n+k} - 2 j \lambda \\
    \end{bmatrix}.
  \]
  If now \(r_{n} - j \lambda = 0\) or \(r_{n+k} - 2j \lambda = 0\) we are done: the new position has a strictly smaller range.
  Otherwise, we perform one last sweep with \(\lambda' = \min(r_{n} - j \lambda,\, r_{n+k}/2 - j \lambda) \le \lambda\).
  Afterwards either the field at \(n\) or at \(n+k\) (or both) have value \(0\), so this position has a range smaller than \(\varphi\), and we are done.
\end{proof}

  Now given a starting position \(\varphi \in \mathcal{D} a^{*}\) we first use Spreading to reach a position without holes. We then repeatedly apply Sweeping until we reach a position that has range at most 2. This is a winning position. \qedhere
\end{proof}

\appendixproof{thm:em-recognition}
We start with a technical remark:
\begin{rem}\label{rem:composition}
\begin{enumerate}[(1)]
\item Given a $\T$-algebra $A$, the map $(X\kleislito Y)\xto{-\seq g^\#} (X\to A)$ is a $\T$-morphism for every $g\colon Y\to A$, as it is just the product map $(\T Y)^X \xto{(g^\#)^X} A^X$. 
\item The map $(Y\kleislito Z)\xto{f \kseq -} (X\kleislito Z)$ is a $\T$-morphism for every \emph{pure} map $f\colon X\to Y$. (For commutative $\T$, this also holds for non-pure $f\colon X\kleislito Y$.)
\end{enumerate}
\end{rem}

Now we turn to the proof of the theorem.

\begin{proof}
(1)$\Rightarrow$(2) Let $\A=(Q,\delta,i,o)$ be a $\T$-FA computing $L$. We may assume that the initial state $i$ is pure (\Cref{rem:t-aut-props}). Then $L$ is recognized by the \finite $(\T,\M)$-bialgebra $Q\kleislito Q$ via the monoid morphism $\bar\delta^*\colon\Sigma^*\to (Q\kleislito Q)$ and the predicate $p\colon (Q\kleislito Q) \to O$ given by $p(f)=(i\kseq f \seq o^\#)$. Note that $p$ is a $\T$-morphism by \Cref{rem:composition}. Indeed, $L=\bar\delta^*\seq p$ holds because for all $w\in \Sigma^*$,
\[ L(w)=(i\kseq \bar\delta^*(w)\seq o^\#) = p(\bar\delta^*(w)).  \]
(2)$\Rightarrow$(1) Let $M$ be an \finite $(\T,\M)$-bialgebra recognizing $L$ via a monoid morphism $h\colon \Sigma^*\to M$ and a $\T$-morphism $p\colon M\to O$. Since $M$ is finitely generated as a $\T$-algebra, there exists a surjective $\T$-morphism $q\colon T G\epito M$ for some finite set $G$. Let $(M,\mult,e)$ denote the monoid structure and $q_0\colon G\to M$ and $h_0\colon \Sigma\to M$ the respective domain restrictions of $q$ and $h$. We construct a $\T$-FA $\A=(G,\delta,i,o)$ computing $L$ as follows.  Choose $i\colon 1\kleislito G$ and $\delta\colon G\times \Sigma \kleislito G$ such that the first two diagrams below commute; such choices exist because $q$ is surjective. Moreover, define $o\colon G\to O$ by $o:=q_0\seq p$ as in the third diagram.
\[
\begin{tikzcd}
1 \ar{r}{i} \ar{dr}[swap]{e} & TG \ar[two heads]{d}{q} \\
& M
\end{tikzcd}
\qquad
\begin{tikzcd}
G\times \Sigma \ar{rr}{\delta} \ar{d}[swap]{q_0\times \id} && TG \ar[two heads]{d}{q} \\
M\times \Sigma \ar{r}{\id\times h_0} & M\times M \ar{r}{\mult} & M
\end{tikzcd} 
\qquad
\begin{tikzcd}
G \ar{d}[swap]{q_0} \ar{r}{o} & O \\
M \ar{ur}[swap]{p} & 
\end{tikzcd}
\]
To show that $\A$ computes the language $L$, fix $w=a_1\cdots a_n\in \Sigma^*$. Then the two diagrams below (in $\Kl(\T)$ and $\Set$, respectively) commute:
\[
\begin{tikzcd}[column sep=30, row sep=5]
& G \ar[kleisli]{r}{\bar\delta(a_1)} \ar{dd}{q_0} & G \ar{dd}{q_0} \ar[phantom]{r}{\cdots} & G \ar[kleisli]{r}{\bar\delta(a_n) } \ar{dd}{q_0} & G \ar{dd}{q_0} \\
1 \ar[kleisli]{ur}{i} \ar{dr}[swap]{e} & & & &  \\
& M \ar{r}{-\mult h_0(a_1)} & M \ar[phantom]{r}{\cdots} & M \ar{r}{-\mult h_0(a_n)} & M
\end{tikzcd}
\quad\quad
\begin{tikzcd}[column sep=30, row sep=5]
 TG \ar{dd}[description]{q=q_0^\#} \ar{dr}{o^\#} & \\
 & O \\
M \ar{ur}[swap]{p}& 
\end{tikzcd}
\]
Since $h(w)=e\mult h_0(a_1)\mult \cdots h_0(a_n)$, we see that
\[L(w)=(h\seq p)(w)= (i\kseq \bar\delta^*(w)\seq o^\#).\]
This proves that the automaton $\A$ computes the language $L$, as claimed.

\medskip\noindent
Now suppose that the monad $\T$ is commutative. Then (1)$\Rightarrow$(3) is shown like (1)$\Rightarrow$(2), together with the observation that the recognizing $(\T,\M)$-bialgebra $Q\kleislito Q$ is a $\T$-monoid (\Cref{ex:fg-carried}\ref{ex:x-kto-x}). The implication (3)$\Rightarrow$(2) is trivial. 
\end{proof}

\detailsfor{R:prod-fg}

The condition of \Cref{thm:em-recognition}, which states that for every finite set $X$ the
\T-algebra $(TX)^{X}$ is finitely generated, is equivalent to the simpler condition that
finitely generated \T-algebras are closed under finite products.

        This simpler condition trivially implies the original one. To show that it is equivalent, suppose $A, B$ are finitely generated \T-algebras, we have to show that their product is finitely generated.
        If one of $A, B$ is empty, then so is their product $A \times B$, which is thus also finitely generated.
        Otherwise, there exist surjective \T-algebra morphisms $e_{A} \colon TX \epi A$ and $e_{B} \colon TY \epi B$ for finite, non-empty sets $X, Y$.
        Choose a set $Z$ of cardinality larger than those of $X, Y$, which contains at least two elements, and surjections $f \colon Z \epi X, g \colon Z \epi Y$. By assumption the \T-algebra $(TZ)^{Z}$ is finitely generated, so there exists a finite set $K$ and a  surjection $h \colon TK \epi (TZ)^{Z}$. Since every endofunctor on \Set preserves surjections the following map is a surjective \T-algebra morphism, where the morphism $p$ is the projection $(TZ)^{Z} \cong (TZ)^{2} \times TZ^{|Z|-2}$ on the first two components -- it exists since $|Z| \ge 2$.
        \[
        \begin{tikzcd}[column sep = 35]
          TK \ar[two heads]{r}{h} & (TZ)^{Z} \ar[two heads]{r}{p} & (TZ)^{2} \cong TZ \times TZ \ar[two heads]{r}{Tf \times Tg} & TX \times TY \ar[two heads]{r}{e_{A} \times e_{B}} & A \times B.
        \end{tikzcd}
        \]
        This shows that the \T-algebra $A \times B$ is indeed finitely generated.

\detailsfor{ex:xx-fg}

See `Details for \Cref{ex:xx-fg,ex:xx-mon-fg}' further below. 

\appendixproof{thm:fg-carried-implies-fp-general}
The statement is immediate from \Cref{cor:fb-impl-fp} and the assumption that finitely generated and finitely presentable $\T$-algebras coincide. \qed

\detailsfor{rem:mon-functor}
It is folklore that if $\T$ is commutative, then, for every $\T$-monoid $N$, the extension $h^{\#} \colon TM \rightarrow N$
of every monoid morphism $h \colon M \rightarrow N$ is also a monoid morphism.
This can be seen as follows: the condition that $N$  is a $\T$-monoid can be expressed equivalently in purely diagrammatical terms by requiring commutativity of Diagram \ref{eq:t-monoid}.
\begin{equation}
  \label{eq:t-monoid}
  \begin{tikzcd}
    TN \times TN \rar{\pi_{N, N}} \dar{ n \times n} & T(N \times N) \rar{T(\mult)} & TN \dar{n} \\
    N \times N \ar{rr}{\mult} &&  N
  \end{tikzcd}
\end{equation}
In the terminology of~\cite{seal-2012}, the multiplication of $N$ is a \emph{$\T$-bimorphism}. [Note that this condition makes sense for any -- not necessarily non-commutative -- double-strength \(\pi_{X, Y}\).]

The general statement is as follows: Let \((N, n, \mult)\) be a \((\T, \M)\)-bialgebra satisfying Diagram~\ref{eq:t-monoid}. Then the extension \(f^{\#} \colon TM \rightarrow N\) of every monoid morphism $f \colon M \rightarrow N$ into \((N, \mult)\) is also a monoid morphism.
Here $TM$ is equipped with the obvious multiplication \(TM \times TM \rightarrow T(M \times M) \rightarrow TM\) induced by the double strength, and in general it does not satisfy Diagram~\ref{eq:t-monoid} in general.
The fact that \(f^{\#}\) is a monoid morphism is witnessed by the following commutative diagram:
\begin{equation}
  \begin{tikzcd}
    TM \times TM \rar{\pi_{M, M}} \dar{Tf \times Tf} \ar[swap,shiftarr={xshift=-30pt}]{dd}{f^{\#} \times f^{\#}} & T(M \times M) \rar{T \mult} \dar{T(f \times f)} & TM \dar{T f} \ar[shiftarr={xshift=20pt}]{dd}{f^{\#}} \\
    TN \times TN \rar{\pi_{N, N}} \dar{ n \times n} & T(N \times N) \rar{T(\mult)} & TN \dar{n} \\
    N \times N \ar{rr}{\mult} &&  N
  \end{tikzcd}
\end{equation}
The top left and right squares commute by naturality of \(\pi\) and since $f$ is a monoid morphism, respectively, and the bottom rectangle is just Diagram~\ref{eq:t-monoid} again.

\appendixproof{thm:kleisli-recognition}

\begin{rem}
In the proof we use that the free monoid $\Sigma^*$ is \emph{projective}: Given a surjective monoid morphism $e\colon M\epito N$ and a morphism $g\colon \Sigma^*\to N$, there exists a morphism $h\colon \Sigma^*\to M$ such that $g=h\seq e$. Indeed, we can choose $h$ to be the free extension of $g_0\seq s$, where $g_0\colon \Sigma\to N$ is the domain restriction of $h$ and $s\colon N\to M$ is a splitting of the surjection $e$, that is, a map such that $s\seq e=\id_N$. 
\end{rem}

\begin{proof}
($\Rightarrow$) Suppose that $\A$ is a $\T$-FA with states $Q$ computing $L$. From the proof of \Cref{thm:em-recognition} we know that $L$ is recognized by the $(\T,\M)$-bialgebra $Q\kleislito Q$; say $L=g\seq p$ for some monoid morphism $g\colon \Sigma^*\to (Q\kleislito Q)$ and some $\T$-morphism $p\colon (Q\kleislito Q)\to O$. (The specific choices of $g$ and $p$ in that proof are not relevant for the present argument.) Since $Q\kleislito Q$ is monoidally finitely generated, there exists a monoid morphism $e\colon M\to (Q\kleislito Q)$ such that $M$ is a finite monoid and $e^\#\colon TM\epito (Q\kleislito Q)$ is a surjective monoid morphism. By projectivity of $\Sigma^*$, there exists a monoid morphism $h\colon \Sigma^*\to TM$ such that $g=h\seq e^\#$. Putting $q := e\seq p$, we thus have the following commutative diagram:
\[
\begin{tikzcd}
\Sigma^* \ar{dr}[swap]{g} \ar{r}{h} & TM \ar[two heads]{d}{e^\#} \ar{r}{q^\#} & O \\
& Q\kleislito Q \ar{ur}[swap]{p} & 
\end{tikzcd}
\] 
Since $h$ is monoid morphism, the diagram below in $\Set$ commutes.
\begin{equation}\label{eq:tm-diag}
\begin{tikzcd}[column sep=20]
\Sigma^*\times \Sigma^* \ar{rr}{\mult} \ar{d}[swap]{h\times h} && \Sigma^* \ar{d}{h} & \ar{l}[swap]{\varepsilon} 1 \ar{d}{e} \\
TM\times TM \ar{r}{\pi_{M,M}} & T(M\times M) \ar{r}{T\mult} & TM & M \ar{l}[swap]{\eta_M}
\end{tikzcd}\end{equation}
This is equivalent to commutativity of the following diagram in $\Kl(\T)$, which states that $h\colon \Sigma^*\kleislito M$ is an effectful monoid morphism:
\begin{equation}\label{eq:tm-diag2}
\begin{tikzcd}[column sep=15]
  \Sigma^*\times \Sigma^*
  \arrow{r}{\mult}
  \arrow[kleisli]{d}[swap]{h\bar\times h}
  & \Sigma^*
  \arrow[kleisli]{d}[swap]{h}
  & 1
  \arrow{l}[swap]{\varepsilon}
  \arrow{dl}{e}
  \\
  M \times M
  \arrow{r}{\mult}
  &
  M
\end{tikzcd}
 \end{equation}
The finite monoid $M$ recognizes $L$ via $h$ and $q$ because $L=g\seq p=h\seq q^\#$.

\medskip\noindent 
($\Longleftarrow$) Suppose that the finite monoid $(M,\mult,e)$ effectfully recognizes the language $L$ via the effectful monoid morphism $h\colon \Sigma^*\kleislito M$ and the map $p\colon M\to O$. Then \eqref{eq:tm-diag2} or equivalently \eqref{eq:tm-diag} commutes, so $h$ is a monoid morphism from $\Sigma^*$ to the monoid $TM$. Thus $\T M$ is an \finite $(\T,\M)$-bialgebra recognizing $L$ via $h$ and $p^\#$. From \Cref{thm:em-recognition} we conclude that there exists a $\T$-FA computing $L$. Specifically, by instantiating the proof of (2)$\Rightarrow$(1) to the $(\T,\M)$-bialgebra $\T M$ and the choices $G=M$ and $q=\id_{TM}$, we see that a $\T$-FA computing $L$ is given by $\A=(M,\delta,e,p)$ with the transitions
\[ \begin{tikzcd} (M\times \Sigma \ar[kleisli]{r}{\delta} & M)\end{tikzcd} = \begin{tikzcd} (M\times \Sigma \xto{\eta_M\times h_0} TM\times TM \xto{\pi_{M,M}} T(M\times M) \xto{T\mult} TM).\end{tikzcd}\]
\end{proof}

\appendixproof[Details for \autoref{ex:xx-fg} and]{ex:xx-mon-fg}
We verify that for $\T\in \{\D,\C,\S\}$ and each finite set $X$ the maps $\xi_{0}\colon M\to (X\kleislito X)$ of \Cref{fig:witnesses} freely extend to $\T$-morphisms $\xi = \xi_{0}^{\#}\colon M\epito (X\kleislito X)$ that are both surjective and monoid morphisms, and thus witness that the $\T$-bialgebra $X\kleislito X$ is monoidally finitely generated. For the non-commutative monads $\C$ and $\S$, we show that $\xi$ is a monoid morphism using \Cref{prop:central-implies-mon-morphism}.

\subparagraph{Distribution monad $\D$:}
Since $\D$ is an affine monad ($\D1\cong 1$), the affine map
\[
  \xi\colon \D(X\to X)\to (X\to \D X)
\]
extending the map $\xi_{0}\colon (X\to X)\to (X\to \D X)$ given by $\xi_{0}(f)=f\seq \eta_X$ is a surjective monoid morphism by~\Cref{rem:affine-implies-xx-mon-fg}.

\subparagraph{$S$-semimodule monad $\S$:}
Analogous to probability distributions, we may represent an element $f$ of $\S X$ as a formal finite sum $\sum_{i\in I} s_i x_i$ where $x_i\in X$, $s_i\in S$ and $f(x)=\sum_{i\in I:\, s_i=x} s_i$ for all $x\in X$. Consider the map $\xi_{0}\colon (X\parfun X)\to (X\to \S X)$ as in \Cref{fig:witnesses}.
\begin{enumerate}[(1)]
\item $\xi$ is surjective: Let $n=|X|$. Then the $(\S,\M)$-bialgebra $n\kleislito n$ is, up to isomorphism, the set of $n\times n$-matrices over the semiring $S$ with the usual multiplication, addition and scalar multiplication of matrices. Moreover $n\parfun n$ is the submonoid of matrices with all entries in $\{0,1\}$ where each row contains at most one $1$ entry. Then $\xi_{0}$ is just the inclusion, and $\xi$ is given by $\sum_i s_i A_i \mapsto \sum_i s_i A_i$, i.e.~a formal linear combination of matrices in $n\parfun n$ is mapped to the actual linear combination in the semimodule $n\kleislito n$.
Since each $n\times n$-matrix over $S$ can be expressed as a linear combination of matrices in $n\parfun n$ (in fact, matrices with a single entry $1$, and $0$ otherwise), we see that $\xi$ is surjective.
\item $\xi$ is a monoid morphism: By \Cref{prop:central-implies-mon-morphism} we only need to show that the uncurried form $f\colon X\times (X\parfun X)\to \S X$ of $\xi_{0}$ is central. The map $f$ is given by $f(x,k)=1\mult k(x)\in \S X$ if $k(x)$ is defined, and $0\in \S X$ otherwise.
 Instantiating \eqref{eq:central} to $f$ and $f'\colon X'\to \S Y'$, wee see that the two maps $X\times (X\parfun X)\times X'\to \S(X\times Y')$ of \eqref{eq:central} both send $(x,k,x')$ to $\sum_i s_i (k(x),y_i')$ (where $f_i'(x')=\sum_i s_i y_i'$) if $k(x)$ is defined, and to $0\in \S(X\times Y')$ otherwise. Thus, $f$ is central.
\end{enumerate}

\subparagraph{Convex powerset of distributions monad $\C$:}
The free extension of $\xi_{0} \colon (X\to X)\to (X\to \C X)$ as in \Cref{fig:witnesses} is given by
 \[\xi \colon \Cp(X \rightarrow X) \rightarrow (X \rightarrow \Cp X), \qquad \xi(U)(x) = \{ \sum_{i} r_{i} \cdot f_{i}(x) \mid \sum_{i} r_{i} \cdot f_{i} \in U\}. \]
\begin{enumerate}[(1)]
\item $\xi$ is surjective: We give a categorical proof using affinity of the non-empty finite power set monad \(\mathcal{P}_{\mathrm{f}}^{+}\) and the finite distribution monad \(\mathcal{D}\). For a convex set \(A\) the set of its finitely generated non-empty convex subsets is itself a convex set \(C A\), and this yields a monad on $\Conv\cong \Alg(\D)$.
    Then  \(\Cp = \D \seq   C \seq |-| \) is the composite of the monad \(C \) with the adjunction \(\mathcal{D} \dashv |-|\), where $|-|\colon \Conv\to \Set$ denotes the forgetful functor.
    In \cite[Lemma 2]{hhos18} it is shown that taking convex hulls is a natural transformation \(\langle-\rangle_{A} \colon |\Pfp A| \rightarrow | \Cp A| \).
    [Note that it is not natural for the endofunctors \(\Pfp, \Cp\) on \conv since \(\langle - \rangle \) is not affine.]
    We then get the following diagram in \Set, where $\xi^{\Pfp}$ and  $\xi^\D$ are the free extensions of the respective inclusion maps $(X\to X)\to (X\kleislito X)$.
    \[
      \begin{tikzcd}
        \Pfp \mathcal{D} (X \rightarrow X) \rar[two heads]{\Pfp \xi^{\mathcal{D}}} \dar[two heads]{\langle - \rangle}
        &
        \Pfp (X \rightarrow \mathcal{D} X) \rar[two heads]{\xi^{\Pfp}} \dar[two heads]{\langle - \rangle}
        &
        (X \rightarrow \Pfp \mathcal{D} X) \dar[two heads]{ \langle - \rangle }
        \\
        C  \mathcal{D} (X \rightarrow X) \rar{C  \xi^{\mathcal{D}}} \dar[equal]{}
        &
        C (X \rightarrow \mathcal{D} X) \rar{\langle C  \ev_{x} \rangle_{x \in X}}
        &
        (X \rightarrow C \mathcal{D} X) \dar[equal]{}
        \\
        \Cp(X \rightarrow X) \ar{rr}{\xiCp}
        &
        &
        (X \rightarrow \Cp X)
      \end{tikzcd}
    \]
    The two upper squares commute by naturality of $\langle - \rangle$ and the bottom rectangle commutes since it does if we postcompose with the projections \(\ev_{x}\):
    \[C  \xi^{\mathcal{D}}\seq \langle C  \ev_{x} \rangle_{x \in X} \seq \ev_{x} = C  \xi^{\mathcal{D}}  \seq C  \ev_{x} = C ( \xi^{\mathcal{D}} \seq \ev_{x}) = C  \mathcal{D} \ev_{x} = \Cp \ev_{x} = \xi \seq \ev_{x} .\]
    The composition \( \langle - \rangle \seq \xi  \) is surjective since the top right path is surjective by affinity of \(\Pfp\) and \(\mathcal{D}\), see~\Cref{rem:affine-implies-xx-mon-fg}). In particular, \(\xi\) is surjective.
\item $\xi$ is a monoid morphism:  By \Cref{prop:central-implies-mon-morphism} we only need to show that the uncurried form $f\colon X\times (X\to X)\to \C X$ of $h$ is central. The map $f$ is given by $f(x,k)=\{1\cdot k(x)\}\in \C X$.  Instantiating \eqref{eq:central} to $f$ and $f'\colon X'\to \C Y'$, wee see that the two maps $X\times (X\to  X)\times X'\to \C(X\times Y')$ of \eqref{eq:central} both send $(x,k,x')$ to \[\{ \sum_{i\in I_j} s_i^j (k(x),y_i^j) \mid j\in J \} \in \C(X\times Y') \qquad \text{where } f'(x')= \{ \sum_{i\in I_j} s_i^j y_i^j \mid j\in J \}.\]
\end{enumerate}

\subsection*{A Separating Example for \Cref{thm:em-recognition,thm:kleisli-recognition}}%
\addcontentsline{toc}{subsection}{A separating counterexample for \Cref{thm:em-recognition,thm:kleisli-recognition}}
\label{app:sep-ex}

While all monads considered in this paper satisfy both \Cref{thm:em-recognition,thm:kleisli-recognition} (even the list monad \(\mathcal{M}\)),
we show here that the condition of \Cref{thm:em-recognition} is indeed strictly stronger than that of \Cref{thm:kleisli-recognition} by giving an example of a monad \T{} satisfying the conditions of the former but not of the latter.
Consider the monad \T{} corresponding to the algebraic variety $(\Lambda, E)$ given by operations \(\Lambda = \{+/2, e/1, \alpha/1\}\) with equations
\begin{equation}
  \label{eq:E}
  E = \{x + e = x, e + x = x\}.
\end{equation}

\begin{lemma}
  The monad \T{} satisfies the condition of \Cref{thm:em-recognition}.
\end{lemma}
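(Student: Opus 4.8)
The plan is to establish the equivalent reformulation from \Cref{R:prod-fg}, namely that finitely generated $\T$-algebras are closed under finite products. The lemma (i.e.\ the condition of \Cref{thm:em-recognition}) then follows at once, since $X\kleislito X = (TX)^X$ is a finite product of copies of the free algebra $\T X$, and each $\T X$ is finitely generated, namely by $X$. By a straightforward induction the product case reduces to binary products, the empty product being the terminal one-element algebra, which is generated by $\emptyset$.

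So let $A$ and $B$ be $\T$-algebras with finite generating sets $G_A\subseteq A$ and $G_B\subseteq B$. Writing $e$ for the constant of each algebra, I would show that the finite set
\[
  G = \{(g,e) : g\in G_A\} \cup \{(e,g) : g\in G_B\} \subseteq A\times B
\]
generates $A\times B$. The decomposition making this work is the coordinatewise identity
\[
  (a,b) = (a,e) + (e,b),
\]
valid for all $a\in A$, $b\in B$ by the two unit laws ($a+e=a$ and $e+b=b$). Thus it suffices to show that the subalgebra $\langle G\rangle$ generated by $G$ contains both ``slices'' $A\times\{e\}$ and $\{e\}\times B$.

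The crucial observation is that $\{e\}$ is a one-element subalgebra of every $\T$-algebra: it is closed under $+$ because $e+e=e$ (unit law), under the constant trivially, and under $\alpha$ because $\alpha(e)=e$. Hence $A\times\{e\}$ and $\{e\}\times B$ are subalgebras of $A\times B$, and the coordinate insertions $a\mapsto(a,e)$ and $b\mapsto(e,b)$ are $\T$-algebra isomorphisms onto them. Under these isomorphisms $G_A$ and $G_B$ correspond to the two halves of $G$, so $\langle G\rangle$ contains all of $A\times\{e\}$ and all of $\{e\}\times B$; combined with the displayed decomposition this yields $\langle G\rangle = A\times B$, proving $A\times B$ finitely generated.

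The step I expect to demand the most care is exactly the verification that $\{e\}$ forms a subalgebra -- equivalently, that $e$ is a fixpoint of every operation, and in particular that the free unary operation satisfies $\alpha(e)=e$. This is the one point where the concrete equations $E$ are indispensable: the unit laws alone secure closure under $+$, but it is the behaviour of $\alpha$ on $e$ that determines whether the slice decomposition $(a,b)=(a,e)+(e,b)$ actually lands inside the subalgebra generated by the two slices (without it, a quotient onto $(\N,\max,0,\mathrm{succ})^2$ already fails to be finitely generated). Everything else is a routine manipulation of $\Lambda$-terms modulo $E$.
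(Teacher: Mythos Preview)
Your overall strategy coincides with the paper's: reduce to closure of finitely generated algebras under binary products and use the decomposition $(a,b)=(a,e)+(e,b)$ with the generating set $G=(G_A\times\{e\})\cup(\{e\}\times G_B)$. The paper stops at the displayed identity; your added justification via ``$\{e\}$ is a subalgebra'' is where a genuine gap appears.

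The equations $E=\{x+e=x,\;e+x=x\}$ do \emph{not} entail $\alpha(e)=e$. Your own example $(\N,\max,0,\mathrm{succ})$ is a model of $(\Lambda,E)$ in which $\alpha(e)=1\neq 0=e$, so $\{e\}$ is not a subalgebra and the insertions $a\mapsto(a,e)$ are not algebra morphisms; you cannot conclude $A\times\{e\}\subseteq\langle G\rangle$. Worse, taking $\alpha$ as unary (as the displayed $\alpha/1$ suggests) makes the lemma outright false: $(\N,\max,0,\mathrm{succ})$ is generated by $\emptyset$, yet its square is not finitely generated --- any element $(N,0)$ in a subalgebra generated by a finite set must arise from a term with second coordinate $0$, hence an $\alpha$-free term, bounding $N$. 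So the ``without it'' scenario you flag is the actual one.

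The resolution is that $\alpha$ is intended to be a \emph{constant} (it is used as a ground term throughout the proof of the companion lemma; the listed arities are misprints). With $\alpha$ nullary, $\{e\}$ is still not a subalgebra (it misses the constant $\alpha$), but the lemma holds: enlarge the generators to $G'=G\cup\{(\alpha,e),(e,\alpha)\}$. Every $a\in A$ can then be written as $s(a_1,\ldots,a_k,\alpha)$ for an $\alpha$-\emph{free} term $s$ with $a_i\in G_A$; since $s$ uses only variables, $e$, and $+$, one has $s(e,\ldots,e)=e$, whence $(a,e)=s\big((a_1,e),\ldots,(a_k,e),(\alpha,e)\big)\in\langle G'\rangle$, and symmetrically for $(e,b)$.
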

\begin{proof}
  We have to show that for every finite set $X$ the \T-algebra $X \rightarrow TX$ is finitely generated.
  Since $(X \kleislito X) \cong (TX)^{X}$ is a finite product of finitely generated free  $\T$-algebras, it suffices to show that finitely generated $\T$-algebras are closed under finite products. Given finitely generated \T-algebras $A, B$ with generating sets $A_{0} \subseteq A, B_{0} \subseteq B$, it is easy to see that $A \times B$ is generated by the set $A_{0} \times \{e\} \cup \{e\} \times B_{0}$, since $(a, b) = (a + e, e + b) = (a, e) + (e, b)$.
\end{proof}

\begin{lemma}
  The monad \T{} does not satisfy the condition of \Cref{thm:kleisli-recognition}.
\end{lemma}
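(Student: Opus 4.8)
The plan is to show that the condition of \Cref{thm:kleisli-recognition} already fails for $X = 2$, i.e.\ that the $(\T,\M)$-bialgebra $P := (2 \kleislito 2)$ is not monoidally finitely generated. Assume for contradiction that it is, witnessed by a finite monoid $M$ and a monoid morphism $\xi_0 \colon M \to P$ whose free extension $\xi := \xi_0^\# \colon TM \to P$ is a surjective monoid morphism (\Cref{def:mon-fin-gen}). My first step is to turn the monoid-morphism property of $\xi$ into a constraint on left Kleisli composition. Recall from \Cref{rem:mon-functor} that the induced multiplication $\cdot$ on $TM$ is built from the double strength; the monoidal-functor coherence gives $\eta_M(m)\cdot t = TL_m(t)$, where $L_m \colon M \to M$ is left multiplication by $m$, so $t \mapsto \eta_M(m)\cdot t$ is a $\T$-morphism. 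Instantiating $\xi(s \cdot t) = \xi(s)\kseq\xi(t)$ at $s = \eta_M(m)$ yields $\xi(TL_m(t)) = \xi_0(m)\kseq\xi(t)$ for all $t$. The left-hand side is a $\T$-morphism in $t$, and $\xi$ is a surjective $\T$-morphism, so it follows that left composition $\xi_0(m)\kseq(-) \colon P \to P$ is a $\T$-morphism for every $m \in M$.

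The main obstacle is the next step: showing that $f \kseq (-) \colon P \to P$ is a $\T$-morphism only when $f$ is pure, i.e.\ $f = \sigma \seq \eta_2$ for some $\sigma \colon 2 \to 2$. Writing $f = (t_x)_{x \in 2}$ with $t_x \in T2$ and using that Kleisli composition acts by substitution, $(f\kseq g)(x) = t_x[x' \mapsto g(x')]$. Preservation of $\alpha$ at the input $g = \eta_2$ gives the term identity $t_x[x'\mapsto \alpha(x')] = \alpha(t_x)$; I would prove by induction on a normal form of $t_x$ that this forces $t_x = \alpha^{k_x}(y_x)$ for a single variable $y_x \in 2$ and some $k_x \ge 0$ — a genuine $+$ at the head would pit a $+$ against an $\alpha$ on the two sides, which is impossible. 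Preservation of the constant $e$ then gives $t_x[x'\mapsto e] = e$, i.e.\ $\alpha^{k_x}(e) = e$, which forces $k_x = 0$ and hence $t_x = y_x$. Thus every $\xi_0(m)$ is pure, and $\xi(TM)$ is contained in the $\T$-subalgebra of $P$ generated by the pure morphisms. (This is the structural reason no finite witness exists: by \Cref{prop:central-implies-mon-morphism} admissible generators must behave centrally, and here only the pure morphisms do, but these turn out to be too few.)

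It remains to contradict surjectivity of $\xi$ by a variable-counting invariant. I would make $\N$ a $\T$-algebra with addition for $+$, the constant $0$ for $e$, and the identity for $\alpha$ (the equations $E$ hold, as $0$ is a two-sided unit for $+$), and let $\nu \colon T2 \to \N$ be the free extension of the constant map $1$, so that $\nu$ counts variable occurrences. Applying $\nu$ coordinatewise yields a $\T$-morphism $N \colon P = (T2)^2 \to \N^2$ sending every pure morphism to $(1,1)$. Hence $N$ maps the subalgebra generated by the pure morphisms into the $\T$-subalgebra of $\N^2$ generated by $(1,1)$, which is the diagonal $\{(n,n) : n \in \N\}$; so every element of $\xi(TM)$ has equal variable-counts in its two coordinates. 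But $P$ contains, for instance, the morphism $p$ with $p(\bot) = \eta_2(\bot)$ and $p(\top) = \eta_2(\top) + \eta_2(\top)$, for which $N(p) = (1,2)$ lies off the diagonal. Therefore $p \notin \xi(TM)$, contradicting surjectivity of $\xi$, and $P$ is not monoidally finitely generated. The crux throughout is that the non-commutativity of $\T$ — concretely, that $\alpha$ neither distributes over $+$ nor fixes $e$ — forces all generators to be pure, after which the counting invariant shows they cannot generate $P$.
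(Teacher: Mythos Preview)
Your proof is correct. Both your argument and the paper's share the same two–part skeleton: (A) the monoid-morphism condition on $\xi$ forces every generator $\xi_0(m)$ to be pure, and (B) the pure Kleisli maps do not generate $2\kleislito 2$ as a $\T$-algebra. The implementations, however, differ in interesting ways.

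For (A), the paper reduces to a submonoid $M\hookrightarrow (2\to T2)$ and then, by explicit case analysis on the head symbol of $f(0)$ for a putative non-pure generator $f$, exhibits a specific term $v\in TM$ witnessing failure of the monoid-morphism square~\eqref{eq:mor-diag}. You instead extract from $\xi(\eta_M(m)\cdot t)=\xi_0(m)\kseq \xi(t)$ (using $\eta_M(m)\cdot t = TL_m(t)$ and surjectivity of $\xi$) the clean abstract consequence that $\xi_0(m)\kseq(-)\colon P\to P$ is a $\T$-endomorphism, and then show by a short induction on normal forms that preservation of $\alpha$ and $e$ forces $t_x$ to be a bare variable. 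This is more conceptual and avoids the case split entirely; it also makes transparent \emph{why} non-commutativity is the obstacle, namely that $\alpha$ does not commute with substitution.

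For (B), the paper argues that a particular Kleisli map is not in the image of $\iota^\#$. Your variable-counting invariant via the $\T$-algebra $(\N,+,0,\id)$ is more robust: it produces a $\T$-morphism $N\colon P\to\N^2$ collapsing all pure generators to the diagonal, so the obstruction becomes a one-line observation ($N(p)=(1,2)$). This invariant would also scale to other examples with little change.

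One small point worth making explicit in a write-up: the step ``$\xi$ surjective $\T$-morphism and $\xi\seq\psi$ a $\T$-morphism $\Rightarrow$ $\psi$ a $\T$-morphism'' uses that $T\xi$ is epic in $\Set$, which is immediate since $T$ preserves surjections; you invoke this implicitly but it is the hinge of your Step~1.
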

Note in particular that \Cref{prop:central-implies-mon-morphism} is not satisfied, the conceptual reason being that all central morphisms in $\Kl(\T)$ are pure.
\begin{proof}
  We show that for $X = 2$ the $(\T,\M)$-bialgebra $X \to T X$ is not monoidally finitely generated.
  Concretely, this means that for every finite monoid $M$ and a monoid morphism $e \colon M \rightarrow (2\to T 2)$ the extension $e^{\#} \colon TM \epi (2 \to T 2)$ cannot be both surjective and a monoid morphism
  Note that we may assume, without loss of generality, that $M$ is a submonoid of $2 \to T2$ and $e = \iota$ is its inclusion, since we can always factorize $e$ through its image $\iota \colon \mathrm{Im}(e) \hookrightarrow (2 \to T2)$.
  So we have to show that for every submonoid $\iota \colon M \incl (2 \rightarrow T 2)$, if $\iota^{\#}$ is surjective the following diagram does not commute:
  \begin{equation}
    \label{eq:mor-diag}
    \begin{tikzcd}
      TM \times TM \ar{r}{\pi_{M, M}} \ar[shiftarr={xshift=-60pt}, swap]{dd}{\iota^{\#} \times \iota^{\#}} \ar{d}{T \iota \times T \iota} & T(M \times M) \ar{r}{T \kseq} & TM \ar{d}{T \iota} \ar[shiftarr={xshift=30pt}]{dd}{\iota^{\#}} \\
      T(2 \to T 2) \times T(2 \to T 2) \ar{d}{\mu_{2} \times \mu_{2}} & & T(2 \to T2) \ar{d}{\mu_{2}} \\
      (2 \to T2) \times (2 \to T2) \ar{rr}{\kseq} && 2 \to T2
    \end{tikzcd}
  \end{equation}

  \medskip\noindent(1) First we show that a subset $\iota \colon M \incl (2 \rightarrow T 2)$ generating the \T-algebra $2 \rightarrow T 2$ must contain some non-pure map $f \colon 2 \to T 2$, viz.\ a map $f$ that is not of the form $f' \seq \eta_{2}$ for some $f' \colon 2 \rightarrow 2$. This is the case, since even the full subset $M = (2 \rightarrow 2) \incl (2 \rightarrow T2)$ of all pure maps does not generate the algebra $2 \to T 2$. The algebra morphism $\iota^{\#} \colon T(2 \rightarrow 2) \rightarrow (2 \rightarrow T 2)$ is not surjective, as it concretely is defined as
  \begin{align*}
    \iota^{\#} \colon T(2 \rightarrow 2) &\rightarrow (2 \rightarrow T 2) \\
    t(f_{0}, \ldots, f_{3}) &\mapsto (x \mapsto t(f_{0}(x), \ldots, f_{3}(x))),
  \end{align*}
  where $t(f_{0}, \ldots , f_{3})$ is a $\Lambda$-term in the variables $f_{i} \in (2 \rightarrow 2)$.
  It is easy to see that the Kleisli morphism sending $0 \mapsto e, 1 \mapsto \alpha$ cannot be in the image of $\iota^{\#}$, as it would require both $e = t $ and $t =  \alpha$.

  \medskip\noindent(2) We show that if $M$ contains a non-pure map $f \colon 2 \rightarrow T2$ then diagram~\ref{eq:mor-diag} does not commute.
  More specifically, we show for every non-pure Kleisli map $f \colon 2 \rightarrow T2$ there exists a term $v \in TM$ such that the pair $(\eta_{M}(f), v(f, f)) \in TM \times TM$ is sent by the upper right and lower left path of Diagram~\ref{eq:mor-diag} to different elements of $2 \rightarrow T 2$.
  Let us first simplify the notation.
  We identify a Kleisli map $f \colon 2 \to T 2 $ with its graph tuple $[s(0, 1), t(0, 1)] \in (T2)^2$, where $f(0) = s(0, 1) \in T 2, f(1)= t(0, 1) \in T 2$ are $\Lambda$-terms in $0, 1 \in 2$.
  The Kleisli composite $f \kseq f$ corresponds to the tuple $[s(s(0, 1), t(0, 1)), t(s(0, 1), t(0, 1))]$.
  Suppose that $v(f) \in TM$ is a $\Lambda$-term with variable $f \in M$.
  The pair $(\eta_{M}(f), v(f)) \in TM \times TM$ is sent by the upper right path of Diagram~\ref{eq:mor-diag} to $\iota^{\#}(\eta_{M}(f) \mult v(f)) \in (2 \rightarrow T 2)$, which is given by
  \begin{align}
   [v(s(s(0, 1), t(0, 1))),
v(t(s(0, 1), t(0, 1)))].\label{eq:ur}
\end{align}
By the lower left path of Diagram~\ref{eq:mor-diag} it is sent to $\iota^{\#}(\eta_{M}(f)) \mult \iota^{\#}(v(f, f))$, given by
\begin{align}
  [s(v(s(0, 1)), v(t(0, 1))), t(v(s(0, 1)), v(t(0, 1)))] \label{eq:dl}
\end{align}
We now give for every possible value of $f(0) \in T2$  a term $v(f) \in TM$ such $\iota^{\#}(\eta(f) \mult v(f)) \ne \iota^{\#}(\eta(f)) \mult \iota^{\#}(v(f))$.
If $f(0) = s(0, 1) = e$ then we set $v(f) = \alpha$ to get
\[\iota^{\#}(\eta(f) \mult v(f))(0) = \text{(\ref{eq:ur})}(0) = \alpha \ne e = \text{(\ref{eq:dl})}(0) =   \big(\iota^{\#}(\eta(f)) \mult \iota^{\#}(v(f))\big)(0).\]
Similarly if $f(0) = s(0, 1) = \alpha$ then we set $v(f) = e$ to get
\[\iota^{\#}(\eta(f) \mult v(f))(0) = \text{(\ref{eq:ur})}(0) = e \ne \alpha = \text{(\ref{eq:dl})}(0) =  \big(\iota^{\#}(\eta(f)) \mult \iota^{\#}(v(f))\big)(0).\]
Finally, if $f(0) = s(0, 1) = c(0, 1) + d(0, 1)$ for terms $c(0, 1), d(0, 1) \in T2$, then may assume by \Cref{eq:E} that $c(0, 1) \ne e$ and $d(0, 1) \ne e$, and we set $v(f) = \alpha$.
Suppose now for the sake of contradiction that we have an equality \[\iota^{\#}(\eta(f) \mult v(f))(0) = \text{(\ref{eq:ur})}(0) = \alpha  = c(\alpha, \alpha) + d(\alpha, \alpha) = \text{(\ref{eq:dl})}(0) =  \big(\iota^{\#}(\eta(f)) \mult \iota^{\#}(v(f))\big)(0). \]
The term $c(\alpha, \alpha) + d(\alpha, \alpha)$ is a ground $\Lambda$-term, i.e., it does not contain any variables from $M$.
It is clear from the equations of the theory that for all ground terms $x, y $ we have $x + y = \alpha$ if and only if $x = \alpha$ and $y = e$ or $x = e$ and $ y = \alpha$. But this means that either $c(\alpha, \alpha) = e$ or $d(\alpha, \alpha) = e$, and w.l.o.g. we assume the former. If $c(\alpha, \alpha) = e$ then $c$ can not contain any occurence of the variables $0, 1$. This implies that $c(0, 1) = e$, a contradiction to our assumptions $c(\alpha, \alpha) \ne e \ne d(\alpha, \alpha)$.

We conclude that there does not exist a submonoid $\iota  \colon M \incl (2 \to T 2)$ such that the extension is both surjective and a monoid morphism, so $2 \rightarrow T 2$ is not monoidally finitely generated. \qedhere
\end{proof}

\appendixproof{prop:central-implies-mon-morphism}

\begin{proof}
  \newcommand{\trr}{\vartriangleright}
  \newcommand{\trl}{\vartriangleleft}
\savebox{\kleisliarrow}{%
\begin{tikzpicture}[
      baseline=(arrow.base),
      inner sep=0mm,
      outer sep=0mm,
      ]
      \node[draw=none,
      anchor=base,
      overlay,
      inner sep=0,
      outer sep=0,
      minimum height=1em,
      ] (arrow) {$\phantom{\longrightarrow}$};
      \begin{scope}[even odd rule,overlay]
        \clip  ($ (arrow.south) !.68! (arrow.north)$) circle (0.17em)
           (arrow.north west) rectangle (arrow.south east);
      \node[draw=none,
      anchor=base,
      inner sep=0,
      outer sep=0,
      ] (arrow) {$\longrightarrow$};
    \end{scope}
    \draw[fill=none,overlay] ($ (arrow.south) !.68! (arrow.north)$) circle (0.15em);
    \draw[use as bounding box,draw=none] (arrow.north west) rectangle (arrow.south east);
  \end{tikzpicture}}

  The left and right strength of the monad $\T$ entail that $\Kl(\T)$ has the structure of a \emph{strict premonoidal category}:
  For ev
  ery set \(X\) there exist functors \(X \trl (-), (-) \trr X \colon \Kl(\T) \rightarrow \Kl(\T) \) with \(X \trl Y = X \times Y = X \trr Y\) that are defined on morphisms \(f \colon X \kleislito Y\) by
  \[X' \trl f = (\id_{X'} \times f) \kseq \lst_{X', Y} \colon X' \times X \rightarrow X' \times TY \kleislito X' \times Y  \]
  and
  \[f \trr X' = (f \times \id_{X'}) \kseq \rst_{X', Y} \colon X \times X' \rightarrow TY \times X' \kleislito Y \times X',  \]
 and the following equations hold:
  \begin{align*}
   \id_{X} \trr X' &= \rst_{X, X'},\\
    \qquad X' \trl \id_{X} &= \lst_{X', X},\\
   (X \trl \id_{Y}) \trr Z &= X \trl (\id_{Y} \trr Z).
\end{align*}
  Now it is equivalent to say that a morphism \(f \colon X \kleislito Y\) in $\Kl(\T)$ is central iff for all \(f' \colon X' \kleislito Y'\) the following square commutes:
  \[
    \begin{tikzcd}
      X \times X' \rar[kleisli]{f \trr X'} \dar[kleisli][swap]{X \trl f'}
      &
      Y \times X' \dar[kleisli]{Y \trl f'} \\
      X \times Y' \rar[kleisli]{f \trr Y'} & Y \times Y'
    \end{tikzcd}
  \]
  Now assume that \(\bar{f} \colon M \rightarrow (TX)^{X}\) is a monoid morphism such that \(f \colon X \times M \rightarrow TX\) is central and let \(\bar{f}^{\#} \colon TM \rightarrow (TX)^{X}\) be the extension of $f$ to a \(\T\)-morphism.
 Then the three squares below commute, where $\ev$ is the evaluation morphism and $m$ is the multiplication of $M$.
  \[
    \begin{tikzcd}[column sep=small]
      X \!\times\! TM \rar[kleisli]{X \trl \id} \dar[swap]{\id \!\times\! \bar{f}^{\#}}
      &
      X \!\times\! M \dar[kleisli]{f}
      &
      &
      X \!\times\! (TX)^{X} \!\times\! (TX)^{X} \rar{\id \!\times\! \kseq}
      \dar[kleisli][swap]{\ev \trr (TX)^{X}}
      &
      X \!\times\! (TX)^{X}
      \dar[kleisli]{\ev}
      \\
      X \!\times\! (TX)^{X} \rar[kleisli]{\ev}
      &
      X
      &
      &
      X \!\times\! (TX)^{X}
      \rar[kleisli]{\ev}
      &
      X
    \end{tikzcd}
  \]
\[
    \begin{tikzcd}[column sep=small]
      X \!\times\! M \!\times\! M \rar{\id \!\times\! m} \dar[kleisli][swap]{f \trr \id}
      &
      X \!\times\! M \dar[kleisli]{f}
      \\
      X \!\times\! M
      \rar[kleisli]{f}
      &
      X
    \end{tikzcd}
\]
  Indeed, the first square commutes because $X\trl \id = \lst_{X,M}$ and the definition of $\lst$. The middle square commutes by definition of Kleisli composition $\kseq$, and the right square commutes since \(f\) is a monoid morphism.
  We therefore get that the following diagram commutes in \(\Kl(\T)\):
  \[
    \begin{tikzcd}[scale cd=.75]
      X \times TM \times TM
      \rar[kleisli, yshift=-2pt, swap]{(X \trl \id) \trr TM}
      \rar[kleisli, yshift=3pt]{X \trl (\id \trr TM)}
      \dar{\id \times \bar{f}^{\#} \times \id}
      \ar[shiftarr={xshift=-60pt}, swap]{dd}{\id \times (\bar{f}^{\#} \times \bar{f}^{\#})}
      \ar[shiftarr={yshift=20pt}, kleisli]{rr}{X \trl \pi_{M, M}}
      &
      X \times M \times TM
      \dar{f \trr TM}
      \rar[kleisli]{(X \times M) \trl \id}
      &
      X \times M \times M
      \drar[kleisli]{X \trl\,m}
      \dar[kleisli]{f \trr \id}
      &
      \\
      X \times (TX)^{X} \times TM
      \rar[kleisli]{\ev \trr TM}
      \dar{\id \times \id \times \bar{f}^{\#}}
      &
      X \times TM
      \dar{\id \times \bar{f}^{\#}}
      \rar[kleisli]{X \trl \id}
      &
      X \times M
      \dar[kleisli]{f}
      &
      X \times M
      \dlar[kleisli]{f}
      \\
      X \times (TX)^{X} \times (TX)^{X}
      \drar[kleisli]{\id \times \kseq}
      \rar[kleisli]{\ev \trr (TX)^{X}}
      &
      X \times (TX)^{X}
      \rar[kleisli]{\ev}
      &
      X
      &
      \\
      &
      X \times (TX)^{X}
      \urar[kleisli]{\ev}
      &
      &
    \end{tikzcd}
  \]
  Here the upper right square uses that \(f\) is central.
  The curried forms of the outer paths precisely yield that \(\bar{f}^{\#}\) is a monoid homomorphism.
\end{proof}

\detailsfor{rem:affine-implies-xx-mon-fg}
\label{proof:rem:affine-implies-xx-mon-fg}
A commutative monad $\T$ is called \emph{affine} if $T 1 \cong 1$. By~\cite[Thm.~2.1]{kock-71}, this is equivalent to commutativity of the following triangle for all $X,Y\in \Set$, where $p_1$ and $p_2$ are the product projections:
\begin{equation}\label{eq:affine} \begin{tikzcd}
TX\times TY \ar[equal]{dr}[swap]{} \ar[tail]{r}{\pi_{X,Y}} & T(X\times Y) \ar[two heads]{d}{\langle Tp_1,Tp_2\rangle} \\
& TX\times TY
\end{tikzcd}
\end{equation}
For every finite set $X$, we have $(X\to X) \cong \prod_{x\in X} X$ and $(X\kleislito X)\cong \prod_{x\in X} T X$ and so repeated application of \eqref{eq:affine} yields a  commutative triangle of the following form, where $\xi_{0}\colon (X\to X)\to (X\kleislito X)$ is the monoid morphism given by $\xi_{0}(f)=f\seq \eta_X$ and $\xi = \xi_{0}^{\#}$:
\begin{equation*}
\begin{tikzcd}
(X\kleislito X) \ar[equal]{dr}[swap]{} \ar[tail]{r}{\lambda_{X, Y}} & T(X\to X) \ar[two heads]{d}{\xi_{X, Y}} \\
& (X\kleislito X)
\end{tikzcd}
\end{equation*}
Thus $\xi$ is surjective, and it is a monoid morphism because $\T$ is commutative.

\end{document}